\newtheorem{theorem}{Theorem}
\newtheorem{lemma}{Lemma}
\newtheorem{proposition}{Proposition}
\newtheorem{definition}{Definition}
\newtheorem{corollary}{Corollary}
\begin{document}
\title{When does additional information lead to longer travel time in multi-origin–destination networks?\thanks{Xujin Chen and Xinqi Jing were supported by the Chinese Academy of Sciences [Grants XDA27000000 and ZDBS-LY-7008] and National Natural Science Foundation of China [Grant No. 12331014]. Zhongzheng Tang was supported by National Natural Science Foundation of China [Grant No. 12101069].}}

\author{Xujin Chen\footnote{Academy of Mathematics and Systems Science, Chinese Academy of Sciences, Beijing 100190, China; \textsf{xchen@amss.ac.cn}} \and  
Xiaodong Hu\footnote{Academy of Mathematics and Systems Science, Chinese Academy of Sciences, Beijing 100190, China; \textsf{xdhu@amss.ac.cn}} \and
Xinqi Jing\footnote{Academy of Mathematics and Systems Science, Chinese Academy of Sciences, Beijing 100190, China; \textsf{jingxinqi@amss.ac.cn}} \and
Zhongzheng Tang\footnote{School of Mathematical Sciences, Beijing University of Posts and Telecommunications, Beijing 100876, China; \textsf{tangzhongzheng@amss.ac.cn}}
}
\date{}
\maketitle              % typeset the header of the contribution
\begin{abstract}
The Informational Braess' Paradox (IBP) illustrates a counterintuitive scenario where revelation of additional roadway segments to \emph{some} self-interested travelers leads to increased travel times for \emph{these} individuals.  IBP extends the original Braess' paradox by relaxing the assumption that all travelers have identical and complete information about the network. 
In this paper, we study the conditions under which IBP does not occur in networks with non-atomic selfish travelers and multiple origin-destination pairs. Our results completely characterize the network topologies immune to IBP, thus resolving an open question proposed by Acemoglu et al.\ \cite{ibp18}.

%\keywords{Non-atomic selfish routing \and Informational Braess' paradox \and Wardrop equilibrium \and Multi-origin–destination networks.}
\end{abstract}
\section{Introduction}\label{sec:intro}

The \emph{informational Braess' paradox} (IBP), as introduced by Acemoglu et al.\ \cite{ibp18}, illustrates a counterintuitive phenomenon in network traffic where disseminating supplementary information about available paths to \emph{certain} travelers paradoxically results in increased travel time for \emph{these} individuals. This longer travel time stems from the initial reactions of the informed travelers to the new information, and the subsequent responsive adjustments in routing by those without information extension. The IBP concept builds upon the classical Braess' paradox (BP), which demonstrates that adding an extra link to a transport network can sometimes lead to a situation where all travelers experience a weak increase in travel times as a result of their individually rational routing decisions. The ``weak increase'' in this context means that travel times for travelers between one origin-destination (OD) pair strictly increase, while travel times for all others do not decrease \cite{braess1968,chen2016network}.

Both IBP and BP reveal the complex and counterintuitive interplay between network structure, the information possessed by self-interested travelers, and the resulting traffic flow dynamics.  Though both paradoxes show similar negative effects that arise from seemingly beneficial changes, they each highlight different aspects of network dynamics and the potential unforeseen consequences when adjustments are made. (1) In BP, all travelers have access to complete information about the available routes in the network, which leads to uniform decision-making, ironically resulting in suboptimal outcomes when a new link is introduced. Conversely, IBP addresses the effects of unequal information distribution among travelers, where individual travelers act based on typically incomplete and differentiated information. In this scenario, travelers belong to different groups determined by their information types, i.e., the information sets about network edges available to them. When only one group of travelers receives additional information, their initial responses to this new data trigger a ripple effect that can ultimately impact all travelers. (2) BP requires that the introduction of an extra link causes a ``uniform'', albeit weak, increase in travel times for all travelers. On the other hand, IBP only stipulates that travelers who receive additional information experience longer travel time. It is possible that some travelers with unchanged information sets might benefit from shorter travel times. As pointed out by Acemoglu et al.\ \cite{ibp18}, in networks with a single OD pair, BP is identical with IBP with a single information type. The observation, however, cannot be extended to the multi-OD-pair case. The BP with two OD pairs, presented in \cite{chen2016network}, cannot be formulated as an IBP with two information types. Indeed, in the BP only one group of travelers alters their routes following the addition of an extra edge, yet their travel time does not increase. The lack of increased travel time for the route-changing group in BP departs from the expected outcome in an IBP framework, where travelers with additional choices face longer travel times after reacting to new information.

Since its introduction in the 1960s \cite{braess1968}, BP has been extensively studied from various perspectives. Notably, for non-atomic travelers, each of whom controls an infinitesimally small fraction of the total traffic, Milchtaich \cite{milchtaich2006network} and Chen et al. \cite{chen2016network} characterized topologies of networks that are immune to BP, thereby advancing our understanding of when and why the paradox does not manifest. In contrast, the investigation into IBP is still in its nascent stages. The body of research on IBP is considerably smaller, indicating a promising avenue for study to explore the nuances and implications of information asymmetry in network traffic dynamics. Acemoglu et al. \cite{ibp18} proved that IBP with a single OD pair does not occur if and only if the network is series of linearly independent (SLI). The result, in combination with Milchtaich's series-parallel characterization \cite{milchtaich2006network}, indicates that the IBP is a considerably more pervasive phenomenon than BP, as SLI networks form a subclass of series-parallel ones \cite{ibp18}. The conclusion aligns with the aforementioned observation \cite{ibp18} that IBP generalizes BP when restricted to single OD pair networks. However, as far as multiple OD pairs are concerned, the relation between IBP and BP becomes considerably more intricate, as highlighted in the preceding discussion. Can one expect a smaller class of networks for IBP-freeness than those, obtained in \cite{chen2016network}, for BP-freeness (as what has been done for the single OD pair case)? The answer is negative, as shown by Acemoglu et al.\ \cite{ibp18}. The authors provided a sufficient condition that precludes the occurrence of IBP in networks with multiple OD pairs, while also demonstrating that this condition is not necessary. They thereby posed the natural question of identifying a condition that is both sufficient and necessary. In this paper, we resolve this question by establishing a complete characterization of undirected networks with multiple OD pairs in which IBP never occurs for non-atomic travelers. %For simplicity, we say that these networks are {\em IBP-free} or \emph{immune to IBP}. 
The characterization might be helpful in network design to improve traffic efficiency.

\subsection{Routing Game}
Our basic  framework  is {\em selfish routing}  model \cite{czumaj2004,koutsoupias2009,roughgarden2002}. In this paradigm, travelers, acting in their own self-interest, select routes connecting their origins to their destinations within a traffic network. This network is represented by an undirected connected multigraph  $G=(V,E)$ that may have parallel edges but no loops, where $V=V(G)$ is the vertex set and $E=E(G)$ is the edge set. If no confusion arises, we identify a graph with its edge set. Given $n$ pairs of origin-destination (OD) vertices $(o_i,d_i)_{i=1}^n$ in $G$ with $o_i\ne d_i$ for all $i\in[n]$, we often call them OD pairs. There are $k$ types of \emph{non-atomic} travelers, who are individually insignificant relative to the entire system. The ``non-atomic'' attribute means that each traveler controls an infinitesimally small fraction of the total traffic. No single non-atomic traveler's actions can influence the overall system; their contribution to congestion is negligible. For every $j\in[k]$, the amount of type-$j$ travelers is often referred to as their {\em traffic rate}, and denoted as $r(j)$ or $r_j$  (we will use these notations interchangeably throughout). 
Type-$j$ travelers have a common information set $E_j\subseteq E$ and a common terminal set $\{o_{t(j)},d_{t(j)}\}$, where $t(\cdot)$ is a function from $[k]$ to $[n]$. Each type-$j$ traveler travels along a path from his origin $o_{t(j)}$ to his destination $d_{t(j)}$ in $E_j$, called an $o_{t(j)}$-$d_{t(j)}$ path, or simply his \emph{OD path}. Let $\mathcal{P}_j$ denote the set of type-$j$ travelers' OD paths.
  %Type-$j$ travelers have a common information set $E_j\subseteq E$. Each traveler wishes to travel from his origin to his destination as fast as possible, using a path (called his OD path) in his information set (we use $\mathcal{P}_j$ to denote the set of type-$j$ travelers' OD paths).  
The routing, a.k.a. traffic \emph{flow}, of type-$j$ travelers is denoted by $f^j=(f^j_P:P\in\mathcal P_j)$, %to denote the traffic flow of type-$j$ travelers, where %for all $P\in\mathcal P_j$, 
where $f^j_P\ge0$ equals the volume of the flow (the amount of  type-$j$ travelers) going through path $P\in\mathcal P_j$, and the flow volume  $\sum_{P\in\mathcal P_j}f^j_P$ equals the traffic rate $r_j$. 
The routing of all $k$ types of travelers is represented by a \emph{flow} (vector) $f=(f^1,\ldots,f^k)$. % to denote the flow of all types travelers. A feasible flow is a flow $f=(f^1,\ldots,f^k)$ such that for all $j\in [k]$, a flow of type-$j$ travelers $f^j$ satisfies $\sum_{P\in\mathcal P_j}f^j_P=r_j$.
For any edge $e\in E$, let $f_e^j=\sum_{P\in\mathcal P_j:e\in P}f^j_P$ denote the flow volume of type-$j$ travelers traversing $e$, and $f_e=\sum_{j\in [k]}f_e^j$ denote the flow volume of all travelers passing through $e$. For any path $P$ in $G$, %~(i.e.,$f_e^j=\sum_{P\in\mathcal P_j:e\in P}f^j_P$). 
 let $f_P=\sum_{j\in [k]}f_P^j$ denote the flow volume of all travelers going through $P$. 
Each edge $e\in E$ is associated with a nonnegative and nondecreasing latency function $\ell_e(\cdot)$, which takes $f_e$, the volume of flow using $e$, as the independent variable. Under $f$, the latency of a path $P$
is %We use 
$\ell_P(f)=\sum_{e\in P}\ell_e(f_e)$, the sum of its edges' latencies, and the latency of a traveler equals the latency of his OD path along which he travels. % to denote the latency of the OD path $P$ in flow $f$.  A feasible 
Every traveler is \emph{selfish} in that he is only concerned about his
own latency. Given others' routing choices, he always chooses one of his OD paths with the lowest latency. The self-interested behaviors induce an \emph{information constrained routing game}, denoted by quadruple $(G,\ell,r,\mathcal E)$, where $\ell=\{\ell_e~|~e\in E\}$ denotes the set of latency functions, and $r=(r_1,\ldots,r_k)$  and $\mathcal E=(E_1,\ldots,E_k)$ describe the traffic rates and information sets of all $k$ types of travelers, respectively.  The Nash equilibrium of the game is often referred to as an \emph{Information-Constrained Wardrop Equilibrium} ({ICWE}).
\begin{definition}[ICWE flow] \cite{ibp18}
A flow $f$ is an \emph{ICWE flow} if for every type-$j$ traveler and every $o_{t(j)}$-$d_{t(j)}$ path $P\in \mathcal{P}_j$ with $f_P^j>0$, it follows that $\ell_P(f) \le \ell_{P'}(f)$ for each $P'\in \mathcal{P}_j$.    
\end{definition}
 The game always admits an ICWE flow, which is essentially unique: if $f$ and $\hat{f}$ are both ICWE flows, then $\ell_e(f_e)=\ell_e(\hat{f}_e)$ for every $e\in E$ \cite{ibp18}. By the definition and uniqueness, given ICWE flow $f$, for each $j\in[k]$, let  $\ell_j(f)=\ell_P(f)$ with $f_P^{(j)} >0$ denote the \emph{equilibrium latency} of type-$j$ travelers.

\subsection{Information Braess' Paradox}
Given game $(G,\ell,r,\mathcal E)$, %Let $\mathcal E=\{E_1, E_2,\ldots,E_k\}$ be the set of all types of travelers' information sets. We 
we say $\tilde{\mathcal E}=\{\tilde{E}_1,\ldots,\tilde{E}_k\}$ is an information extension (of $\mathcal E$) if $E_1 \subset \tilde{E}_1$ and $E_j= \tilde{E}_j$ for $ j=2,\ldots,k$.
\begin{definition}[IBP]\cite{ibp18}
   Let $f$ and  $\tilde{f}$  be ICWE flows of  games $(G,\ell,r,\mathcal E)$ and $(G,\ell,r,\tilde{\mathcal E})$, respectively. If $\ell_1(\tilde{f})>\ell_1(f)$, then \emph{Information Braess' Paradox} (IBP) occurs, and quintuple $(G,\ell,r,\mathcal E,\tilde{\mathcal E})$ is called an IBP instance. 
\end{definition} 
The IBP instance encompasses two information constrained routing games $(G,\ell,r,\mathcal E)$ and $(G,\ell,r,\tilde{\mathcal E})$ with $f$ and $\tilde{f}$ being their ICWE flows respectively. The only difference between $\mathcal E$ and $\tilde{\mathcal E}$ is that $E_1 $ is properly contained in $ \tilde{E}_1$. By $\ell_1(\tilde{f})>\ell_1(f)$, we see that type-$1$ travelers have their information set expanded but suffer from a higher latency, while the information sets of other travelers remain unchanged. For a network $G$, if an IBP instance $(G,\ell,r,\mathcal E,\tilde{\mathcal E})$ exists, then we say $G$ is \emph{not immune to IBP}; otherwise, we call $G$ an \emph{IBP-free} network. %Otherwise, if IBP never occur in a network $G$, which means an IBP instance $(G,\ell,r,\mathcal E,\tilde{\mathcal E})$ does not exist, then we say this network is an IBP-free network.

For a network with OD pairs, we assume for convenience that each edge and each vertex of the network are contained in at least one OD path.  

\subsection{IBP-free Networks with a Single OD Pair} \label{sec:single}

Acemoglu et al.\ \cite{ibp18} established a sufficient and necessary condition for an undirected network with a single OD pair to be IBP-free (see Theorem~\ref{thm:sli}). To present their result, we need to introduce three classes of networks with a single OD pair $(o,d)$:  series-parallel networks, linearly independent networks, and series of linearly independent networks, where $\{o,d\}$ are referred to as their \emph{terminal sets}. These networks provide us with important structural properties for characterizing IBP-freeness in networks with one or more OD pairs. %There is only one OD pair in a single-commodity network so we denote it as $(o,d)$ for convenience. 

\begin{definition}[SP network]\label{def:sp}
%A network with a single OD pair is called \emph{series-parallel} (SP) if any two OD paths go through a common edge in the same direction.
A network with a single OD pair is called \emph{series-parallel} (SP) if any two OD paths never pass through an edge in opposite directions.
\end{definition}

We can use recursive operations to equivalently characterize SP networks (see, e.g., \cite{milchtaich2006network}). By connecting two networks with a single OD pair \emph{in series}, we mean joining the destination vertex of one network with the origin vertex of the other network (the origin of the first network becomes the origin of the new network, and the destination of the second network becomes the destination of the new network). By connecting two networks with a single OD pair \emph{in parallel}, we mean identifying the origin vertices of two networks to form the origin of the new network and identifying the destination vertices of two networks to form the new destination of the new network. 

\begin{figure}[ht!]
\centering
\begin{subfigure}[b]{0.3\textwidth}
\centering
\includegraphics[width=\textwidth, trim = 0cm 0cm 0cm 0cm, clip]{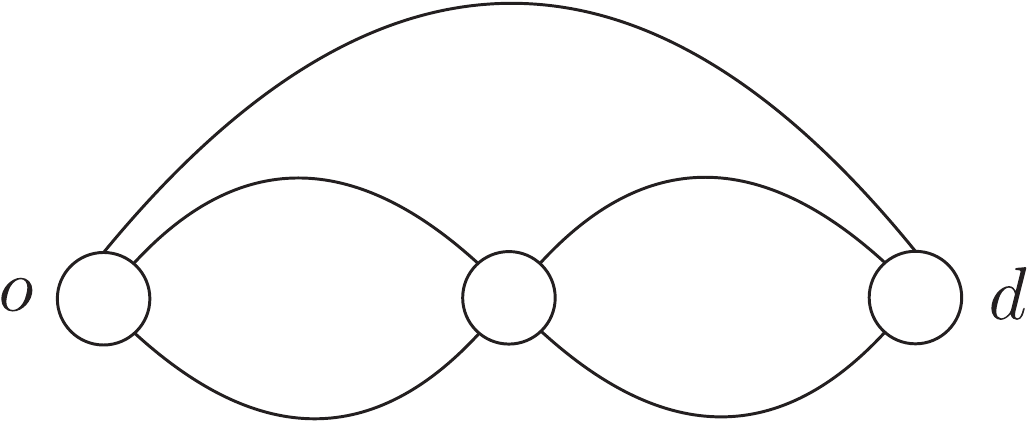}
\subcaption{\text{SP not SLI}}
\end{subfigure}
\hspace{0mm}
\begin{subfigure}[b]{0.3\textwidth}
\centering
\includegraphics[width=\textwidth, trim = 0cm 0cm 0cm 0cm, clip]{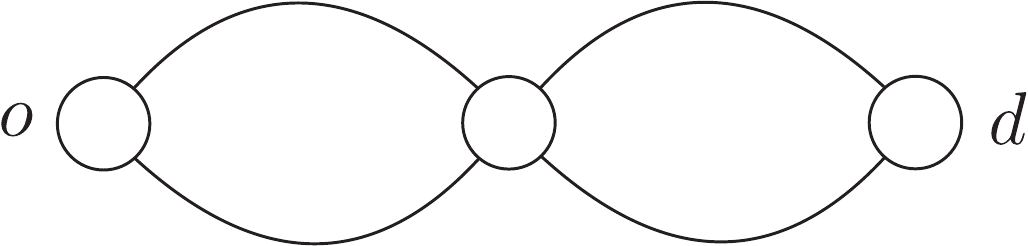}
\subcaption{\text{SLI not LI}}
\end{subfigure}
\hspace{0mm}
\begin{subfigure}[b]{0.3\textwidth}
\centering
\includegraphics[width=\textwidth, trim = 0cm 0cm 0cm 0cm, clip]{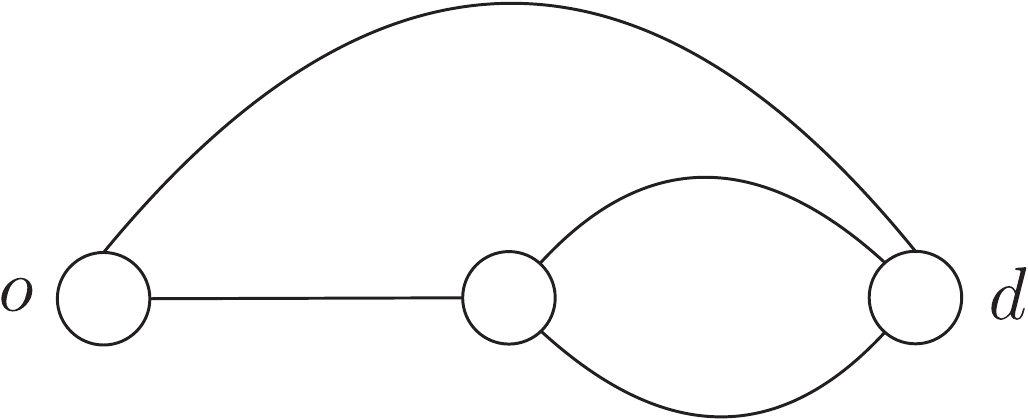}
\subcaption{\text{LI}}
\end{subfigure}
\caption{Three classes of  networks with a single OD pair}
\label{fig:sp_sli_li}
\end{figure}

%\begin{proposition}\label{prop:sp} A network with a single OD pair is a SP network if and only if 
%\begin{itemize}
  %  \item it is a single edge network;
  %  \item it is connected by two SP networks in series;
  %  \item it is connected by two SP networks in parallel.
%\end{itemize}

%\end{proposition}

A network with a single OD pair is SP  if and only if it is a single-edge network, or a connection of  two SP networks in series, or a connection of two SP networks in parallel.   
Another characteristic of SP networks is that the network in Figure~\ref{fig:thetagraph} can not be embedded in a SP network \cite{milchtaich2006network}. SP networks are immune to BP \cite{milchtaich2006network} but not immune to IBP \cite{ibp18}. An important subclass of SP networks consists of linearly independent ones, which have been shown to be IBP-free \cite{ibp18}.

\begin{definition}[LI networks]\label{def:li}
A network with a single OD pair is called \emph{linearly independent} (LI) if each OD path has an edge that does not belong to any other OD path.
\end{definition}

LI networks can also be characterized using recursive graphical operations \cite{holzman2003}. A network with a single OD pair is LI if and only if it is a single-edge network, or a connection of two LI networks in parallel, or a connection of a single edge and an LI network in series.
%\begin{proposition}
%A network with a single OD pair is a LI network if and only if
%\begin{itemize}
   % \item it is a single edge network;
   % \item it is connected by two LI networks in parallel;
   % \item it is connected by a single edge network and a LI network in series.
%\end{itemize}
%\end{proposition}
Another subclass of SP networks is obtained by connecting LI networks in series..

\begin{definition}[SLI network]\label{def:sli}
A network with a single OD pair is \emph{series of linearly independent} (SLI) if it is an LI network or it is a connection of two SLI networks in series.
\end{definition}
It is worth noting that the connection of two LI networks in series may not be an LI network (see Figure~\ref{fig:sp_sli_li}(b)), and the connection of two SLI networks in parallel may not be an SLI network (see Figure~\ref{fig:sp_sli_li}(a)). %Essentially, an SLI network is derived by connecting several LI networks in series. 
%The diagrams of SP network, SLI network and LI network are shown in Figure \ref{fig:sp_sli_li}. Particularly,
%SLI networks play a crucial role in the research of IBP. %In fact, we have the following theorem to characterize the structure of an IBP-free network with a single OD pair.

\begin{theorem}\cite{ibp18}\label{thm:sli}
A network with a single OD pair is IBP-free if and only if it is an SLI network.
\end{theorem}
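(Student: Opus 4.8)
The plan is to prove the two implications of Theorem~\ref{thm:sli} separately: that every SLI network is IBP-free, and that every network that is not SLI is not IBP-free.

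For sufficiency, the backbone will be a \emph{series-composition lemma}: if single-OD networks $G_1$ and $G_2$ are both IBP-free, then so is their series connection $G_1\circ G_2$. First I would observe that, with a single OD pair, every traveler of every type must pass through the vertex $w$ identifying the destination of $G_1$ with the origin of $G_2$, and that every $o$-$d$ path contained in an information set $E_j$ factors uniquely into an $o$-$w$ path in $E_j\cap E(G_1)$ followed by a $w$-$d$ path in $E_j\cap E(G_2)$. Since latencies are additive and $E(G_1)\cap E(G_2)=\emptyset$, a flow $f$ is an ICWE on $G_1\circ G_2$ if and only if its restrictions to $G_1$ and to $G_2$ are ICWEs of the induced games carrying the per-type traffic that $f$ routes through each side, and $\ell_1(f)=\ell_1(f|_{G_1})+\ell_1(f|_{G_2})$. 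An information extension of type $1$ on $G_1\circ G_2$ restricts to a (possibly trivial) information extension on each side, so IBP-freeness of $G_1$ and of $G_2$ — together with uniqueness of equilibrium latencies when a restriction is trivial — forces both summands not to increase, hence $\ell_1$ not to increase. Because an SLI network is built from LI networks by iterated series connections and a single edge is trivially IBP-free, it then remains to show that every LI network is IBP-free, which I would prove by induction on the recursive construction of LI networks (single edge; parallel connection of two LI networks; series connection of a single edge with an LI network): the first case is trivial and the third is the series-composition lemma, so the substance is the parallel case.

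For the parallel case, let $G=G_1\parallel G_2$ with $G_1,G_2$ LI (hence IBP-free by induction). Now every $o$-$d$ path lies wholly in $G_1$ or wholly in $G_2$, the restriction of an ICWE to $G_i$ is again an ICWE of the induced game, but the traffic split between the two sides is \emph{endogenous}. Assuming without loss of generality that the newly revealed edge lies in $G_1$, the goal is to show that in the new equilibrium type $1$'s traffic on $G_2$ weakly decreases while its usable paths on $G_2$ are unchanged, and that within $G_1$ the freshly reachable paths can only lower type $1$'s $G_1$-latency component, so that summing the two components yields $\ell_1(\tilde f)\le\ell_1(f)$. Turning this into a proof is the hard part: I must control how the \emph{other} types' traffic redistributes between $G_1$ and $G_2$ in reaction, and exploit the defining property of LI networks — each $o$-$d$ path owns an edge carrying only its own flow — presumably through a decomposition of the flow difference $\tilde f-f$ into $o$-to-$d$ path pairs in the acyclic orientation of the ambient SP network together with the ICWE inequalities. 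I expect this monotonicity argument for LI networks to be the principal obstacle of the whole proof.

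For necessity I would argue the contrapositive and build an IBP instance in any non-SLI network from two ingredients. The first is an \emph{embedding lemma}: if a single-OD network $H$ admits an IBP instance and $H$ is a topological minor of $G$ with terminals mapped to terminals, then $G$ admits an IBP instance — obtained by putting $H$'s latency functions on one representative edge of each realizing path, zero latency on the remaining edges of those paths, prohibitively large latency on all other edges, and pulling $H$'s information sets back along the realizing paths; the one thing to verify is that every path used in the resulting ICWE is the image of an $H$-path, so that $G$-equilibria restrict to $H$-equilibria with matching type-$1$ latencies. The second ingredient is structural: a single-OD network fails to be SLI precisely when (a) it is not series-parallel, or (b) it is series-parallel but one of its series-indecomposable blocks is not LI. In case (a) Milchtaich's criterion supplies an embedded copy of the network of Figure~\ref{fig:thetagraph}, and the classical Braess configuration on that network (unit rate, affine latencies, costless middle edge) already is an IBP instance with one information type once type $1$ initially lacks the middle edge. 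In case (b) the offending block — series-indecomposable, series-parallel and not LI — must be a parallel connection one of whose branches is not LI, and therefore contains an embedded copy of the minimal such network $\Lambda$: two parallel edges in series with two parallel edges, in parallel with one more edge; since $\Lambda$ is series-parallel it is BP-free, so an IBP instance on $\Lambda$ must use at least two information types. I would design it so that in the first game type $1$ and the fixed-information type $2$ are balanced across the two parallel branches of $\Lambda$, and so that the single extra edge revealed to type $1$ opens a branch onto which type $1$ shifts, whereupon type $2$ re-optimizes and the resulting congestion on that branch — type $1$'s new equilibrium latency — is arranged to exceed type $1$'s original latency; finally, since $\Lambda$ embeds inside a series-indecomposable block of $G$, the contrapositive of the series-composition lemma pushes this instance up to $G$. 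Constructing this two-type instance on $\Lambda$ is the delicate point of the necessity direction; the rest is bookkeeping.
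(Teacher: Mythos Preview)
This theorem is not proved in the present paper: it is quoted from Acemoglu et al.\ \cite{ibp18} and used as a black box throughout (see the citation attached to the theorem statement). There is therefore no proof here to compare your proposal against.

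On the proposal itself: the overall architecture is reasonable and you have correctly located the two places where the real work lies, but you have not done that work. In the sufficiency direction, the parallel case for LI networks is left as a heuristic (``type~$1$'s traffic on $G_2$ weakly decreases while its usable paths there are unchanged'') that is not obviously true: revealing an edge in $G_1$ can redirect \emph{other} types between $G_1$ and $G_2$, which changes congestion on both sides, and the LI property must be exploited more sharply than the bare ``each path owns a private edge'' observation to rule out a net increase in type~$1$'s latency. In the necessity direction, everything hinges on actually exhibiting a two-type IBP instance on your network $\Lambda$, which you have not done; producing concrete latencies, rates and information sets so that the post-extension equilibrium latency strictly exceeds the pre-extension one is the substance of the argument, not bookkeeping. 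Both gaps are precisely the ones you flag yourself, so what you have is an honest outline rather than a proof.
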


A vertex in a connected graph is a \emph{cut vertex} if its removal from the graph leaves the graph unconnected. A \emph{block} of a graph is a maximal connected subgraph without any cut vertex. SP networks (and thus their special cases, SLI networks and LI networks)  possess a structure known as {\em block-chain}. It means that an SP network can be constructed by connecting several SP blocks in series, where the OD pair of each block is uniquely determined by the OD pair of the whole SP network~\cite{chen2016network}. %Since SLI networks and LI networks are both SP networks, 
In particular, \emph{an SLI network is a connection of several LI blocks in series}. See Figure~\ref{fig:ibp-sli-ex} for an illustration, where the OD pair of block $B_h$ is $(v_h,v_{h+1})$, $h\in[4]$. If a terminal (either the origin or the destination) of an LI block is not a terminal of the entire SLI network, then it is a cut vertex of the entire SLI network. For example, $v_2,v_3,v_4$ in Figure~\ref{fig:ibp-sli-ex} are those terminals.
\begin{figure}[ht!]
\centering
\includegraphics[width=0.7\textwidth]{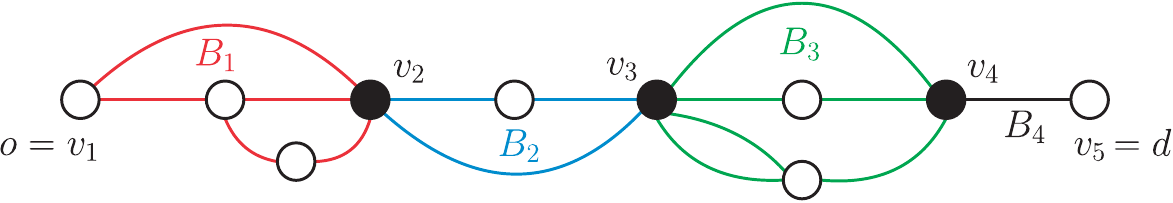}\\
\caption{An SLI network, which is the chain of blocks $B_1,B_2,B_3,B_4$.}
\label{fig:ibp-sli-ex}
\end{figure}

\subsection{IBP-free Networks with Multiple OD Pairs}\label{sec:multi}

In this paper, we completely characterize the structure of undirected networks with multiple OD pairs that are immune to IBP (see Theorem \ref{thm:main_0}), strengthening the sufficient condition by \cite{ibp18} (see Theorem~\ref{thm:coin}) and solving an open question proposed in \cite{ibp18}.

When network $G$ has $n\, (\ge2)$ OD pairs $(o_i,d_i)_{i=1}^n$,   we call $G$ a network with \emph{multiple} OD pairs.
%A network $G$ with $n$ OD pairs is called {\em a multi-commodity network} if $n\ge2$. 
For every $i\in[n]$, we reserve symbol $G_i$ to denote the subnetwork with terminal set $\{o_i,d_i\}$, or equivalently with a single OD pair $(o_i,d_i)$, which consists of all   $o_i$-$d_i$ paths in $G$. To characterize IBP-free networks, in view of Theorem~\ref{thm:sli}, it suffices to consider the scenario in which each  $G_i$ is an SLI network, and therefore its blocks are all LI. To investigate the mutual influence of routing in two subnetworks $G_i$ and $G_j$, the first step is to study the structural properties of their common part $G_i\cap G_j$. 

\begin{proposition}\cite{chen2016network}\label{pro:block}
Suppose that $G_i$ and $G_j$ are SLI subnetworks of $G$. Let $B_i$ and $B_j$ be blocks of $G_i$ and $G_j$, respectively. If $E(B_i)\cap E(B_j)\neq \emptyset$, then $B_i=B_j$ is a common block of $G_i$ and $G_j$. 
\end{proposition}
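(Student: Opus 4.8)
The plan is to reduce the proposition to a single edge-inclusion and then prove that inclusion using $2$-connectivity together with the parallel-composition structure of LI blocks. Since $G_i$ and $G_j$ are SLI, the block-chain structure lets me regard each as a series connection of LI blocks whose terminals are consecutive cut vertices; write $\{s_i,t_i\}$ for the terminal set of $B_i$ and $\{s_j,t_j\}$ for that of $B_j$, and fix an edge $e\in E(B_i)\cap E(B_j)$. The reduction uses the elementary fact that a $2$-connected subgraph of a graph --- and likewise a single edge --- lies in exactly one block of that graph (a $2$-connected subgraph meeting two blocks would be separated by the cut vertex that separates those blocks in the ambient graph). Thus it suffices to prove the claim $(\star)$: every edge of $B_i$ lies on an $o_j$-$d_j$ path, i.e.\ $E(B_i)\subseteq E(G_j)$. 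Granting $(\star)$, the subgraph $B_i$, now viewed inside $G_j$, is contained in a single block of $G_j$; that block contains $e$ and is therefore $B_j$, so $E(B_i)\subseteq E(B_j)$. The same reasoning applied with $i$ and $j$ interchanged gives $E(B_j)\subseteq E(G_i)$ and hence $E(B_j)\subseteq E(B_i)$, so $E(B_i)=E(B_j)$ and $B_i=B_j$, a common block of $G_i$ and $G_j$. (If $B_i$ is a single bridge-edge, $(\star)$ is trivial and the argument still applies.)

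To prove $(\star)$ I would transport the property of $e$ to an arbitrary edge $e'\in E(B_i)$ by routing through a cycle. Since $B_i$ is a $2$-connected LI network, it is a parallel composition of LI networks sharing only the terminals $s_i$ and $t_i$, so every cycle of $B_i$ through $e$ and $e'$ consists of an $s_i$-$t_i$ subpath in one of those parallel parts and an $s_i$-$t_i$ subpath in another; fix such a cycle $C$. Taking an $o_j$-$d_j$ path $Q$ that uses $e$, I would replace $e$ in $Q$ by the complementary arc of $C$ (the one containing $e'$) to get an $o_j$-$d_j$ walk through $e'$, and then argue that this walk still contains an $o_j$-$d_j$ path through $e'$; this final extraction is where the series-parallel hypothesis on $G_j$ is meant to enter, the absence of a theta subdivision in $G_j$ together with the restricted shape of $C$ preventing the pruned walk from bypassing $e'$.

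The hard part will be precisely this pruning step. An $o_j$-$d_j$ path that uses an edge of the block $B_i$ need not respect the cut vertices $s_i,t_i$ of $G_i$: it may traverse edges of $G$ outside $G_i$ and so enter and leave $B_i$ through interior vertices of the block, and together with the detour arc of $C$ this can create shortcuts that route around $e'$. Controlling these shortcuts appears to require the series-parallel structure of $B_i$ within $G_i$ and the series-parallel structure of $G_j$ to be used in tandem rather than either one separately; I expect this to be the technical heart of the proof, with the reduction of the first paragraph being routine.
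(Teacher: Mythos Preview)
The paper does not prove this proposition---it is cited from \cite{chen2016network} and used without proof---so there is no in-paper argument to compare against. Your reduction in the first paragraph is correct. The attack on $(\star)$, however, contains a wrong structural claim and an unresolved step. The assertion that every cycle of $B_i$ through $e$ and $e'$ decomposes into two $s_i$-$t_i$ paths in different parallel parts is false: take $B_i$ to be a single edge $st$ in parallel with a path $s\,a\,t$ in which the $sa$ link is doubled; this is a $2$-connected LI block, yet the only cycle through the two parallel $sa$-edges is the digon on $\{s,a\}$, which misses $t$ entirely. More importantly, your pruning justification via ``no theta in $G_j$'' does not bite as stated, because the cycle $C$ lives in $B_i\subseteq G_i$ and you do not yet know $C\subseteq G_j$; the theta exclusion for $G_j$ therefore gives no control over how $Q$ meets $C$, and the extracted simple path can legitimately bypass $e'$.

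The idea you are missing is to work with blocks of the ambient graph $G$ rather than of $G_i$ or $G_j$ separately. Since $G_j$ is by definition the union of all $o_j$-$d_j$ paths in $G$, it coincides with the union of those blocks of $G$ lying on the $o_j$-$d_j$ path in the block-cut tree of $G$ (in a $2$-connected block, any edge lies on a path between any two prescribed vertices, so every edge of such a block is on some $o_j$-$d_j$ path). Hence every block of $G_j$ is already a block of $G$; the same holds for $G_i$. Then $B_i$ and $B_j$ are both blocks of $G$ sharing the edge $e$, and distinct blocks of a graph never share an edge, so $B_i=B_j$. The SLI hypothesis plays no role.
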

%By Proposition~\ref{pro:block}, we denote such block $B=B_i=B_j$ as {\em a common block} of $G_i$ and $G_j$. 

It is worth noting that a common block of $G_i$ and $G_j$ may have different terminal sets when restricted to the specific subnetworks. See Figure~\ref{fig:ibp-C3} for an illustration, where $G_1=G_2$ is a block with $\{o_1,d_1\}\ne\{o_2,d_2\}$.

\begin{definition}\cite{chen2016network}\label{def:coin}
Suppose that $G_i$ and $G_j$ are SLI. A block $B$ of $G$ is called {\em a coincident block} of $G_i$ and $G_j$ if it is a common block of $G_i$ and $G_j$, and the terminal set of $B$ in $G_i$ is the same as that in $G_j$. 
\end{definition}

In Definition~\ref{def:coin}, the orders of terminals (which is the origin and which is the
destination) of $B$ in $G_i$ and $G_j$ do not have to be the same. Combining the SLI condition for IBP-free networks with a single OD pair (see Theorem~\ref{thm:sli}) and the coincident condition for characterizing BP-freeness \cite{chen2016network}, Acemoglu et al.\ \cite{ibp18} established the following sufficient condition for excluding IBP in networks with multiple OD pairs.
\begin{theorem}\cite{ibp18}\label{thm:coin}
Let $G$ to be a network with $n$ OD pairs. For every $i\in [n]$, let $G_i$ be the subnetwork of $G$ with terminal set $\{o_i,d_i\}$. Then $G$ is IBP-free  \emph{if} $G$ satisfies the following conditions:
\begin{itemize}
\item SLI condition: for every $ i\in [n]$, $G_i$ is SLI.
\item Coincident condition: for any $i,j\in [n]$ with $i\neq j$, either $E(G_i)\cap E(G_j)=\emptyset$ or the graph induced by $E(G_i)\cap E(G_j)$ consists of all coincident blocks of $G_i$ and $G_j$. 
\end{itemize}

\end{theorem}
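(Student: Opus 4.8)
The plan is to reduce a hypothetical IBP instance on $G$ to a family of IBP instances, one on each LI block of $G_{t(1)}$, and to show that the type-$1$ equilibrium latency in $G$ is exactly the sum of its values on those blocks; since an LI network is SLI and hence IBP-free by Theorem~\ref{thm:sli}, this will force the type-$1$ latency not to increase. Concretely, fix an arbitrary quintuple $(G,\ell,r,\mathcal E,\tilde{\mathcal E})$ meeting the two conditions, with $\tilde E_1\supsetneq E_1$ and $\tilde E_j=E_j$ for $j\ge 2$, let $f$ and $\tilde f$ be the ICWE flows of $(G,\ell,r,\mathcal E)$ and $(G,\ell,r,\tilde{\mathcal E})$, and assume all traffic rates are positive. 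Since $G_{t(1)}$ is SLI, it is a chain of LI blocks $B_1,\dots,B_m$, where $B_h$ has terminal set $\{v_h,v_{h+1}\}$, $\{v_1,v_{m+1}\}=\{o_{t(1)},d_{t(1)}\}$, and $v_2,\dots,v_m$ are cut vertices of $G_{t(1)}$. Every path $P$ in $\mathcal{P}_1$ lies in $G_{t(1)}$, hence is a concatenation of a $v_h$-$v_{h+1}$ subpath $P\cap B_h$ inside $B_h$ for $h=1,\dots,m$, so $\ell_P(f)=\sum_{h=1}^m\ell_{P\cap B_h}(f)$.

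The structural core is to identify, for each $h$, the set $J_h:=\{j\in[k]:\text{some path in }\mathcal{P}_j\text{ uses an edge of }B_h\}$ of traveler types that occupy $B_h$. Using Proposition~\ref{pro:block} together with the coincident condition, I would argue that if $j\in J_h$ and $t(j)=i$ then $B_h$ is a coincident block of $G_i$ and $G_{t(1)}$, hence a block in the block-chain of $G_i$ with the same unordered terminal pair $\{v_h,v_{h+1}\}$; consequently every $o_i$-$d_i$ path, and in particular every path in $\mathcal{P}_j$, crosses $B_h$ between $v_h$ and $v_{h+1}$, the order being immaterial since $G$ is undirected. This yields two facts: (a) $1\in J_h$ for every $h$, and for each $j\in J_h$ the entire rate $r_j$ is routed through $B_h$ between $v_h$ and $v_{h+1}$; (b) $J_h$ is unchanged when $\mathcal E$ is replaced by $\tilde{\mathcal E}$, because enlarging $E_1$ neither stops type-$1$ travelers from crossing every block nor affects any other type. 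It then follows that the restriction of $f$ to $E(B_h)$ is a feasible flow for the single-OD routing game $\Gamma_h:=(B_h,\ell|_{B_h},(r_j)_{j\in J_h},(E_j\cap B_h)_{j\in J_h})$ with OD pair $\{v_h,v_{h+1}\}$, and is in fact an ICWE flow of $\Gamma_h$: if some type in $\Gamma_h$ could strictly lower its latency inside $B_h$ by rerouting within $E_j\cap B_h$, then concatenating with its unchanged segments in the other blocks would strictly lower its latency in $G$, contradicting that $f$ is an ICWE flow. By the essential uniqueness of ICWE flows, the type-$1$ equilibrium latency $\lambda_h$ of $\Gamma_h$ is well defined and $\ell_1(f)=\sum_{h=1}^m\lambda_h$; running the same argument on $\tilde f$ yields $\ell_1(\tilde f)=\sum_{h=1}^m\tilde\lambda_h$, where $\tilde\lambda_h$ is the type-$1$ equilibrium latency of $\tilde\Gamma_h:=(B_h,\ell|_{B_h},(r_j)_{j\in J_h},(\tilde E_j\cap B_h)_{j\in J_h})$.

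To finish, observe that $\Gamma_h$ and $\tilde\Gamma_h$ differ only in that type-$1$'s information set is enlarged from $E_1\cap B_h$ to $\tilde E_1\cap B_h$, possibly not strictly --- in which case the two games are identical and $\tilde\lambda_h=\lambda_h$. Since $B_h$ is LI, hence SLI, Theorem~\ref{thm:sli} says it is IBP-free, so $\tilde\lambda_h\le\lambda_h$ for every $h$; summing over $h$ gives $\ell_1(\tilde f)\le\ell_1(f)$, so $(G,\ell,r,\mathcal E,\tilde{\mathcal E})$ is not an IBP instance. As the quintuple was arbitrary, $G$ is IBP-free. The step I expect to be the main obstacle is the structural one that delivers (a) and (b) and the identification of $f|_{B_h}$ with an ICWE flow of $\Gamma_h$: one has to exploit the coincident condition to guarantee that every traveler meeting $B_h$ traverses it between exactly the two cut vertices $v_h$ and $v_{h+1}$, so that latencies decompose cleanly along the chain and this decomposition is stable under the information extension. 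Once that picture is in place, the remaining pieces --- the chain-wise latency decomposition, the restriction and regluing of ICWE flows, and the appeal to Theorem~\ref{thm:sli} --- are routine. (This establishes only sufficiency; Acemoglu et al.\ separately exhibit an IBP-free network that violates the coincident condition, so the condition is not necessary.)
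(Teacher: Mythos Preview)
The paper does not give its own proof of Theorem~\ref{thm:coin}; it is quoted as a result of Acemoglu et al.\ \cite{ibp18}. That said, your argument is correct and is precisely the proof one would assemble from the paper's own machinery: Proposition~\ref{pro:series} and Lemma~\ref{lem:subgame} formalize exactly the block-by-block decomposition of the equilibrium latency that you carry out, and the coincident condition is what collapses each local game $\Gamma_h$ to a single-OD-pair instance so that Theorem~\ref{thm:sli} applies. Your version differs only cosmetically in that you decompose $G_{t(1)}$ rather than all of $G$, which is harmless since only type-$1$'s latency needs to be tracked.
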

The coincident condition is not necessary, as shown by the cycle network $G$ with three vertices and two OD pairs depicted in Figure \ref{fig:ibp-C3}. Although $G=G_1=G_2$ is a non-coincident block, Acemoglu et al.\  \cite{ibp18} proved that $G$ is IBP-free.
 
\begin{figure}[ht!]
\centering
\includegraphics[width=0.26\textwidth]{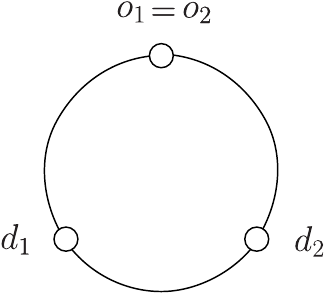}\\
\caption{A cycle network (non-coincident block) that is IBP-free}
\label{fig:ibp-C3}
\end{figure}

The next theorem represents our main contribution.
Building on the results of \cite{ibp18}, we establish the following necessary and sufficient condition, showing that cycle networks are \emph{the sole exception} not covered by the coincident condition.

\begin{theorem}[Main Result]\label{thm:main_0} 
Let $G$ to be a network with $n$ OD pairs. For every $i\in [n]$, let $G_i=(V_i,E_i)$ be the subnetwork with terminal set $\{o_i,d_i\}$. Then $G$ is IBP-free \emph{if and only if} $G$ satisfies the following conditions:
\begin{itemize}
\item SLI condition: for every $ i \in [n]$, $G_i$ is SLI.
\item Common block condition: for any $i,j\in [n]$ with $i\neq j$, either $E(G_i)\cap E(G_j)=\emptyset$ or the graph induced by $E(G_i)\cap E(G_j)$ consists of all common blocks of $G_i$ and $G_j$, where each common block is either a coincident block or a cycle.
\end{itemize}
\end{theorem}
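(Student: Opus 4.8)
The plan is to prove the two directions separately, with the ``if'' direction largely following from known results and the ``only if'' direction being the main new work.

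\medskip

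\noindent\textbf{Sufficiency (``if'').} Assume $G$ satisfies the SLI condition and the common block condition. The strategy is to reduce to the sufficient condition of Theorem~\ref{thm:coin} by handling the new cycle-block case. First, I would observe that an IBP instance, if it exists, can be localized: since each $G_i$ is a chain of LI blocks (by the block-chain structure recalled after Theorem~\ref{thm:sli}), equilibrium latencies decompose additively over blocks, and the information extension affects type-$1$ travelers whose OD path crosses a fixed sequence of blocks of $G_{t(1)}$. A change in $\ell_1$ must therefore come from a change in the equilibrium latency within some single block $B$ through which type-$1$ travelers pass. If $B$ is a coincident block shared with other $G_j$'s, the argument of \cite{ibp18} (Theorem~\ref{thm:coin}) applies verbatim. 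The new case is when $B$ is a cycle shared non-coincidentally with some $G_j$; here I would invoke (a blocks-level version of) the fact that the three-vertex cycle network is IBP-free \cite{ibp18}, after arguing that a cycle block with two OD pairs routed on it, possibly with differing terminal pairs, behaves exactly like that elementary network for the purpose of equilibrium latencies. The key lemma to isolate and prove here is: \emph{in a cycle block, expanding one traveler type's information set (which in a cycle means at most giving access to the ``other'' direction around the cycle) cannot increase that type's equilibrium latency}, regardless of how many other types use the cycle and with which endpoints.

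\medskip

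\noindent\textbf{Necessity (``only if'').} Suppose $G$ is IBP-free. That each $G_i$ must be SLI is immediate from Theorem~\ref{thm:sli}: if some $G_i$ were not SLI, we could take all travelers to be a single type with terminal set $\{o_i,d_i\}$ and information set $E(G_i)$, realize the single-OD-pair IBP inside $G_i$, and extend all latency functions on $E\setminus E(G_i)$ to be prohibitively large so the rest of $G$ is irrelevant. For the common block condition, the contrapositive: suppose some pair $G_i,G_j$ shares edges but the induced graph on $E(G_i)\cap E(G_j)$ is \emph{not} a union of common blocks each of which is coincident or a cycle. By Proposition~\ref{pro:block}, any shared edge forces $E(B_i)\cap E(B_j)\neq\emptyset \Rightarrow B_i=B_j$, so the intersection is automatically a union of common blocks; hence the failure must be that some common block $B$ is \emph{neither} coincident \emph{nor} a cycle — i.e.\ $B$ is a non-coincident LI block that is not a single cycle. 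The task is then to construct an explicit IBP instance exploiting such a $B$. I would build on the BP construction of \cite{chen2016network} for non-coincident blocks: take two types, one for $(o_i,d_i)$ and one for $(o_j,d_j)$, routed essentially on $B$, choose latency functions (affine suffice) reproducing the Braess-type interference, then convert the ``route-changing but no-latency-increase'' behavior of BP into a genuine latency increase by restricting type-$1$'s initial information set $E_1$ to exclude one edge of $B$ and letting $\tilde E_1$ restore it, so that type-$1$'s forced initial routing is worse once type-$j$ reacts. Latency functions on the rest of $G$ are again made large enough to confine the action to $B$.

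\medskip

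\noindent\textbf{Main obstacle.} I expect the hard part to be the necessity direction's case analysis on the structure of a non-coincident, non-cycle LI block $B$, and the corresponding IBP construction. A non-coincident common block can attach to $G_i$ and $G_j$ with genuinely different terminal pairs among its (up to four) boundary vertices, and the recursive LI structure (parallel compositions and series attachment of single edges, per the characterization after Definition~\ref{def:li}) admits several shapes; I would need to show that in \emph{every} such shape that is not a single cycle, there is enough ``room'' — essentially an embedded $\Theta$-like or asymmetric sub-configuration — to seat a working IBP instance, while simultaneously verifying that a cycle block genuinely has no such room (this is the reason cycles are the lone exception, and it dovetails with the sufficiency lemma above). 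Carefully delimiting which blocks are cycles versus which merely ``look symmetric'' but still admit the paradox, and making the latency-function choices robust to the presence of additional traveler types routed through $B$, is where the real technical weight lies; I would likely first prove a clean classification of two-terminal-pair LI blocks and then treat IBP-construction and IBP-freeness uniformly against that classification.
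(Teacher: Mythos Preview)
Your proposal has the difficulty exactly backwards, and the sufficiency direction contains a genuine gap.

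\medskip

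\textbf{Sufficiency.} You plan to dispose of cycle blocks by invoking ``(a blocks-level version of) the fact that the three-vertex cycle network is IBP-free \cite{ibp18}'' and arguing that a general cycle block ``behaves exactly like that elementary network.'' This does not go through. A common block $B$ that is a cycle may be shared by \emph{many} subnetworks $G_{i_1},\dots,G_{i_m}$, each with its own terminal pair on $B$, so the local game on $B$ can involve an arbitrary number $n\ge 2$ of OD pairs on an arbitrarily long cycle. The result in \cite{ibp18} covers only the single $3$-vertex, $2$-OD-pair cycle of Figure~\ref{fig:ibp-C3}; it does not extend to general cycles by any symmetry or reduction argument, and in fact the paper explicitly notes that a prior claimed proof of IBP-freeness for general cycles was flawed. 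Establishing Theorem~\ref{thm:cycle_main} (a cycle with $n$ OD pairs is IBP-free) is the technical heart of the paper: it requires reducing to one traveler type per OD pair, showing each type fully switches paths, proving the key inequality $\ell_{\alpha_i}<\tilde\ell_{\alpha_i}$ for all $i$, and then a long inductive case analysis via ``symmetric'' and ``almost symmetric'' cycle structures after normalizing all rates to $1$. Your one-line lemma ``expanding one type's information on a cycle cannot increase its latency'' is precisely the hard theorem, not a stepping stone to it.

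\medskip

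\textbf{Necessity.} Your plan here is viable in spirit but more laborious than what the paper does. Rather than classifying non-coincident, non-cycle LI blocks and building an IBP instance for each shape, the paper identifies a single four-edge, three-vertex network $F_1$ admitting an explicit IBP instance, and proves a purely graph-theoretic lemma: any $2$-connected non-cycle block with two distinct terminal pairs contains $F_1$ as a nice embedding (Lemmas~\ref{lem:3vertices}--\ref{lem:cycle}). Combined with Lemma~\ref{lem:oper}, this finishes necessity without any structural case analysis on LI blocks. Your approach of adapting the BP construction from \cite{chen2016network} could work, but you would have to redo the conversion from BP to IBP carefully (recall the paper notes that the two-OD BP example of \cite{chen2016network} is \emph{not} automatically an IBP instance), and you would be fighting the block structure directly rather than reducing to one small witness.
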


Our technical contributions primarily lie in the following two aspects: (1) We identified IBP instances for non-coincident blocks that are not cycles. (2) We proved that cycle networks are IBP-free. The analysis for both aspects requires a thorough understanding of the interaction between network structures, selfish routing games, and the mechanisms of IBP. It involves disentangling and reducing the complex interweaving of intricate structures and routing dynamics. 

While  Roman and Turrini \cite{roman2019multi} asserted to have proven the IBP-freeness of cycle networks,  their proof contained irremediable flaws. Specifically, Proposition 3 in \cite{roman2019multi}, intended to mirror Lemma 1(b) of \cite{ibp18}, claimed %showed 
that in a cycle network, after the information extension, there must exist a type of travelers whose latency does not increase. However, the authors \cite{roman2019multi} did not provide a complete proof for the general case. 
%The most crucial step in their proof relies on  Claim 3, which, similar to the key Lemma 3 of \cite{ibp18}, is intended to demonstrate the relationship in latency variation between two types of paths. However, its proof is flawed, and in fact, this claim is invalid. We present a counterexample to the claim in Appendix~\ref{apx:a}. 
Furthermore, the proof of their main theorem (Theorem 3 in \cite{roman2019multi}) is flawed, as detailed in Appendix~\ref{apx:a}. The issues remain unresolved in Roman's thesis \cite{roman2021routing}. Given these shortcomings, we present a novel approach to prove the IBP-freeness of cycle networks. The proofs, which constitute the most technical part of our paper, are elaborated in Section~\ref{sec:cycle} and Appendix~\ref{app:cycle}. Our method is completely different from that of Roman and Turrini, offering a complete analysis of the problem.

\subsection{Related Work} 
The most related work for IBP-free networks, which has been discussed in Subsections \ref{sec:single} and~\ref{sec:multi}, includes the SLI characterization by \cite{ibp18} for the case of a single OD pair, and the sufficient conditions of \cite{ibp18,roman2019multi} for the case of multiple OD pairs.  

The study of the counterintuitive phenomenon in routing games was started by Braess in 1968~\cite{braess1968}. Milchtaich \cite{milchtaich2006network} introduced a series-parallel characterization for excluding BP in undirected networks with a single OD pair. Chen et al.\ \cite{chen2016network} extended this solution by providing a complete characterization for all undirected and directed networks with multiple OD pairs that are immune to BP, incorporating the series-parallel and coincident conditions.
BP appears not only in traffic networks, but also in communication networks \cite{orda1993competitive,korilis1997achieving,kelly1998rate,low1999optimization}, mechanical systems and electrical circuits \cite{cohen1991paradoxical,cohen1997congestion}. 
In addition, BP serves as the motivation for a network design problem: determining which edges should be removed from a given network to achieve the optimal flow at equilibrium. Roughgarden \cite{roughgarden2006severity} provided optimal inapproximability results and proposed approximation algorithms for tackling this network design problem. 

Traveler heterogeneity has garnered significant research attention. Milchtaich \cite{milchtaich1996congestion} studied the situation that different travelers may have different latency functions. Cole et al.\  \cite{cole2018does}  focused on the routing games where diversity emerges from the trade-off between two criteria and characterized the topology of networks for which diversity is never harmful. Meir and
Parkes \cite{meir2015congestion} proposed a routing game model in which travelers face uncertainty regarding the paths chosen by others. Sekar et al.\ \cite{sekar2019uncertainty}  considered the selfish routing game in which travelers may overestimate or underestimate their traffic latencies. Wu et al. \cite{wu2021value} studied the equilibrium route choices and traffic latencies within a heterogeneous information system, considering an uncertain state that impacts the latencies of edges in the network.

%Studies on routing games often examine the inefficiency of equilibrium through a concept known as {\em Price of Anarchy} (PoA). PoA is the worst-case ratio between the objective value of an equilibrium flow and that of an optimal flow. The upper bounds of PoA are concerned by many researchers \cite{koutsoupias1999worst,roughgarden2002,correa2004selfish,correa2005inefficiency,friedman2004genericity}.

\paragraph{\bf Organization.} The rest of the paper is organized as follows:  Section~\ref{sec:pre} introduces additional notations and presents preliminaries. Sections \ref{sec:2net} is dedicated to proving the ``only if'' part of our characterization for IBP-free networks, which asserts that under the SLI condition, non-coincident blocks are either cycles or admit IBP.  Section \ref{sec:cycle} establishes the ``if'' part of our characterization,  showing that all cycle networks with multiple OD pairs are IBP-free.   
Section \ref{sec:conclude} concludes the paper with remarks on possible directions of future research. Due to space limitations, some proofs are omitted from the main body and provided in the Appendix.

\section{Preliminaries}\label{sec:pre}
%\subsection{Network Structure}

Given an information constrained routing game $(G,\ell,r,\mathcal E)$ that is IBP-free, we can decompose the game into several local games based on the block-chain structure of $G$. By Theorem~\ref{thm:sli}, we assume that network $G$ satisfies the SLI condition in Theorem~\ref{thm:main_0}.  Thus each subnetwork $G_i$ processes a block-chain structure (recall Figure~\ref{fig:ibp-sli-ex}). Furthermore, using Proposition~\ref{pro:block}, we can decompose the network $G$ into a number of LI blocks $B_1,B_2,\ldots,B_l$, each of which is a common block of several $G_i$'s. For ease of presentation, if $B_s \subseteq G_i$, i.e., $B_s$ is a block of $G_i$, we denote the OD pair of $B_s$ in the block-chain of $G_i$ as $(o_i, d_i)$. In this way, we may consider a local game $(B_s,\ell_{B_s},r_{B_s},\mathcal E_{B_s})$ restricted to $B_s$, where $\ell_{B_s}$ is the restriction of $\ell$ to $B_s$, $\mathcal  E_{B_s}$ is the restriction of $\mathcal E$ to $B_s$, and the traffic rates are \emph{consistent} with those in $G$, i.e., 
\begin{center}
    $r_{B_s}(j)=r_j$ if $B_s \subseteq G_{t(j)}$, and  $r_{B_s}(j)=0$ otherwise.
\end{center} Note that although the real participants of the game are only travelers whose OD paths pass some edges in $B_s$, i.e., type-$j$ travelers with $ G_{t(j)}\supseteq B_s$, for notational convenience, we consider the other travelers (if any) as dummy participants of the game with traffic rate 0 for which we do not need to specify OD pairs or information sets.  Let $f_s$ be an ICWE flow of $(B_s,\ell_{B_s},r_{B_s},\mathcal E_{B_s})$ and $\ell_j(f_s)$ be the ICWE latency of type-$j$ travelers in network $B_s$; if $r_{B_s}(j)=0$, then we set $\ell_j(f_s)=0$. Let $f$ be an ICWE flow of $(G,\ell,r,\mathcal E)$. It is easy to see that the equilibrium latency of a traveler in $G$ is the sum of his equilibrium latency in all local games.
\begin{proposition}\label{pro:series}
For every $j\in [k]$, $\ell_j(f)=\sum_{s=1}^{l}\ell_j(f_s)$.
\end{proposition}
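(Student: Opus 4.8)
The plan is to show that the restriction of the global ICWE flow $f$ to each block $B_s$ is itself an ICWE flow of the local game $(B_s,\ell_{B_s},r_{B_s},\mathcal E_{B_s})$, and then to add up path latencies block by block. The structural fact driving everything is the block-chain form of the SLI network $G_{t(j)}$: its blocks, say those $B_s$ with $B_s\subseteq G_{t(j)}$, are connected in series through cut vertices lying on every $o_{t(j)}$-$d_{t(j)}$ path, so every such path (in particular every $P\in\mathcal P_j$, since $\mathcal P_j$ consists only of $o_{t(j)}$-$d_{t(j)}$ paths, which are contained in $G_{t(j)}$) meets each block $B_s\subseteq G_{t(j)}$ in a single subpath $P\cap B_s$ joining the terminals of $B_s$ in the chain, and decomposes uniquely as the concatenation $P=\bigcup_{s:\,B_s\subseteq G_{t(j)}}(P\cap B_s)$. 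Conversely, replacing the piece $P\cap B_s$ by any other path $Q'$ joining those two terminals inside $E_j\cap E(B_s)$ again yields a path in $\mathcal P_j$, because distinct blocks meet only at cut vertices (so the new concatenation is a simple path) and every edge used lies in $E_j$.

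First I would check feasibility of the restriction of $f$ to $B_s$, call it $f|_{B_s}$. For a type $j$ with $B_s\subseteq G_{t(j)}$, the decomposition above shows $f^j$ routes exactly $r_j=r_{B_s}(j)$ units across $B_s$ along block-paths; for a type $j$ with $B_s\not\subseteq G_{t(j)}$, Proposition~\ref{pro:block} (applied to $B_s$ viewed as a block of one of the $G_i$'s it belongs to, against the blocks of $G_{t(j)}$) gives $E(B_s)\cap E(G_{t(j)})=\emptyset$, so $f^j$ uses no edge of $B_s$, matching $r_{B_s}(j)=0$. Next I would check the equilibrium condition: if $Q=P\cap B_s$ carries positive type-$j$ flow in $f|_{B_s}$ and $Q'$ is any competing block-path of type $j$ in $B_s$, build $P'\in\mathcal P_j$ from $P$ by swapping $Q$ for $Q'$; since $P$ and $P'$ agree outside $B_s$ and $f_e=(f|_{B_s})_e$ on $E(B_s)$, the global inequality $\ell_P(f)\le\ell_{P'}(f)$ collapses to $\ell_Q(f|_{B_s})\le\ell_{Q'}(f|_{B_s})$. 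Hence $f|_{B_s}$ is an ICWE flow of the local game, so by the essential uniqueness of ICWE the edge latencies of $f_s$ and $f|_{B_s}$ coincide on $E(B_s)$; therefore all block-path latencies coincide, and since the equilibrium latency of each type is the minimum of its block-path latencies we get $\ell_j(f_s)=\ell_j(f|_{B_s})$ for every $j$ with $r_{B_s}(j)>0$, and $\ell_j(f_s)=0=\ell_j(f|_{B_s})$ otherwise.

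To conclude, fix any $P\in\mathcal P_j$ with $f^j_P>0$. Then $\ell_j(f)=\ell_P(f)=\sum_{e\in P}\ell_e(f_e)=\sum_{s:\,B_s\subseteq G_{t(j)}}\sum_{e\in P\cap B_s}\ell_e(f_e)$, and each inner sum equals $\ell_{P\cap B_s}(f|_{B_s})$, which in turn equals $\ell_j(f|_{B_s})=\ell_j(f_s)$ because $P\cap B_s$ carries positive type-$j$ flow in $f|_{B_s}$ and is thus a minimum-latency block-path for type $j$ there. Appending the vanishing terms $\ell_j(f_s)=0$ for $B_s\not\subseteq G_{t(j)}$ gives $\ell_j(f)=\sum_{s=1}^{l}\ell_j(f_s)$.

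I do not expect a serious obstacle: the one place where genuine structure is used is the claim that $f^j$ avoids every block not belonging to $G_{t(j)}$, which is precisely Proposition~\ref{pro:block}; the remainder is bookkeeping, namely making the path decomposition and the path swap rigorous (simplicity of the concatenated paths, membership in the restricted information set $E_j\cap E(B_s)$) and invoking the essential uniqueness of ICWE to pass from the restricted equilibrium $f|_{B_s}$ to the given local equilibrium $f_s$. This is just the standard ``series decomposition'' of a Wardrop-type equilibrium adapted to the information-constrained setting.
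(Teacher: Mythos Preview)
Your proposal is correct and follows essentially the same approach as the paper: restrict the global ICWE $f$ to each block, verify that the restriction is an ICWE of the local game, invoke essential uniqueness of ICWE to identify its latencies with those of $f_s$, and sum along the block-chain of $G_{t(j)}$, appending the zero terms for blocks outside $G_{t(j)}$. The paper merely asserts the restriction is an ICWE (``it is clear''), whereas you spell out the path-swap verification and the use of Proposition~\ref{pro:block} to show $f^j$ avoids blocks not in $G_{t(j)}$; this extra detail is fine and does not constitute a different route.
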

\begin{proof}
Let $f_{B_s}$ denote the restriction of $f$ on $B_s$. For every $B_s\subseteq G_{t(j)}$, it is clear that $f_{B_s}$ is an ICWE flow of $(B_s,\ell_{B_s},r_{B_s},\mathcal E_{B_s})$ and  $\ell_j(f_{B_s})=\ell_j(f_s)$ by the uniqueness of ICWE. Because $G_{t(j)}$ is the connection of its blocks in series, we have $\ell_j(f)=\sum_{B_s\subseteq G_{t(j)}}\ell_j(f_{B_s})=\sum_{B_s\subseteq G_{t(j)}}\ell_j(f_s)=\sum_{s=1}^{l}\ell_j(f_s)$, where the last equality is implied by the setting that
 $\ell_j(f_s)=0$ whenever $B_s\nsubseteq G_{t(j)}$.
\end{proof}

%By Proposition~\ref{pro:series}, we have the following lemma. 
\begin{lemma}\label{lem:subgame}
Let $G$ be a network with multiple OD pairs that satisfies the SLI condition. Let  $B_1,B_2,\ldots,B_l$ be the blocks of $G$, each of which is an  LI network with the OD pairs determined by $G$ and $G$'s OD pairs. Then $G$ is an IBP-free network if and only if  $B_s$ is an IBP-free network for every $s\in[l]$.
\end{lemma}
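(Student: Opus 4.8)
The plan is to establish the equivalence by exploiting the block-chain structure of $G$ and the additivity of equilibrium latencies recorded in Proposition~\ref{pro:series}. First I would prove the ``only if'' direction by contraposition: suppose some block $B_s$ is not IBP-free, so there is an IBP instance $(B_s,\ell_{B_s},r_{B_s},\mathcal E_{B_s},\tilde{\mathcal E}_{B_s})$ in which type-$1$ travelers have their information set strictly expanded (within $B_s$) and suffer strictly larger equilibrium latency. I would lift this to an IBP instance on all of $G$: keep the latency functions of $G$ unchanged, keep the original traffic rates $r$, and use the information extension $\mathcal E\to\tilde{\mathcal E}$ that enlarges $E_1$ by exactly the new edges of $\tilde E_1\cap B_s$ (so in every block other than $B_s$ the information sets are untouched). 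By Proposition~\ref{pro:series}, $\ell_1(f)=\sum_{t}\ell_1(f_t)$ and $\ell_1(\tilde f)=\sum_t\ell_1(\tilde f_t)$; since the local games on blocks $B_t$ with $t\neq s$ are completely unchanged (same network, latency, rates, information), their ICWE latencies are unchanged by the essential uniqueness of ICWE, so the two sums differ exactly by $\ell_1(\tilde f_s)-\ell_1(f_s)>0$. Hence $\ell_1(\tilde f)>\ell_1(f)$ and $G$ is not IBP-free.

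For the ``if'' direction, again by contraposition, suppose $G$ is not IBP-free, witnessed by an IBP instance $(G,\ell,r,\mathcal E,\tilde{\mathcal E})$ with $\ell_1(\tilde f)>\ell_1(f)$. Here the extra edges of $\tilde E_1\setminus E_1$ could, a priori, be spread across several blocks. Using Proposition~\ref{pro:block} (which guarantees the blocks of the various $G_i$ coincide, so the block decomposition of $G$ is well defined) and Proposition~\ref{pro:series}, I would write $\ell_1(\tilde f)-\ell_1(f)=\sum_{s=1}^{l}\bigl(\ell_1(\tilde f_s)-\ell_1(f_s)\bigr)>0$, so there must exist an index $s$ with $\ell_1(\tilde f_s)>\ell_1(f_s)$. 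It remains to turn the local games on $B_s$ into a genuine IBP instance: the latency functions, rates, and all information sets $E_j$ for $j\ge 2$ restricted to $B_s$ stay fixed, and for type-$1$ we compare $E_1\cap B_s$ with $\tilde E_1\cap B_s$. The only subtlety is that these two might be equal (if none of the new edges lie in $B_s$) — but then the local game on $B_s$ is the same before and after, forcing $\ell_1(\tilde f_s)=\ell_1(f_s)$, contradicting the choice of $s$; so $E_1\cap B_s\subsetneq\tilde E_1\cap B_s$, giving a valid information extension on $B_s$ and hence an IBP instance on the LI block $B_s$.

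The step I expect to require the most care is the bookkeeping that the restriction of an ICWE flow of $G$ to a block $B_s$ is itself an ICWE flow of the local game $(B_s,\ell_{B_s},r_{B_s},\mathcal E_{B_s})$ — this is exactly what is used inside the proof of Proposition~\ref{pro:series} and must be invoked for both $\mathcal E$ and $\tilde{\mathcal E}$. One must check that a type-$j$ traveler routed through $B_s$ in an equilibrium of $G$ only uses, within $B_s$, a minimum-latency OD path of the local game, which follows because his global OD path is a concatenation of locally-optimal pieces (the block-chain is a series composition, so any unilateral deviation inside $B_s$ is available as a global deviation). Beyond that, the argument is essentially formal: the additivity of latencies over the series blocks, the essential uniqueness of ICWE (to transfer ``unchanged local game'' to ``unchanged local equilibrium latency''), and the observation that the information extension localizes cleanly because the only modified information set belongs to type-$1$ and the blocks share no edges.
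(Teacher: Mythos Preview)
Your approach is the same as the paper's — both directions rest on Proposition~\ref{pro:series} and essential uniqueness of ICWE — and your ``if'' direction is actually more careful than the paper, since you explicitly verify that the restricted information extension on the chosen block $B_s$ is proper (the paper just asserts the restricted quintuple is an IBP instance).

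The ``only if'' direction has the right idea but the lifting is not well specified: you write ``keep the latency functions of $G$ unchanged, keep the original traffic rates $r$'', yet there is no ambient game on $G$ to keep unchanged — you are handed only an IBP instance on $B_s$ and must \emph{construct} the data on $G$. The paper does this explicitly by setting $\ell_e\equiv 0$ on every edge outside $B_{s}$, taking rates consistent with $r_{B_s}$, and giving each relevant type the full information set $E(B_u)$ on every other block $B_u$; with zero latencies elsewhere, Proposition~\ref{pro:series} collapses to $\ell_1(f)=\ell_1(f_{s})$ and $\ell_1(\tilde f)=\ell_1(\tilde f_{s})$, so the strict inequality transfers immediately without any cancellation argument. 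Your cancellation argument would also work once you fix any concrete extension of the latencies, rates, and information sets to the other blocks, but as written the lift is undefined.
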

\begin{proof} To see the sufficiency, suppose that $(G,\ell,r,\mathcal E,\tilde{\mathcal E})$ is an IBP instance. Let $f$ be an ICWE flow of $(G,\ell,r,\mathcal E)$ and $\tilde{f}$ be an ICWE flow of $(G,\ell,r,\tilde{\mathcal E})$. By Proposition~\ref{pro:series}, we have $\ell_1(f)=\sum_{s=1}^{l}\ell_1(f_s)$ and $\ell_1(\tilde{f})=\sum_{s=1}^{l}\ell_1(\tilde{f_s})$. Since $\ell_1(f)<\ell_1(\tilde{f})$, there must be an $s\in [l]$ such that $\ell_1(f_s)<\ell_1(\tilde{f_s})$. It follows that $(B_s,\ell_{B_s},r_{B_s},\mathcal E_{B_s},\tilde{\mathcal E}_{B_s})$ is an IBP instance. %, which contradicts with that $B_s$ is an IBP-free network.

To see the necessity, suppose that  $(B_{s^*},\ell_{B_{s^*}},r_{B_{s^*}},$ $\mathcal E_{B_{s^*}},\tilde{\mathcal E}_{B_{s^*}})$ with $\ell_1(f_{s^*})<\ell_1(\tilde{f}_{s^*})$ is an IBP instance for some ${s^*}\in[l]$. We construct an IBP instance $(G,\ell,r,\mathcal E,\tilde{\mathcal E})$ as follows. Let $\ell_e\equiv 0$ for every edge $e\in E(G)-E( B_{s^*})$. For type-$j$ travelers in $G$, let their traffic rates be consistent with $r_{B_{s^*}}$, let their information set before (resp.\ after) information expansion when restricted to $B_{s^*}$ be as in $\mathcal E_{B_{s^*}}$ (resp.\ $\tilde{\mathcal E}_{B_{s^*}}$), and let their information sets   (before and after information expansion) when restricted to $B_u$ (with $u\ne s^*$ and $B_u\subseteq G_{t(j)}$) be $E(B_u)$. Let $f$ and $\tilde f$ be ICWE flows of $(G,\ell,r,\mathcal E)$ and  $(G,\ell,r,\tilde{\mathcal E})$, respectively. By Proposition~\ref{pro:series}, $\ell_1(f)=\sum_{s=1}^{l}\ell_1(f_s)=\ell_1(f_{s^*})$ and $\ell_1(\tilde{f})=\sum_{s=1}^{l}\ell_1(\tilde{f_s})=\ell_1(\tilde{f}_{s^*})$.  Since $\ell_1(f_{s^*})<\ell_1(\tilde{f}_{s^*})$, we have $\ell_1(f)<\ell_1(\tilde{f})$, showing that $(G,\ell,r,\mathcal E,\tilde{\mathcal E})$ is an IBP instance. %, which contradicts with that $G$ is an IBP-free network
\end{proof}
 
%\subsection{Network Embedding}
%In this section, 
%Next, we will consider the situation that a network can be converted to another by some graphical operations. 
For an edge $e \in E(G)$, we use $G/e$ to denote the network obtained from $G$ by contracting $e$ into a new vertex $v_e$. If \emph{exactly one} of the two vertices incident to $e$ is a terminal vertex of $G$, then the corresponding terminal of $G/e$ is the new vertex $v_e$. We use $G\setminus e$ to denote the network obtained from $G$ by removing $e$. 

%\begin{definition} [embedding]
%A graph $H$ is embedded in the graph $G$ if $H$ can be obtained from $G$ by recursive edge removals and contractions.
%\end{definition}

\begin{definition} [Nice embedding]
A network $H$ is nicely embedded in the network $G$ if $H$ can be obtained from $G$ by recursive edge removals and contractions under the condition that no terminal set of the networks in the process is merged by the contractions.
\end{definition}

%If network $H$ can be nicely embedded in the network $G$ and $G$ is an IBP-free network, then we can see $H$ is also an IBP-free network.

\begin{lemma}\label{lem:oper}
If network $H$ can be nicely embedded in the network $G$ and $H$ is not immune to IBP, then $G$ is not immune to IBP.
\end{lemma}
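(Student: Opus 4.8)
The plan is to prove the lemma by reducing the IBP instance for $H$ back along a single edge operation at a time, showing that one elementary operation — removing an edge or contracting an edge (the contraction not merging any terminal set) — preserves the property of admitting an IBP instance. Since a nice embedding is by definition a finite sequence of such operations, an induction on the number of operations then yields the full claim: if $G = H_0, H_1, \dots, H_m = H$ is the sequence with $H_{t+1}$ obtained from $H_t$ by one operation, and $H_m$ is not immune to IBP, then walking the implication backwards gives that $H_0 = G$ is not immune to IBP. So it suffices to prove the two one-step statements:

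\begin{itemize}
\item[(R)] If $H = G \setminus e$ is not immune to IBP, then $G$ is not immune to IBP.
\item[(C)] If $H = G / e$, where the contraction of $e$ does not merge any terminal set, and $H$ is not immune to IBP, then $G$ is not immune to IBP.
\end{itemize}

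For (R): take an IBP instance $(H, \ell^H, r, \mathcal E^H, \tilde{\mathcal E}^H)$ with ICWE flows $f, \tilde f$ and $\ell_1(\tilde f) > \ell_1(f)$. Build the instance on $G$ by keeping all latency functions of edges in $H$, and assigning to the extra edge $e$ a latency function that is constant and so large that no equilibrium flow in either game would route any traffic through a path using $e$ — concretely, larger than the total equilibrium latency of any OD path in the $H$-instance (a finite quantity, since there are finitely many simple OD paths and latencies are finite at the relevant arguments). For information sets, simply leave $e$ out of every $E_j$ and every $\tilde E_j$ — equivalently, keep $\mathcal E^H, \tilde{\mathcal E}^H$ unchanged; then the path sets $\mathcal P_j$ are literally the same as in $H$, so $f$ and $\tilde f$ remain feasible ICWE flows of the $G$-games with the same equilibrium latencies, and the IBP inequality is inherited verbatim. (If one prefers to put $e$ into the information sets, one still argues that the prohibitive latency forces $f_e = \tilde f_e = 0$ at equilibrium and the latencies are unchanged; either route works.) One should double-check the standing assumption that every edge and vertex of $G$ lies on some OD path — $e$ does, trivially, since it is an edge of the cycle/SLI network under consideration, and in any case this hypothesis is only a convenience and does not affect the IBP inequality.

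For (C): take an IBP instance on $H = G/e$ with ICWE flows $f, \tilde f$. Lift it to $G$ by giving $e$ the identically-zero latency function $\ell_e \equiv 0$ and keeping all other edges' latencies; for the information sets, add $e$ to $E_j$ (and $\tilde E_j$) exactly when the vertex $v_e$ of $H$ lies on some $o_{t(j)}$-$d_{t(j)}$ path that is in the information set of type-$j$ travelers in $H$ — roughly, "add $e$ wherever its endpoints' role is already available." The point is that, because $\ell_e \equiv 0$, routing through $e$ costs nothing, and the contraction not merging any terminal set guarantees that the OD pairs of $G$ and of $H$ correspond (a terminal $v_e$ of $H$ pulls apart into the correct terminal of $G$), so there is a latency-preserving bijection between OD paths of type-$j$ travelers in the lifted $G$-game and OD paths in the $H$-game: a path in $H$ using $v_e$ corresponds to a path in $G$ that traverses $e$, and $\ell_e \equiv 0$ makes its latency equal. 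Hence the pushforwards of $f$ and $\tilde f$ are ICWE flows of the two $G$-games with $\ell_1(\tilde f) > \ell_1(f)$ preserved, so $G$ is not immune to IBP. This construction is exactly the mechanism already used in the proof of Lemma~\ref{lem:subgame} (zero latencies on the "outside" edges, information sets restricted/extended in the obvious way), so the bookkeeping is routine.

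The main obstacle is purely bookkeeping rather than conceptual: one must verify carefully that in case (C) the correspondence between OD paths is genuinely a latency-preserving bijection, which is precisely where the hypothesis "no terminal set is merged by the contractions" is used — if a contraction merged $o_i$ with $d_i$ one would destroy the OD pair, and if it merged two distinct terminals one could create spurious OD paths; the niceness condition rules both out. A secondary subtlety is confirming that adding $e$ (with zero latency) to some information sets does not accidentally create an IBP-type improvement or destroy feasibility — but zero latency means the equilibrium conditions are unaffected, so this is immediate. One should also note that the edge whose information status changes in the lifted instance is still the type-$1$ edge set being extended (we never touch $E_1$'s extension status, only pad the already-fixed sets $E_2, \dots, E_k$ identically before and after), so $\tilde{\mathcal E}$ remains a valid information extension of $\mathcal E$ in the sense of the definition. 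With these checks in place, the induction closes and the lemma follows.
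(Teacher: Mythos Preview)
Your proposal is correct and follows essentially the same approach as the paper: reduce to a single deletion or contraction, then lift the IBP instance by assigning an appropriate latency to $e$ and adjusting the information sets. The paper's version is slightly cleaner in two places: for the deletion case it simply sets $\ell_e\equiv 0$ (since $e$ is in no information set, its latency is irrelevant---no need for a prohibitively large constant), and for the contraction case it adds $e$ to \emph{every} information set uniformly rather than only to those whose paths pass through $v_e$; both of your more elaborate choices also work, but the paper's uniform treatment avoids the extra bookkeeping. Your ``bijection'' claim in case~(C) is slightly imprecise (a simple path in $G$ visiting both endpoints of $e$ without using $e$ contracts to a non-simple walk in $H$), but since latencies are nonnegative such a walk shortcut to a simple $H$-path of no greater latency, so the ICWE verification still goes through; the paper sidesteps this by simply declaring the conclusion ``easy to see.''
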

\begin{proof}
We only need to consider the situation that $H$ is $G\setminus e$ or $G/e$ for some edge $e\in E(G)$. 
We use $\ell$ to denote the latency function set in $H$ for convenience. 
If  $(H,\ell,r,\mathcal E,\tilde{\mathcal E})$ is an IBP instance, then we set $\mathcal E'=\mathcal E$ and $\tilde{\mathcal E}'=\tilde{\mathcal E}$ when $H=G\setminus e$, and set   $\mathcal E'=\{E\cup\{e\}~|~E\in\mathcal E\}$ and $\tilde{\mathcal E}'=\{E\cup\{e\}~|~E\in\tilde{\mathcal E}\}$ when $H=G/e$.
Setting $e$'s latency function by $\ell_e\equiv 0$, it is easy to see that $(G,\ell,r,\mathcal E',\tilde{\mathcal E}')$ is an IBP instance. 
\end{proof}

\section{Necessity Proof}\label{sec:2net}
In this section, we establish the necessity of the conditions stated in Theorem~\ref{thm:main_0}. It is evident from Theorem~\ref{thm:sli} that the SLI condition is a prerequisite for IBP-freeness.  It remains to prove the necessity of the common block condition.  To accomplish this,  we only need to consider an \emph{IBP-free network} $G$ with two OD pairs such that $G_1$ and $G_2$ are SLI. Furthermore, by Proposition~\ref{pro:block} and Lemma~\ref{lem:subgame}, we may assume that $G=G_1=G_2$ is the non-coincident common block of $G_1$ and $G_2$. % $G=(V(G),E(G),(o_i,d_i)_{i=1}^2,\ell)$, for which we have $\{o_1,d_1\}\ne\{o_2,d_2\}$. 
%{We may assume that both subnetworks $G_1$ and $G_2$ are SLI networks.}
Our objective is to show that \emph{$G$ is a cycle}, thereby confirming the necessity of the condition.

\subsection{An IBP Instance}
In this subsection, we show that an IBP can occur in the following network $F_1$ with three vertices, two OD pairs $(o_i,d_i)_{i=1}^2$, and latency functions $\ell$, as depicted in Figure~\ref{fig:2odibp}(a). Specifically, $V(F_1)=\{u,v,w\}=\{o_1,d_1,o_2,d_2\}$, $E(F_1)=\{e_1,e_2,e_3,e_4\}$, and latency functions are written beside the corresponding edges, e.g., $\ell_{e_1}(x)=0$, $\ell_{e_2}(x)=4x$, $\ell_{e_3}(x)=x+22$ and $\ell_{e_4}(x)=10+2x$. %As $\{o_1,d_1\}\ne\{o_2,d_2\}$, we have $\{o_1,d_1,o_2,d_2\}=\{u,v,w\}$.

\captionsetup[subfigure]{labelformat=empty}
\begin{figure}[h]
\centering
\begin{subfigure}[b]{0.3\textwidth}
\centering
\includegraphics[width=\textwidth, trim = 0cm 0cm 0cm 0cm, clip]{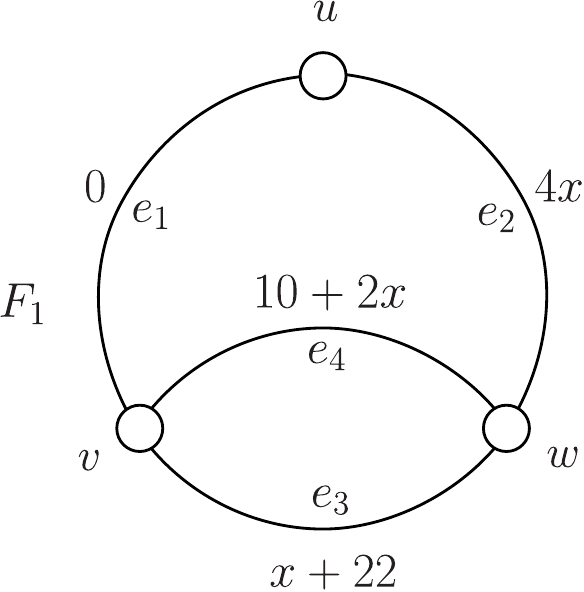}
\subcaption{\text{~(a)~network $(F_1,\ell)$}\\ ~}
\end{subfigure}
\hspace{1mm}
\begin{subfigure}[b]{0.27\textwidth}
\centering
\includegraphics[width=\textwidth, trim = 0cm 0cm 0cm 0cm, clip]{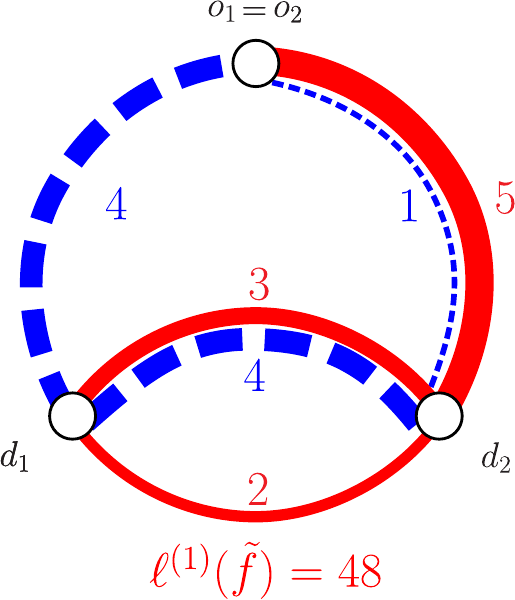}
\subcaption{\text{~\hspace{2mm}~(b)~ICWE flow $\tilde{f}$}\\ \text{$~\hspace{6.4mm}~$ with $o_2=o_1$}}
\end{subfigure}
\hspace{1mm}
\begin{subfigure}[b]{0.3\textwidth}
\centering
\includegraphics[width=\textwidth, trim = 0cm 0cm 0cm 0cm, clip]{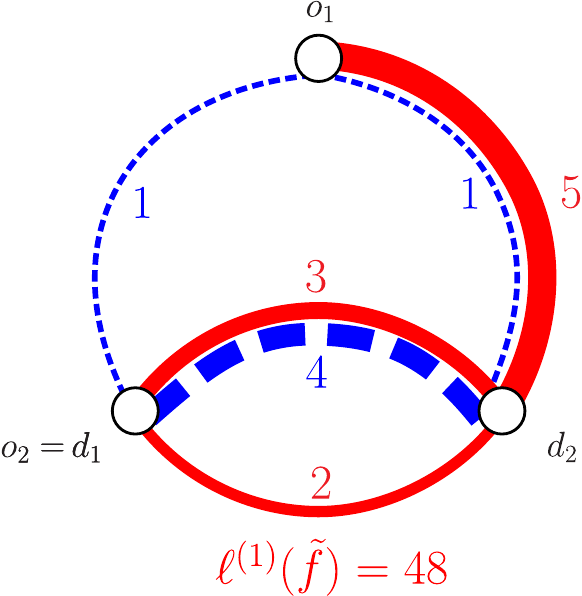}
\subcaption{\text{~\hspace{6mm}~(c)~ICWE flow $\tilde{f}$}\\ \text{$~\hspace{10mm}~$ with $o_2=d_1$}}
\end{subfigure}
\caption{A network with two OD pairs that admits IBP}
\label{fig:2odibp}
\end{figure}

Suppose that $k=2$ and that for $j=1,2$, type-$j$ travelers have OD pair $(o_j,d_j)$ and traffic rate $r_j=5$.  By the symmetry between OD pair indices and that between the vertices in the same OD pair, we may assume $(o_1,d_1)=(u,v)$, $d_2=w$ and $o_2\in\{o_1,d_1\}$. The information set  of type-1 travelers changes from $E_1=\{e_2,e_3\}$ to larger $\tilde E_1=\{e_2,e_3,e_4\}$, while the information set of type-2 travelers is always $E_2=\{e_1,e_2,e_4\}$. Let $f$ and $\tilde f$ denote the ICWE flows before and after type-1 travelers enlarge their information set, respectively.
\begin{itemize}
\item Clearly, $f^1_{e_2e_3}=5$,   $f^2_{e_1e_4} =5$ when $o_2=o_1$ and $f^2_{e_4} =5$ when $o_2=d_1$, from which we deduce that under $f$ type-1 travelers experience latency $\ell^1(f)=47$.
\item After type-1 travelers known about $e_4$, some of them will use this ``new'' edge instead of $e_3$. Easy computation shows that these travelers amount to $\tilde f^1_{e_4}=3$, and the remaining  $\tilde f^1_{e_3}=2$ amount of type-1 travelers stick to $e_3$; accordingly, type-2 travelers will split their routes:  when $o_2=o_1$ (resp.\ $o_2=d_1$), amount 4 sticking to  their left path $e_1e_4$ (resp.\ lower path $e_4$) and amount $1$  changing to their right path $e_2$ (resp.\ upper path $e_1e_2$). See Figure~\ref{fig:2odibp}(b) and (c) for an illustration. It is easy to check that $\tilde f$ is indeed an ICWE flow under which type-1 travelers experience a higher latency $\ell^1(\tilde f)=48$.
\end{itemize}
In conclusion, $(F_1,\ell,r,\{E_1,E_2\},\{\tilde E_1,E_2\})$ constitutes an IBP instance, saying that $F_1$ is not immune to IBP. Recalling Lemma~\ref{lem:oper} and the IBP-freeness of $G$, we have the following corollary.
\begin{corollary}\label{cor:f1}
    $F_1$ cannot be nicely embedded in $G$.
\end{corollary}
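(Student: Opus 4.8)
The plan is to obtain Corollary~\ref{cor:f1} as an immediate consequence of Lemma~\ref{lem:oper} together with the IBP instance just exhibited. The explicit quintuple $(F_1,\ell,r,\{E_1,E_2\},\{\tilde E_1,E_2\})$ witnesses that $F_1$ is \emph{not} immune to IBP; this is the only substantive input, and it has already been verified by computing the ICWE flows $f$ and $\tilde f$ and checking $\ell_1(\tilde f)>\ell_1(f)$ (namely $48>47$). So the remaining argument is purely a contrapositive bookkeeping step.

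Concretely, I would argue by contradiction. Suppose $F_1$ can be nicely embedded in $G$. Recall the standing assumption of this section that $G$ is IBP-free (indeed $G=G_1=G_2$ is the non-coincident common block of an IBP-free network, so $G$ itself is IBP-free). By Lemma~\ref{lem:oper}, whenever a network that is not immune to IBP is nicely embedded in another network, the latter is also not immune to IBP; applying this with $H=F_1$ shows that $G$ is not immune to IBP, contradicting the assumption. Hence no nice embedding of $F_1$ into $G$ exists, which is the assertion of the corollary.

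The only point that deserves a word of care is the handling of the OD-pair data and the latency functions under the embedding, but this is already absorbed into Lemma~\ref{lem:oper}: ``nicely embedded'' preserves terminal sets along the sequence of edge removals and contractions, so a nice embedding of $F_1$ into $G$ carries the two terminal sets $\{o_1,d_1\}$ and $\{o_2,d_2\}$ of $F_1$ onto those of $G$, and the reassignment $\ell_e\equiv 0$ on removed/contracted edges, together with the corresponding adjustment of the information sets, is exactly the construction used in the proof of Lemma~\ref{lem:oper}. Because of the symmetries invoked when fixing the labels $(o_1,d_1)=(u,v)$, $d_2=w$, $o_2\in\{o_1,d_1\}$, the statement is understood up to these relabelings. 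There is no genuine obstacle here; all the work has been front-loaded into constructing the IBP instance on $F_1$ and into Lemma~\ref{lem:oper}, and the corollary follows in a single line.
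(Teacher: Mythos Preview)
Your argument is correct and matches the paper's approach exactly: the corollary is stated as an immediate consequence of the IBP instance on $F_1$, Lemma~\ref{lem:oper}, and the standing assumption that $G$ is IBP-free, which is precisely the contrapositive you wrote out.
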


\subsection{Graphical Structures}

Since $G=G_1=G_2$ is a non-coincident block of $G_1$ and $G_2$, we have $\{o_1,d_1\}\ne\{o_2,d_2\}$. A connected graph is {\em $2$-connected} if it contains more than $2$ vertices and has no cut vertices. Since $G$ is a non-coincident block, it contains at least $3$ vertices, and therefore is 2-connected. The reader is referred to \cite{diestel2018graph} for properties of 2-connected graphs. Setting $S=\{o_1,d_1,o_2,d_2\}$, we have $|S|\in\{3,4\}$ because $G$ is non-coincident. %We denote $V_1=\{o_1,d_1\}$, $V_2=\{o_2,d_2\}$, $S=V_1\cup V_2$. If $V_1=V_2$, $G$ is a coincident block. We just need to consider another situation that $G$ is not a coincident block.
Note that if $F_1$ can be nicely embedded in $G$, then there is a mapping $b:F_1 \rightarrow G$ such that $b^{-1}(S)=\{u,v,w\}$ and $|b^{-1}(\{o_i,d_i\})|=2$ for $i=1,2$.
  
Recall that $G_1$ and $G_2$ are assumed to be SLI networks. So  the theta graph $F_2$ in Figure~\ref{fig:thetagraph} can neither be embedded in $G_1$ nor in $G_2$ \cite{milchtaich2006network}.
 
\begin{figure}[ht!]
\centering
\includegraphics[width=0.6\textwidth]{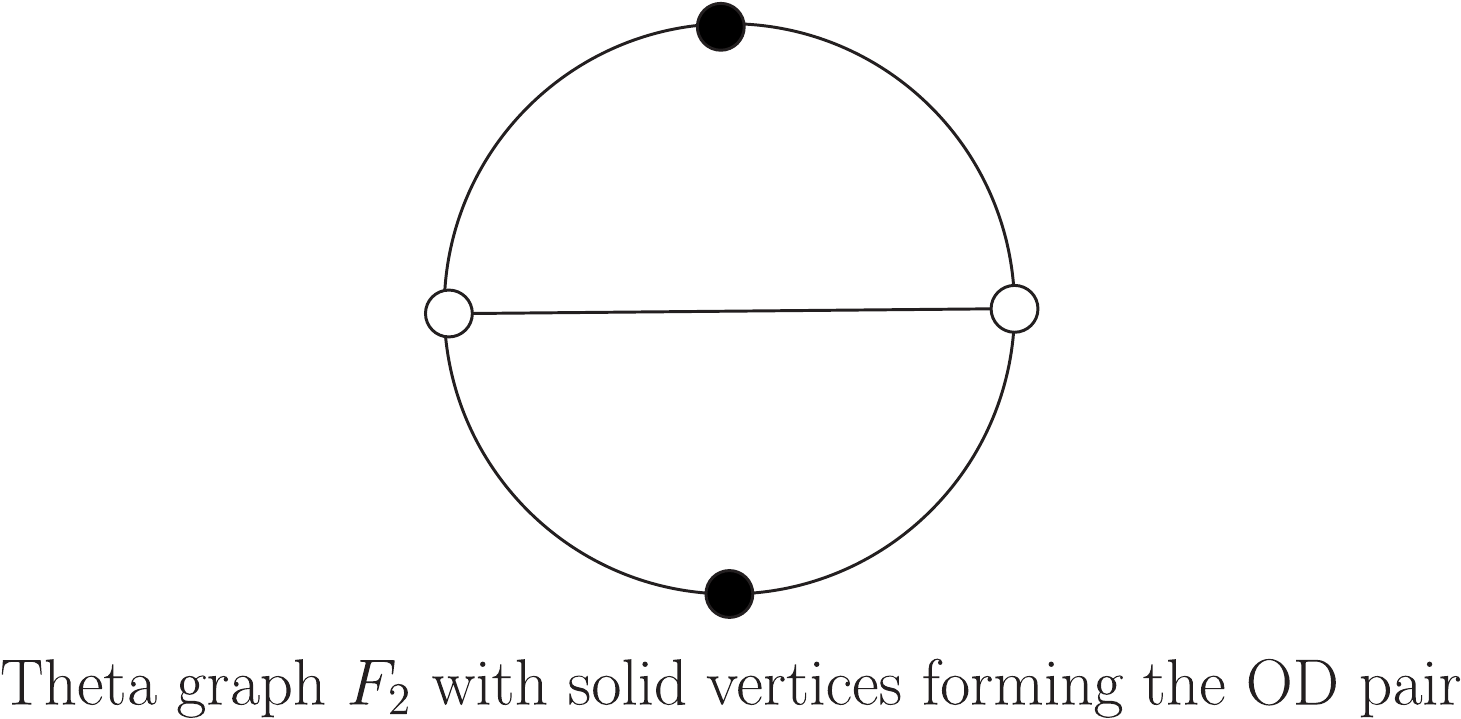}\\
\caption{A network that can not be embedded in LI networks}
\label{fig:thetagraph}
\end{figure}
 
\begin{lemma}\label{lem:3vertices}
If graph $G$ is not a cycle and contains a cycle $C$ with $|S-V(C)|\le1$, then $F_1$ can be {nicely embedded} in $G$.
\end{lemma}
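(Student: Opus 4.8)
The goal is to show that whenever $G$ is a $2$-connected graph that is not itself a cycle but contains a cycle $C$ capturing all but at most one of the four terminal vertices in $S=\{o_1,d_1,o_2,d_2\}$, we can realize $F_1$ (the $3$-vertex, $4$-edge network of Figure~\ref{fig:2odibp}(a)) as a nice embedding. Recall from Corollary~\ref{cor:f1} that such an embedding would contradict IBP-freeness, so this lemma is exactly the engine that forces $G$ to be a cycle in the borderline case. My strategy is to use the cycle $C$ together with a single ``extra'' path attached to it, supplied by $2$-connectedness, and then apply edge contractions and removals to collapse everything down to $F_1$, taking care that no terminal set is ever merged.

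\textbf{Step 1: Find an ear off $C$.} Since $G$ is $2$-connected but not equal to $C$, there is at least one edge of $G$ not on $C$. By the ear-decomposition / fan structure of $2$-connected graphs (see \cite{diestel2018graph}), there is a path $Q$ (an ``ear'') whose two endpoints $x,y$ are distinct vertices of $C$ and whose internal vertices (if any) lie outside $C$. Write $C$ as the union of two internally disjoint $x$--$y$ arcs $C_1$ and $C_2$. Thus $C_1\cup C_2\cup Q$ is a theta subgraph of $G$ with branch vertices $x,y$. This is the skeleton onto which I will map $F_1$: $F_1$ itself is a theta graph on three vertices (two branch vertices of degree $3$ joined by three parallel edges $e_2,e_4$ and the path $e_1e_3$... one checks from Figure~\ref{fig:2odibp}(a) that $F_1$ has exactly this shape — two vertices of degree $3$, one of degree $2$, forming a theta).

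\textbf{Step 2: Locate the terminals and choose the contraction pattern.} The four terminals of $S$ lie on $C$ except possibly one, say $z$, which then lies on the ear $Q$ or is reached through $Q$; in either case $z$ sits on one of the three internally-disjoint arcs of the theta subgraph. I now need to case on $|S|\in\{3,4\}$ and on how the (at least three) terminals on $C$ distribute among the arcs $C_1,C_2$, and where $x,y$ sit relative to the terminals. In each case I contract each arc of the theta subgraph down to a single edge, being careful to stop the contraction of an arc before it would identify two vertices that must remain distinct terminals — i.e. whenever an arc contains two terminals with different roles, I keep an internal vertex alive so that the contracted arc becomes a $2$-edge path rather than a single edge, and I route the contractions so that the surviving internal vertex is precisely the needed terminal vertex of $F_1$ (this is the role of the degree-$2$ vertex $w=d_2$ in $F_1$, which lies on the two-edge path $e_1,e_3$). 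All edges of $G$ not in the theta subgraph are deleted; all internal vertices of $Q$ are contracted into $Q$'s single surviving edge. The ``nice'' requirement — no terminal set merged — is checked arc by arc: since the three arcs are internally disjoint and each terminal lies in the interior or at a branch vertex of a uniquely determined arc, contractions within one arc never touch a vertex forced distinct from it.

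\textbf{Step 3: Verify the embedding realizes $F_1$ with the correct OD labelling.} After the contractions we are left with a theta graph on either $3$ or $4$ vertices with the four terminals placed on it; if it has $4$ vertices (the $|S|=4$ case where two terminals wound up as the two branch vertices and the other two interior to arcs) a further contraction of one terminal-free arc-segment brings it to $3$ vertices. One then matches $V(F_1)=\{u,v,w\}=\{o_1,d_1,o_2,d_2\}$ with $(o_1,d_1)=(u,v)$, $d_2=w$, $o_2\in\{u,v\}$ exactly as in the IBP instance of Subsection~3.1; because the automorphisms of $F_1$ (swapping the two parallel-edge bundles, swapping $u\leftrightarrow v$) are exactly the symmetries already used when we only asked for $o_2\in\{o_1,d_1\}$ and $d_2=w$, every placement of the terminals compatible with ``$w$ is the degree-$2$ vertex and the other two terminals are the degree-$3$ vertices'' gives a valid copy of $F_1$, and the at-most-one stray terminal $z$ can always be arranged (by choice of which arc to keep two-edged) to sit at the degree-$2$ position $w$ or to coincide with an already-placed terminal when $|S|=3$.

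\textbf{Main obstacle.} The real work is the bookkeeping in Step 2: showing that in \emph{every} distribution of the $\ge 3$ terminals of $S$ among the three internally-disjoint arcs of the theta subgraph $C_1\cup C_2\cup Q$ — including degenerate sub-cases where $x$ or $y$ is itself a terminal, or where all terminals pile onto a single arc, or where two distinct OD terminals would be forced to collide under the natural contraction — one can still choose the ear $Q$ (there may be several candidates, or one can re-route through a longer $2$-connected fan) and the surviving internal vertices so that the contraction is nice and yields $F_1$ with an admissible OD-labelling. I expect this to require a short but genuine case analysis, using $2$-connectedness a second time to reroute $Q$ when the first choice puts two incompatible terminals on the same side, and using the hypothesis $|S-V(C)|\le 1$ crucially to ensure at most one terminal is ``off'' and hence only one arc (namely $Q$) needs to absorb a stray terminal.
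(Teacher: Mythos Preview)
Your overall strategy—attach an ear to $C$, form a theta, and contract down to $F_1$—is exactly the paper's. But two concrete ingredients are missing, and without them your Step~2 ``bookkeeping'' does not close.

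First, when the stray terminal $d_2\notin V(C)$ you take an \emph{arbitrary} ear $Q$, so your theta $C_1\cup C_2\cup Q$ need not contain $d_2$ at all; your sentence ``$z$ lies on the ear $Q$ or is reached through $Q$'' is an assumption, not a consequence. The paper repairs this at the outset by picking an edge $e\in E(G)\setminus E(C)$ incident to $d_2$ and then using $2$-connectivity to thread the ear $P$ through $e$, so that $d_2$ automatically sits on $P$.

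Second—and this is the structural step you never touch—the paper uses the standing hypothesis that $G_1$ is SLI (hence series-parallel) to invoke the forbidden-$F_2$ property: because the Wheatstone theta $F_2$ cannot be embedded in $G_1$, the two ear-endpoints $x,y$ are forced to lie on the \emph{same} $o_1$--$d_1$ arc of $C$. This single observation is what makes the contraction argument short: after contracting maximally (never merging two vertices of $S$, never merging $x$ with $y$) one lands on a graph $H$ with $V(H)=S$; when $|S|=3$ this is $F_1$ outright, and when $|S|=4$ the constraint on $x,y$ lets one locate an edge of $H$ between $\{o_1,d_1\}$ and $\{o_2,d_2\}$ not joining $x$ to $y$, whose contraction gives $F_1$. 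Your proposal never mentions the SLI/SP assumption, so your promised ``short but genuine case analysis'' has no such leverage—and you explicitly leave it undone.

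A smaller but telling slip: you describe $w=d_2$ as the degree-$2$ vertex of $F_1$, but in fact $u=o_1$ has degree $2$ while $v=d_1$ and $w=d_2$ are the degree-$3$ branch vertices (the parallel edges are $e_3,e_4$ between $v$ and $w$, and the length-$2$ arc is $v\,e_1\,u\,e_2\,w$). Getting this right matters when you try to argue, as in your Step~3, that any placement of terminals on the contracted theta matches $F_1$.
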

\begin{proof} Suppose without loss of generality that $o_1,d_1,o_2\in V(C)$. Since $G$ is not a cycle, we may take $e\in E(G)-E(C)$ such that $d_2$ is an end of $e$ whenever $d_2\not\in V(C)$. By the 2-connectivity of $G$, there is a path $P$ in $G$ between distinct vertices $x,y$ on $C$ that goes through $e$ and is internally disjoint from $C$.  Recall that the theta graph $F_2$ (see Figure~\ref{fig:thetagraph}) is not embeddable in $G_1$. So one of the two $o_1$-$d_1$ paths in $C$ contains both $x$ and $y$. Now $C\cup P$ contains both OD pairs, i.e., $S$. Let $H$ be obtained from $C\cup P$ by iteratively contracting edges as many as possible under the condition that no two distinct vertices from $S$ are merged and that $x$ and $y$ are not merged.  If an edge with one end $z\in S\cup\{x,y\}$ is contracted, the vertex resulting from the contraction is named as $z$. In particular, when both ends of the contracted edge are in $S\cup\{x,y\}$, it must be the case that one end, say $z_1$, is in $S-\{x,y\}$ and the other, say $z_2$, is in $ \{x,y\}-S$, for which we consider the two ends merge to a new vertex $z_1=z_2\in S$. The contraction process contracts $C$ to a cycle $D$ with $V(D)=(S\cap V(C))\cup\{x,y\}$ and contracts $P$ to an $x$-$y$ path with $V(Q)=(S-V(C))\cup\{x,y\}$. Thus $H$ with $V(H)=S\cup\{x,y\}$ is edge disjoint union of $D$ and $Q$. To prove the lemma, it suffices  %Then we check that $F_1$ can be {nicely embedded} in $G$. We denote $C\cup P=H$, then we just need
to check that $F_1$ can be {nicely embedded} in $H$. In the remainder of the proof, we focus on vertices of $H$ instead of those in the original graph $C\cup P$.
	
By the construction of $H$, each edge in $H$ has both ends in $S\cup\{x,y\}$. If one of $x$ and $y$, say $x$, is outside $S$, then there is an edge $e\in E(H)$ that joins $x$ and a vertex $z\in S-\{y\}$, and $e$ should have been contracted in the above process of constructing $H$. So  we have $\{x,y\}\subseteq S$ and hence $V(H)=S$. Note that $x$ and $y$ have degree 3, and other vertices have degree 2 in $H$.

When $|S|=3$, we deduce that $H$ with $|V(H)|=3$ is $F_1$ itself. It remains to consider the case where $|S|=4$. If there is $e\in E(H)$  between $\{o_1,d_1\}$ and $\{o_2,d_2\}$ and $e$ does not join $x$ and $y$, then $H/e=F_1$ shows that $F_1$ can be {nicely embedded} in $H$. Next, we prove that such an edge $e$ does exist. When $x$ and $y$ have a common neighbor, at least one of the two edges incident with the common neighbor qualifies to be $e$. When $x$ and $y$ have no common neighbor, $x$ and $y$ are neighbors in both $D$ and $Q$, and there is a 3-edge  $x$-$y$ path in $D$ that contains $S$; it is easy to see that one of the three edges in the path qualifies to be the $e$ as desired.
\end{proof}

\begin{lemma} \label{lem:cycle}
Either $G$ is a cycle or $F_1$ can be {nicely embedded} in $G$.
\end{lemma}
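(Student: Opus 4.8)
The plan is to deduce Lemma~\ref{lem:cycle} from Lemma~\ref{lem:3vertices}: assuming $G$ is not a cycle, it suffices to produce a single cycle $C$ in $G$ with $|S-V(C)|\le 1$, where $S=\{o_1,d_1,o_2,d_2\}$ and, as already observed, $|S|\in\{3,4\}$. Since $G$ is $2$-connected, it contains a cycle $C_1$ through the distinct vertices $o_1$ and $d_1$ (take two internally disjoint $o_1$-$d_1$ paths and let $C_1$ be their union). If $|S|=3$ then $S\setminus\{o_1,d_1\}$ has at most one element, so $|S-V(C_1)|\le 1$ and Lemma~\ref{lem:3vertices} applies immediately. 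Hence the real work is the case $|S|=4$, where I would upgrade $C_1$ to a cycle passing through three of the four terminals; it will in fact pass through $\{o_1,d_1,o_2\}$, so that the unique possibly-missing terminal is $d_2$.

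For $|S|=4$ I would first dispose of the subcase $o_2\in V(C_1)$, where $C_1$ itself already works. Otherwise, $2$-connectedness (the fan form of Menger's theorem, cf.\ \cite{diestel2018graph}) yields two paths $P_1,P_2$ from $o_2$ to $V(C_1)$ that meet only at $o_2$, are internally disjoint from $C_1$, and end at distinct vertices $p_1,p_2\in V(C_1)$; let $A_1,A_2$ be the two $o_1$-$d_1$ arcs of $C_1$. If $p_1$ and $p_2$ lie on a common arc (counting $o_1,d_1$ as lying on both), then splicing the $p_1$-$p_2$ path $P_1\cup P_2$ into that arc and keeping the whole of the other arc produces a cycle through $o_1$, $d_1$ and $o_2$, and Lemma~\ref{lem:3vertices} finishes the case. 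The only remaining possibility is that $\{o_1,d_1\}$ separates $p_1$ from $p_2$ on $C_1$, say $p_1$ interior to $A_1$ and $p_2$ interior to $A_2$. But then $C_1\cup P_1\cup P_2$ is a subdivision of the network $F_2$ of Figure~\ref{fig:thetagraph} in which $o_1$ and $d_1$ are the two degree-$2$ branch vertices: the four branch vertices are $o_1,d_1,p_1,p_2$ (with $p_1,p_2$ of degree $3$), the five branches are the four sub-arcs of $A_1,A_2$ together with the path $P_1\cup P_2$, realizing $K_4$ on $\{o_1,d_1,p_1,p_2\}$ minus the edge $o_1d_1$. Since $G=G_1$, this exhibits $F_2$ embedded in $G_1$, contradicting that $G_1$ is SLI, hence series-parallel, by \cite{milchtaich2006network}. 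So this possibility cannot occur, and in every surviving case we have the desired cycle.

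The crux — and the only place the hypotheses beyond $2$-connectedness are used — is this last subcase: one must recognize that the configuration in which $\{o_1,d_1\}$ separates the two attachment points of $o_2$ on $C_1$ is precisely an embedded copy of $F_2$, which is forbidden in the series-parallel subnetwork $G_1$. The accompanying bookkeeping (handling degenerate positions, e.g.\ $p_1$ or $p_2$ coinciding with $o_1$ or $d_1$, and verifying that the rerouted closed walks are genuine simple cycles and that the claimed object is genuinely a subdivision of $F_2$) is routine but needs to be carried out carefully; everything else is standard manipulation of $2$-connected graphs.
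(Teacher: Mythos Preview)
Your proposal is correct and follows essentially the same approach as the paper: take a cycle through $o_1,d_1$, attach $o_2$ via a path (you use the fan lemma, the paper simply asserts such a path from $2$-connectivity), rule out the case where the attachment points lie on different $o_1$-$d_1$ arcs using the forbidden $F_2$ configuration in the SP network $G_1$, and then splice to obtain a cycle through $o_1,d_1,o_2$ before invoking Lemma~\ref{lem:3vertices}. The paper's write-up is terser, but the logical structure is identical.
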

\begin{proof}Suppose that $G$ is a 2-connected graph that is not a cycle. Let $C$ be a cycle in $G$ going through $o_1$ and $d_1$. By Lemma~\ref{lem:3vertices}, we only need to consider the case of $o_2\notin V(C)$. Similar to the above proof, the 2-connectivity of $G$ gives a path  $P$ between  distinct vertices $x,y$ on $C$ that goes through $o_2$ and is internally disjoint from $C$. In turn, excluding theta graph $F_2$ implies that both $x$ and $y$ are contained in one of the two $o_1$-$d_1$ paths in $C$; we denote the path by $Q$. Now $(C\setminus V(Q(x,y)))\cup P$ is a cycle in $G$ that contains $o_1,d_1,o_2$. It follows from Lemma \ref{lem:3vertices} that $F_1$ can be {nicely embedded} in $G$.
\end{proof}
The combination of Lemma~\ref{lem:cycle} and Corollary~\ref{cor:f1} enforces that $G$ is a cycle as desired. The ``only if'' part of Theorem~\ref{thm:main_0} has been verified.

\section{Sufficiency Proof}\label{sec:cycle}

In this section, we justify the ``if'' part of  Theorem~\ref{thm:main_0}. By contradiction, suppose that network $G$ with $n$ OD pairs satisfies the SLI condition and common block condition, but $G$ is not immune to IBP. It follows from the SLI condition and Proposition~\ref{pro:block} that every block of $G$ is a block of some $G_i$, $i\in[n]$, which is therefore LI. Since $G$ is not immune to IBP, by Lemma~\ref{lem:subgame}, some block of $G$, denoted by $C$, is not immune to IBP. In turn, Theorem~\ref{thm:sli} implies that $C$ is a network with $n\,(\ge2)$ OD pairs. Then, we deduce from Theorem \ref{thm:coin} that $C$ is not a coincident block of its subnetworks, which, along with the common block condition, enforces that $C$ is a cycle. To reach a contraction and prove the sufficiency, it suffices to establish the following IBP-freeness.

\begin{theorem}\label{thm:cycle_main}
A cycle $C$ with $n~(\ge2)$ OD pairs is IBP-free.
\end{theorem}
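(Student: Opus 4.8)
The plan is to work directly on a cycle $C$ with vertex set arranged cyclically, and to exploit the fact that on a cycle each OD pair $(o_i,d_i)$ has exactly two paths joining its terminals: the ``clockwise'' arc and the ``counterclockwise'' arc. Thus every traveler type $j$ has at most two routes available (fewer, if its information set $E_j$ does not contain all edges of one of the two arcs), and an ICWE flow is described by how each type splits its rate between its two arcs. The first step is to set up coordinates: fix an orientation of $C$, and for each edge $e$ let $x_e^+$ (resp.\ $x_e^-$) be the total flow crossing $e$ in the $+$ (resp.\ $-$) direction; since latencies depend only on $f_e=x_e^+ + x_e^-$, I would first argue a \emph{no-cancellation} normalization — in any ICWE there is an equivalent equilibrium (same edge loads, hence same latencies, hence the same IBP question by uniqueness) in which no edge is used in both directions, so the flow on $C$ decomposes cleanly. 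This lets me speak of, for each type, which of its two arcs carries positive flow.

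Next I would reduce to the essential case. Assume for contradiction that $(C,\ell,r,\mathcal E,\tilde{\mathcal E})$ is an IBP instance with $\ell_1(\tilde f) > \ell_1(f)$, where only type-$1$'s information set grows from $E_1$ to $\tilde E_1$. The only way type-$1$'s options change is if $\tilde E_1$ newly completes one of its two arcs; so before the extension type-$1$ was confined to a single arc $A$ (the one it could fully traverse), and after the extension it gains access to the complementary arc $A'$, i.e.\ $E(A)\cup E(A') = E(C)$ and $\tilde E_1 \supseteq E(C)$ along these edges. So the whole phenomenon is: type-$1$ was forced onto arc $A$, is now free to use $A'$ as well, and reacts by shifting some mass onto $A'$ — yet ends up worse off. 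I would then track the monotone response of the other types. The key structural fact to extract is a \emph{monotonicity/acyclicity} statement analogous to Lemma~1(b) of \cite{ibp18}: after type-$1$ moves mass from $A$ onto $A'$, on the edges of $A$ the total load cannot increase and on the edges of $A'$ it cannot decrease — because the other types only re-optimize in response, and on a cycle their best-response dynamics have no ``room'' to create a net increase on $A$. Concretely, I would show $f_e \le \tilde f_e$ for $e\in E(A')\setminus E(A)$ and $\tilde f_e \le f_e$ for $e\in E(A)\setminus E(A')$, with the edges common to both arcs (if any — in fact on a simple cycle $A$ and $A'$ are edge-disjoint) handled separately. This is the analogue, in the cycle setting, of the classical ``adding a route cannot increase the load on the old route'' argument, but it must be proved carefully because the \emph{other} OD pairs can have either arc in play and their reactions couple the two sides of the cycle.

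Given such a monotonicity statement, the contradiction comes from comparing $\ell_1$ along the two arcs. Under $\tilde f$, type-$1$ splits between $A$ and $A'$ (or moves entirely to $A'$); since it is at equilibrium with positive flow on $A'$, we have $\ell_1(\tilde f) = \ell_{A'}(\tilde f) \le \ell_{A}(\tilde f) \le \ell_A(f) = \ell_1(f)$, where the first inequality is the ICWE condition for type-$1$ after the extension, the second is the monotonicity $\tilde f_e \le f_e$ on $E(A)$ together with monotonicity of each $\ell_e$, and the last equality holds because before the extension type-$1$ used only $A$. (If instead type-$1$ still keeps positive flow on $A$ after the extension, then $\ell_1(\tilde f)=\ell_A(\tilde f)\le \ell_A(f)=\ell_1(f)$ directly.) Either way $\ell_1(\tilde f)\le \ell_1(f)$, contradicting the assumption that IBP occurs. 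So the entire argument rests on establishing the monotone-response lemma for the other types on the cycle.

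\textbf{Main obstacle.} The hard part will be precisely that monotone-response lemma. On a cycle the other OD pairs $(o_i,d_i)$ may each route along either of their two arcs, and when type-$1$ pushes flow onto $A'$ this can make $A'$ more congested, causing some type-$i$ to shift \emph{away} from $A'$ toward $A$ — which threatens to raise the load on $A$ and break the inequality $\tilde f_e \le f_e$ on $E(A)$. One must show these secondary shifts cannot overcorrect. I expect this requires a careful global accounting: look at a maximal edge-interval of $A$ on which the load strictly increased after the extension, derive from the equilibrium conditions at its two ends that some type must be shifting \emph{onto} that interval, trace where that flow comes from (it must come from $A'$), and reach a contradiction with the fact that $A'$'s load went up, not down. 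This is essentially a variational / potential-function argument — comparing the Beckmann potentials of $f$ and $\tilde f$, or running a flow-difference decomposition of $\tilde f - f$ around the single cycle and showing the circulation has a definite sign — and making it rigorous in the presence of arbitrarily many coupled OD pairs, partial information sets, and ties at equilibrium is where the real work lies. (This is consistent with the paper's own remark that the cycle case is ``the most technical part'' and is deferred to Section~\ref{sec:cycle} and Appendix~\ref{app:cycle}.)
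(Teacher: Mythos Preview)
Your plan hinges on the monotone-response lemma asserting $\tilde f_e \le f_e$ on every edge of type-1's original arc $A$, and this is exactly the step that has resisted proof. The paper explicitly reviews a prior attempt along these lines by Roman and Turrini and calls its flaws ``irremediable'' (Appendix~\ref{apx:a}): they obtain only the aggregate inequality $\sum_{e\in A}\tilde f_e \le \sum_{e\in A} f_e$, which is useless for bounding $\ell_A(\tilde f)$ via monotone $\ell_e$, and their attempt to patch this via a ``without loss of generality'' reduction overlooks precisely the cross-coupling scenario you flag (some other type shifting \emph{onto} $A$ in response). Your proposed fixes do not close the gap either: the Beckmann potential comparison yields only the global relation $\Phi(\tilde f)\le\Phi(f)$, and the circulation $\tilde f - f$ around the cycle has no reason to keep a fixed sign edge-by-edge once several OD pairs with interleaved arcs react simultaneously. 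You are right that this is ``where the real work lies,'' but the evidence is that this road does not go through.

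The paper's actual route abandons monotonicity on $A$ entirely. After normalizing (Lemmas~\ref{lem:type_n+1}--\ref{lem:cir_change_path}) so that each type $i$ switches completely from an arc $\alpha_i$ to its complement $\beta_i$, the pivotal observation (Lemma~\ref{lem:key}) is that $\ell_{\alpha_i}<\tilde\ell_{\alpha_i}$ for \emph{every} $i$, not just $i=1$; this immediately forces $r_i<\sum_{j\ne i}r_j$ for all $i$ and disposes of $n=2$. For general $n$ the argument converts all rates to $1$, uses a destination-swap operation to reduce any hypothetical IBP instance to a ``symmetric'' or ``almost symmetric'' arrangement of terminals on the cycle, and then runs an induction on $n$: in each configuration one either locates some index $j$ (typically $j\ne 1$) with $f_e\ge r/2$ on every edge of $\alpha_j$---hence $\ell_{\alpha_j}\ge\tilde\ell_{\alpha_j}$, contradicting Lemma~\ref{lem:key}---or finds a pair of edges whose contraction produces a smaller IBP instance. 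The contradiction is obtained by ranging over all types simultaneously; focusing on type-1's arc alone is the wrong lens.
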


Due to space limitations, the proof of Theorem \ref{thm:cycle_main} is deferred to Appendix \ref{app:cycle}. The proof strategies can be summarized as follows:
\begin{itemize}
\item Reduction to fewer traveler types:
By Lemmas \ref{lem:type_n+1} and \ref{lem:type_n}, we reduce the problem to be the one with fewer types of travelers: If IBP happens, then it happens on a cycle network such that each OD pair is associated with \emph{only one} type of travelers. According to Lemma~\ref{lem:cir_change_path}, we can assume that the travelers of the same type choose the same one OD path before the information expansion and a \emph{different} OD path after the information expansion.
\item Increasing latencies and non-dominant traffic rates:
If IBP occurs, for every type of travelers, the latency of the path chosen before the information expansion will \emph{strictly increase} after the information expansion (see Lemma \ref{lem:key}). In addition, when IBP occurs, the traffic rate of any type of travelers is \emph{strictly less} than the total traffic rate of all the others (see Lemma \ref{lem:r_i}). Then, we directly derive that a cycle with two OD pairs is IBP-free in Lemma \ref{lem:2od},  significantly streamlining the proof process in \cite{ibp18}.
\item Reduction to fewer OD pairs via path coverage: 
In Lemma \ref{lem:union_cycle}, if there are two types of travelers such that the union of their chosen paths covers the entire cycle in an IBP instance, then we can construct a new IBP instance with \emph{fewer} OD pairs. By using this lemma and case analysis, it follows that a cycle with three OD pairs is IBP-free (see Lemma \ref{lem:3od}).
\item Reduction to fewer OD pairs via specific path choices:
According to Lemmas \ref{lem:4equal} -- \ref{lem:shrink_3}, if two types of travelers with specific path choices exist in an IBP instance, then we can construct a new IBP instance with \emph{fewer} OD pairs. Lemma \ref{lem:ge_le} states the relationship between the path choices and the amount of flow in edges. 
\item Symmetric structures: 
Next, we consider two structures on a cycle with $n$ OD pairs: a \emph{completely symmetric} distribution of terminal vertices and an \emph{almost symmetric} distribution of terminal vertices. These two structures are proven to be immune to IBP  in Lemmas \ref{lem:symmetric} and \ref{lem:almost_sym}.
\item Traffic rate and structure transformations:
In Lemma \ref{lem:r_to_int}, we convert an IBP instance in a cycle with $n$ OD pairs to one with all \emph{integer} traffic rates, and in Lemma \ref{lem:r_to_1}, we transform an IBP instance in a cycle with a finite number of OD pairs such that all traffic rates are equal to $1$. Moreover, Lemma \ref{lem:gen_to_sym} uses swap operations to convert a general  IBP instance (with $n$ OD pairs) into an instance on a completely symmetric structured cycle or an almost \emph{symmetric structured cycle}.
\item IBP-freeness:
Finally, by combining Lemmas \ref{lem:symmetric} -- \ref{lem:gen_to_sym}, we arrive at the main result of this section (Theorem \ref{thm:cycle_main}): a cycle with $n$ OD pairs is IBP-free.
\end{itemize}

\section{Concluding Remark}\label{sec:conclude}
In this paper, we have characterized all \emph{undirected} IBP-free networks with multiple OD pairs. Investigating the \emph{directed} counterpart presents an intriguing and potentially challenging avenue for future research. 

Our analysis could potentially be extended to other paradoxical settings. For example, classical BP represents the situation where adding an extra link to a network can sometimes strictly increase the travel times for travelers between one OD pair without decreasing travel times for all others. We propose exploring a ``strong increase'' version of BP, where adding an extra link to a network can sometimes strictly increase the travel times for all travelers whose OD paths contain the link. The analytical methods and characterization properties presented in this paper are also adaptable for studying networks that are immune to this stronger form of BP.

% Bibliography
%\bibliographystyle{splncs04}
\bibliographystyle{unsrt}
\bibliography{routing}

% Appendix
\appendix

\section{Two Critical Flaws in the Proof of Theorem 3 in \cite{roman2019multi}}\label{apx:a}
%There are two main flaws in the proof of Theorem 3 in \cite{roman2019multi}.
This section raises valid concerns about the robustness of the proof of Theorem 3 in \cite{roman2019multi} by identifying the following two primary issues.
\begin{enumerate}
\item 
Inequality Derivation (Page 569, left column, fifth paragraph): The inequality $\sum_{e \in s_a} c_e(f_e(\tilde{x}))$ $\le \sum_{e \in s_a} c_e(f_e(x))$ is derived using the monotonicity of the functions $c_e$. This derivation requires that $\ f_e(\tilde{x}) \le f_e(x)$ holds for all $e \in s_a$. However, unfortunately, the single inequality $\sum_{e \in s_a} f_e(\tilde{x}) \le \sum_{e \in s_a} f_e(x)$ provided by the authors   is insufficient to guarantee the required monotonicity condition for each individual term. 
%\cite{roman2019multi} Page 569 左侧第五段，利用函数$c_e$的单调性得到了$\sum_{e\in s_a}c_e(f_e(\tilde{x}))\le\sum_{e\in s_a}c_e(f_e(x))$，但得出这一不等式需要$\forall e\in s_a,\ f_e(\tilde{x})\le f_e(x)$，而原文的条件仅仅有$\sum_{e\in s_a}f_e(\tilde{x}\le \sum_{e\in s_a}f_e(x)$。
\medskip
\item 
Generality Oversight (Page 570, left column, first paragraph): The assertion ``without loss of generality'' fails to account for a crucial scenario: when the demand for $s_{\alpha i}$ decreases and belongs to $S_\alpha$, while the demand for $s_{\beta i}$ increases and belongs to $S_\beta$. This oversight is significant because this particular scenario cannot be used to support the result of Claim 3 in \cite{roman2019multi}. In fact, considering this case would lead to the opposite inequality, contradicting the intended conclusion.
%Page 570 左侧第一段，without loss of generality, 没有讨论情况 the demand for $s_{\alpha i}$ decrease and belong to $S_\alpha$ and the demand for $s_{\beta i}$ increase and belong to $S_\beta$。这种情况无法用\cite{roman2019multi}中的方法得到其文中Claim 3 的结果，只能得到相反的不等号。
\end{enumerate}

\section{The Proof of Theorem \ref{thm:cycle_main}}\label{app:cycle}

This section is devoted to the proof of Theorem~\ref{thm:cycle_main}.
Without loss of generality, assume that any two terminal vertices $x$ and $y$ are distinct. (If $x$ and $y$
  coincide at a single vertex, this vertex can be split into two different vertices, each incident with different (one of the) edges originally incident with the coincident vertex. Name these two vertices  $x$ and $y$, respectively, and add a zero-latency edge between them.) Additionally, it can also be assumed that all vertices on the cycle are terminal vertices in OD pairs. Otherwise, if there is a path of length at least $2$ between $x$ and $y$ that does not pass through any other terminal vertices, then the $x$-$y$ path  can be replaced with a single edge between $x$ and $y$. The latency function of this edge would be the summation of the latency functions of the edges along the $x$-$y$ path. In summary, we can assume that there are $2n$ vertices on the cycle $C$, corresponding to $n$ origin vertices and $n$ destination vertices.

There are exactly two paths between each $(o_i,d_i)$ pair on the cycle $C$, denoted as $\alpha_i$ and $\beta_i$, for which we have $C = \alpha_i \cup \beta_i$. Here, $C$ represents the set of edges on the cycle, and $\alpha_i$ and $\beta_i$ represent the sets of edges on the two paths, respectively.

For each OD pair $(o_i,d_i)$, there may be more than one type of travelers taking it as their OD pair. For ease of presentation, we use ``type-$(i,j)$ travelers'' to mean the $j$-th type of travelers whose OD pair is $(o_i,d_i)$. 

Let us assume that the type-$(1,0)$ travelers with OD pair $(o_1,d_1)$ have the expanded information sets. Let the path set before the information expansion be $\mathcal P_{(1,0)}=\{\alpha_1\}$, and after the information expansion, the path set becomes $\tilde{\mathcal P}_{(1,0)}=\{\alpha_1,\beta_1\}$. The remaining types of travelers with the unchanged information sets can be defined as follows: For each $i\in [n]$, 
\begin{itemize}
\item
The path set of type-$(i,1)$ travelers is $\mathcal P_{(i,1)}=\{\alpha_i,\beta_i\}$.
\item
The path set of type-$(i,2)$ travelers is $\mathcal P_{(i,2)}=\{\alpha_i\}$.
\item
The path set of type-$(i,3)$ travelers is $\mathcal P_{(i,3)}=\{\beta_i\}$.
\end{itemize}

In the game, there are $3n+1$ types of travelers. The following lemma shows that, aside from the type-$(1,0)$ travelers, it can be assumed that there are only type-$(i,1)$ travelers with the complete information for each $i\in [n]$.

\begin{lemma}\label{lem:type_n+1}
Suppose that $C$ is a cycle with $n$ OD pairs. If there is an IBP instance $(C,\ell,r,\mathcal E,\tilde{\mathcal E})$ with $3n+1$ types of travelers $\bigcup_{i=1}^n\{(i,1),(i,2),(i,3)\}\cup\{(1,0)\}$, then there is an IBP instance $(C,\ell',r',\mathcal E',\tilde{\mathcal E'})$ with $n+1$ types of travelers $\bigcup_{i=1}^n\{(i,1)\}\cup\{(1,0)\}$. 
\end{lemma}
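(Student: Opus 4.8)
The plan is to eliminate the types $(i,2)$ and $(i,3)$ one OD pair at a time by absorbing their (fixed) traffic into the latency functions of the edges along the relevant paths. Since type-$(i,2)$ travelers are committed to $\alpha_i$ and type-$(i,3)$ travelers to $\beta_i$ regardless of the information expansion, in \emph{both} the pre-expansion game $(C,\ell,r,\mathcal E)$ and the post-expansion game $(C,\ell,r,\tilde{\mathcal E})$ these travelers contribute a constant amount $r_{(i,2)}$ of flow to every edge of $\alpha_i$ and a constant amount $r_{(i,3)}$ to every edge of $\beta_i$. Concretely, I would first handle the case of a fixed $i$: let $a_i := r_{(i,2)}$ and $b_i := r_{(i,3)}$ and define, for every $e \in E(C)$,
\[
\ell'_e(x) \;=\; \ell_e\!\left(x + a_i\cdot[e\in\alpha_i] + b_i\cdot[e\in\beta_i]\right),
\]
which is again nonnegative and nondecreasing. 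Delete types $(i,2)$ and $(i,3)$ from the list of travelers, keeping all other traffic rates and information sets the same (restricted appropriately), and call the resulting data $r'$, $\mathcal E'$, $\tilde{\mathcal E'}$.

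The key verification is that if $f$ is an ICWE flow of $(C,\ell,r,\mathcal E)$, then the flow $f'$ obtained from $f$ by simply deleting the $(i,2)$- and $(i,3)$-components is an ICWE flow of $(C,\ell',r',\mathcal E')$, and the equilibrium latency of every remaining type is unchanged: for each edge $e$ we have $f'_e = f_e - a_i[e\in\alpha_i] - b_i[e\in\beta_i]$, hence $\ell'_e(f'_e) = \ell_e(f_e)$, so every path has the same latency under $f'$ in the new game as under $f$ in the old game. The ICWE inequalities for the surviving types are therefore inherited verbatim. The same argument with $\tilde{f}$ shows the post-expansion ICWE flow transfers as well, with identical edge latencies; in particular $\ell_1(f') = \ell_1(f) < \ell_1(\tilde f) = \ell_1(\tilde f')$, so $(C,\ell',r',\mathcal E',\tilde{\mathcal E'})$ is again an IBP instance. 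One should note the (trivial) edge case $a_i=0$ or $b_i=0$, i.e.\ when type $(i,2)$ or $(i,3)$ is already absent — then there is simply nothing to remove on that side. Having done this for one $i$, I would iterate over $i=1,\dots,n$; each step strictly decreases the number of types among $\bigcup_i\{(i,2),(i,3)\}$ while preserving the IBP property and not touching the types $\{(i,1)\}_{i\in[n]}$ or $(1,0)$, so after finitely many steps we reach an IBP instance whose only travelers are $\bigcup_{i=1}^n\{(i,1)\}\cup\{(1,0)\}$.

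I do not expect a genuine obstacle here — the lemma is essentially bookkeeping. The one point requiring a little care is that shifting the argument of $\ell_e$ by the \emph{same} constant in both games is what makes the two ICWE flows transfer \emph{simultaneously} with matched edge latencies; this is why it is legitimate to absorb the fixed flow of types $(i,2)$ and $(i,3)$ even though $f$ and $\tilde f$ are different flows — the absorbed amounts $a_i,b_i$ do not depend on which game we are in. A second minor point is the uniqueness-of-ICWE remark from \cite{ibp18}: it guarantees that the phrase "the equilibrium latency of type $(1,0)$'' is well defined and that $\ell_1$ is unambiguous before and after the reduction, so the inequality $\ell_1(f')<\ell_1(\tilde f')$ is exactly the IBP condition for the reduced instance.
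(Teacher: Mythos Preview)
Your proposal is correct and follows essentially the same approach as the paper: absorb the fixed flow of types $(i,2)$ and $(i,3)$ into the latency functions by shifting each $\ell_e$ by the constant amount contributed along that edge, then iterate over $i$. Your write-up is in fact more careful than the paper's (you verify explicitly that edge latencies match under $f$ and $\tilde f$ so that the ICWE inequalities and the IBP inequality transfer), whereas the paper treats only type $(i,2)$ and invokes symmetry for $(i,3)$.
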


\begin{proof}
For any $i\in[n]$, by the symmetry of $\alpha_i$ and $\beta_i$, we only show a new IBP instance without type-$(i,2)$ travelers. Let $r_{(i,2)}$ be the traffic rate of type-$(i,2)$ travelers. Since type-$(i,2)$ travelers always choose to take $\alpha_i$ before and after the information expansion of type-$(1,0)$ travelers, the flow of this part can be fixed on the latency function of the edges in path $\alpha_i$. That is, $\ell'_e(x)=\ell_e(x+r_{(i,2)})$ for each $e\in\alpha_i$ and $\ell'_e(x)=\ell(x)$ for each $e\in C\setminus\alpha_i$. Therefore, we can construct a new IBP instance without type-$(i,2)$ and type-$(i,3)$ travelers for each $i\in[n]$.
\end{proof}

%Through this lemma, for type-$(1,0)$ travelers before and after the information expansion, the flow of travelers taking the same path can be incorporated into the edge's latency function, thereby obtaining an IBP instance without this type of travelers.

Furthermore, we can construct an IBP instance that does not include type-$(1,1)$ travelers.

\begin{lemma}\label{lem:type_n}
Suppose that $C$ is a cycle with $n$ OD pairs. If there is an IBP instance $(C,\ell,r,\mathcal E,\tilde{\mathcal E})$ with $n+1$ types of travelers $\bigcup_{i=1}^n\{(i,1)\}\cup\{(1,0)\}$, then there is an IBP instance $(C,\ell',r',\mathcal E',\tilde{\mathcal E'})$ with $n$ types of travelers $\bigcup_{i=2}^n\{(i,1)\}\cup\{(1,0)\}$.
\end{lemma}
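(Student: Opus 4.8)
The plan is to show that the type-$(1,1)$ travelers, who share the OD pair $(o_1,d_1)$ with the information-expanding type-$(1,0)$ travelers but already possess the complete information set $\{\alpha_1,\beta_1\}$, make the \emph{same} routing choice before and after the expansion; once this is established, their contribution can be absorbed into the latency functions of the edges they occupy exactly as in the proof of Lemma~\ref{lem:type_n+1}, leaving an IBP instance with only the $n$ types $\bigcup_{i=2}^n\{(i,1)\}\cup\{(1,0)\}$.

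First I would record some easy consequences of IBP occurring. If $\ell_{\alpha_1}(f)\le\ell_{\beta_1}(f)$, then $f$ would already be an ICWE of the post-expansion game (type-$(1,0)$ travelers would have no incentive to move to $\beta_1$), contradicting $\ell_{(1,0)}(f)<\ell_{(1,0)}(\tilde f)$. Hence $\ell_{\alpha_1}(f)>\ell_{\beta_1}(f)$, so every type-$(1,1)$ traveler routes on $\beta_1$ under $f$; that is, $f_{\alpha_1}^{(1,1)}=0$, and therefore $f_{\alpha_1}=f_{\alpha_1}^{(1,0)}=r_{(1,0)}$ and $f_{\beta_1}=f_{\beta_1}^{(1,1)}=r_{(1,1)}$. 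Since the path choice of the type-$(1,0)$ travelers changes, we also get $r_{(1,0)}=f_{\alpha_1}^{(1,0)}>\tilde f_{\alpha_1}^{(1,0)}$.

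The crux is the sharper inequality $f_{\alpha_1}>\tilde f_{\alpha_1}$, i.e.\ the \emph{total} flow on $\alpha_1$ strictly drops after the expansion. I would argue by contradiction: assuming $\tilde f_{\alpha_1}\ge r_{(1,0)}$, one can re-route the post-expansion equilibrium $\tilde f$ into a flow $\bar f$ that is feasible for the pre-expansion game — keep all type-$(i,1)$ flows for $i\ge2$ as in $\tilde f$, send all $r_{(1,0)}$ of the type-$(1,0)$ travelers on $\alpha_1$, and split the type-$(1,1)$ travelers so that $\bar f_{\alpha_1}=\tilde f_{\alpha_1}$ and $\bar f_{\beta_1}=\tilde f_{\beta_1}$. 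Then $\bar f_e=\tilde f_e$ for every edge $e$, so $\bar f$ is simultaneously an ICWE of both games; by uniqueness of equilibrium latencies, $\ell_{(1,0)}(f)=\ell_{(1,0)}(\bar f)=\ell_{\alpha_1}(\tilde f)=\ell_{(1,0)}(\tilde f)$, contradicting the definition of IBP. This re-routing step is the main obstacle, since one must check carefully that the reshuffled flow is a genuine ICWE of \emph{both} games so that uniqueness can be invoked.

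With $f_{\alpha_1}>\tilde f_{\alpha_1}\ge0$ established, I would construct a post-expansion flow $\hat f$ in which the type-$(1,1)$ travelers again route entirely on $\beta_1$ (so $\hat f_{\alpha_1}^{(1,1)}=0$): set $\hat f_{\alpha_1}^{(1,0)}=\tilde f_{\alpha_1}$, $\hat f_{\beta_1}^{(1,0)}=r_{(1,0)}-\tilde f_{\alpha_1}>0$, $\hat f_{\beta_1}^{(1,1)}=r_{(1,1)}$, and keep the remaining types as in $\tilde f$. Since $\hat f_e=\tilde f_e$ on every edge, $\hat f$ is still an ICWE after the expansion. Thus the type-$(1,1)$ travelers use $\beta_1$ both before and after the expansion, so their flow is fixed; absorbing it into the latency functions of the edges of $\beta_1$, as in the argument of Lemma~\ref{lem:type_n+1}, produces the desired IBP instance with $n$ types of travelers and the same equilibrium latency for the type-$(1,0)$ travelers, completing the proof.
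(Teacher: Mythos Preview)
Your proposal is correct and follows essentially the same approach as the paper's own proof: the same preliminary observation that $\ell_{\alpha_1}(f)>\ell_{\beta_1}(f)$ forces $f_{\alpha_1}^{(1,1)}=0$, the same contradiction argument via the re-routed flow $\bar f$ to obtain $f_{\alpha_1}>\tilde f_{\alpha_1}$, and the same construction of $\hat f$ to show that type-$(1,1)$ travelers can be taken to use $\beta_1$ both before and after the expansion, allowing their absorption into the latency functions as in Lemma~\ref{lem:type_n+1}.
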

\begin{proof}
Denote the path sets for type-$(1,0)$ travelers before and after the information expansion as $\mathcal P_{(1,0)}=\{\alpha_1\}$ and $\tilde{\mathcal P}_{(1,0)}=\{\alpha_1,\beta_1\}$, respectively. Suppose that $f$ and $\tilde{f}$ are ICWE flows of the network before and after the information expansion. First, we must have $\ell_{\alpha_1}(f)>\ell_{\beta_1}(f)$. Otherwise, $f$ would be an ICWE flow of the network after the information expansion, and IBP would not occur. So $f_{\alpha_1}^{(1,1)}=0$. Because the path selection of type-$(1,0)$ travelers changes before and after the information expansion, we have $r_{(1,0)}=f_{\alpha_1}^{(1,0)}>\tilde{f}_{\alpha_1}^{(1,0)}$. We know that $f_{\alpha_1}=f_{\alpha_1}^{(1,0)}+f_{\alpha_1}^{(1,1)}=r_{(1,0)}+0=r_{(1,0)}$ and $f_{\beta_1}=f_{\beta_1}^{(1,1)}=r_{(1,1)}$.

Next, we will prove that $r_{(1,0)}=f_{\alpha_1}>\tilde{f}_{\alpha_1}$. Assume by contradiction that $\tilde{f}_{\alpha_1}\ge r_{(1,0)}$. We construct a new feasible flow $\bar{f}$ for the network before the information expansion. For $i\ge 2$ and for any $P\in\mathcal P_{(i,1)}$, set $\bar{f}_P^{(i,1)}=\tilde{f}_P^{(i,1)}$. Let $\bar{f}_{\alpha_1}^{(1,0)}=r_{(1,0)}$, $\bar{f}_{\beta_1}^{(1,0)}=0$, $\bar{f}_{\alpha_1}^{(1,1)}=\tilde{f}_{\alpha_1}-r_{(1,0)}$, and $\bar{f}_{\beta_1}^{(1,1)}=\tilde{f}_{\beta_1}$. Then, we have $\bar{f}_{\alpha_1}=\bar{f}_{\alpha_1}^{(1,0)}+\bar{f}_{\alpha_1}^{(1,1)}=\tilde{f}_{\alpha_1}$ and $\bar{f}_{\beta_1}=\bar{f}_{\beta_1}^{(1,0)}+\bar{f}_{\beta_1}^{(1,1)}=\tilde{f}_{\beta_1}$. Thus, for every OD path $P$, $\bar{f}_P=\tilde{f}_P$. So $\bar{f}$ is an ICWE flow in the network after the information expansion. Since $\bar{f}_{\alpha_1}^{(1,0)}=r_{(1,0)}$, $\bar{f}$ is a feasible flow for the network before the information expansion. Thus, $\bar{f}$ is an ICWE flow for the network before the information expansion. By the uniqueness of ICWE, it follows that $\ell_{(1,0)}(f)=\ell_{(1,0)}(\bar{f})$. Since $\bar{f}_{\alpha_1}=\tilde{f}_{\alpha_1}\ge r_{(1,0)}>0$, we derive $\ell_{(1,0)}(f)=\ell_{(1,0)}(\bar{f})=\ell_{\alpha_1}(\bar{f})=\ell_{\alpha_1}(\tilde{f})=\ell_{(1,0)}(\tilde{f})$, which contradicts with the definition of IBP ($\ell_{(1,0)}(f)<\ell_{(1,0)}(\tilde{f})$). Therefore, $f_{\alpha_1}>\tilde{f}_{\alpha_1}$.

Then, we construct a new flow $\hat{f}$ for the network after the information expansion with $\hat{f}_{\alpha_1}^{(1,1)}=0$. For $i\ge2$ and for any $P\in\mathcal P_{(i,1)}$, $\hat{f}_P^{(i,1)}=\tilde{f}_P^{(i,1)}$. $\hat{f}_{\alpha_1}^{(1,0)}=\tilde{f}_{\alpha_1}$, $\hat{f}_{\beta_1}^{(1,0)}=r_{(1,0)}-\tilde{f}_{\alpha_1}$, $\hat{f}_{\alpha_1}^{(1,1)}=0$ and $\hat{f}_{\beta_1}^{(1,1)}=r_{(1,1)}$. Then, we have $\hat{f}_{\alpha_1}=\hat{f}_{\alpha_1}^{(1,0)}+\hat{f}_{\alpha_1}^{(1,1)}=\tilde{f}_{\alpha_1}$ and $\hat{f}_{\beta_1}=\hat{f}_{\beta_1}^{(1,0)}+\hat{f}_{\beta_1}^{(1,1)}=\tilde{f}_{\beta_1}$. Thus, for every OD path $P$, $\hat{f}_P=\tilde{f}_P$. Thus, $\hat{f}$ is an ICWE of the network after the information expansion with $\hat{f}_{\alpha_1}^{(1,1)}=0$. So type-$(1,1)$ travelers choose the same path of the network before and after the information expansion. By the proof of Lemma \ref{lem:type_n+1}, we can derive the instance of IBP that results from excluding the type-$(1,1)$ travelers. 
\end{proof}

By Lemma~\ref{lem:type_n}, We can assume that there are $n$ types of travelers in cycle $C$. Travelers of $n$ types are re-denoted using a simplified notation: type-$i$ for each $i\in [n]$ with $(o_i,d_i)$. Type-$1$ travelers have the expanded information sets $\mathcal P_1=\{\alpha_1\}$ and $\tilde{\mathcal P}_1=\{\alpha_1,\beta_1\}$. Type-$i$ travelers have the complete information sets $\{\alpha_i,\beta_i\}$ for $i\in [n]\setminus\{1\}$. 
In the remainder of this section, we represent an IBP instance using a simplified triple notation $(C,\ell,r)$.
Suppose that $f$ and $\tilde{f}$ are ICWE flows of the network before and after the information expansion. Since $f_{\alpha_i}+f_{\beta_i}=\tilde{f}_{\alpha_i}+\tilde{f}_{\beta_i}$, we assume $\alpha_i$ satisfies $f_{\alpha_i} \ge \tilde{f}_{\alpha_i}$ and $\beta_i$ satisfies $f_{\beta_i} \le \tilde{f}_{\beta_i}$ for each $i\in [n]$. 
The subsequent lemma indicates that it is sufficient to consider the case where $f_{\beta_i}=0$ and $\tilde{f}_{\alpha_i}=0$ for each $i\in [n]$.

\begin{lemma}\label{lem:cir_change_path}
Suppose that $C$ is a cycle with $n$ OD pairs. If there is an IBP instance $(C,\ell,r)$ with $n$ types of travelers in which all travelers  except type-$1$ travelers 
 have the whole cycle $C$ as their information set, then there is an IBP instance $(C, \ell',r')$ with $f_{\beta_i}=0$ and $\tilde{f}_{\alpha_i}=0$ for each $i\in [n]$.
\end{lemma}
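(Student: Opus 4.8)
The plan is to normalize the IBP instance by absorbing the ``common'' part of the flow into the latency functions, exactly in the spirit of Lemma~\ref{lem:type_n+1}. For each $i\in[n]$ we have, before the information expansion, some flow split $f^i_{\alpha_i},f^i_{\beta_i}$ with $f^i_{\alpha_i}+f^i_{\beta_i}=r_i$, and after the expansion a split $\tilde f^i_{\alpha_i},\tilde f^i_{\beta_i}$; by our orientation convention $f_{\alpha_i}\ge\tilde f_{\alpha_i}$ and $f_{\beta_i}\le\tilde f_{\beta_i}$. The idea is that the amount $\min\{f_{\alpha_i},\tilde f_{\alpha_i}\}=\tilde f_{\alpha_i}$ of type-$i$ traffic uses $\alpha_i$ both before and after, and the amount $\min\{f_{\beta_i},\tilde f_{\beta_i}\}=f_{\beta_i}$ uses $\beta_i$ both before and after; only the ``swing'' amount $r'_i:=f_{\alpha_i}-\tilde f_{\alpha_i}=\tilde f_{\beta_i}-f_{\beta_i}$ actually changes its route. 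So I would define the new traffic rates $r'_i=f_{\alpha_i}-\tilde f_{\alpha_i}$ for each $i$, and the new latency functions by shifting the argument to account for the ``frozen'' traffic: for every edge $e\in E(C)$, set
\[
\ell'_e(x)=\ell_e\!\Big(x+\sum_{i:\,e\in\alpha_i}\tilde f_{\alpha_i}+\sum_{i:\,e\in\beta_i}f_{\beta_i}\Big).
\]
Each $\ell'_e$ is still nonnegative and nondecreasing, so $(C,\ell',r')$ is a legitimate instance of the same combinatorial type (type-$i$ travelers for $i\ge2$ informed, type-$1$ with the same expansion $\{\alpha_1\}\to\{\alpha_1,\beta_1\}$).

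Next I would define the candidate ICWE flows for the new instance: let $f'$ route all $r'_i$ type-$i$ travelers on $\alpha_i$, and let $\tilde f'$ route all $r'_i$ type-$i$ travelers on $\beta_i$. The key computation is that for every edge $e$,
\[
f'_e=\sum_{i:\,e\in\alpha_i}r'_i=\sum_{i:\,e\in\alpha_i}(f_{\alpha_i}-\tilde f_{\alpha_i}),
\qquad
f_e=\sum_{i:\,e\in\alpha_i}f_{\alpha_i}+\sum_{i:\,e\in\beta_i}f_{\beta_i},
\]
and the shift in $\ell'_e$ is exactly $\sum_{i:e\in\alpha_i}\tilde f_{\alpha_i}+\sum_{i:e\in\beta_i}f_{\beta_i}$, so $f'_e$ plus the shift equals $f_e$; hence $\ell'_e(f'_e)=\ell_e(f_e)$. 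Symmetrically, $\tilde f'_e=\sum_{i:e\in\beta_i}r'_i=\sum_{i:e\in\beta_i}(\tilde f_{\beta_i}-f_{\beta_i})$, and $\tilde f'_e$ plus the same shift equals $\sum_{i:e\in\alpha_i}\tilde f_{\alpha_i}+\sum_{i:e\in\beta_i}\tilde f_{\beta_i}=\tilde f_e$ (using $e\in\alpha_i$ or $e\in\beta_i$ for each $i$ since $\alpha_i\cup\beta_i=C$), so $\ell'_e(\tilde f'_e)=\ell_e(\tilde f_e)$. Therefore $f'$ and $\tilde f'$ induce exactly the same edge latencies as $f$ and $\tilde f$ respectively, which immediately gives that $f'$ is an ICWE flow of $(C,\ell',r')$ before the expansion and $\tilde f'$ is an ICWE flow after: the equilibrium inequalities for each traveler type are inherited verbatim from those of $f,\tilde f$ because $\ell'_{\alpha_i}(f')=\ell_{\alpha_i}(f)$, $\ell'_{\beta_i}(f')=\ell_{\beta_i}(f)$, and likewise for $\tilde f'$. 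Finally, $\ell_1(\tilde f')=\tilde\ell_{\beta_1}=\ell_{\beta_1}(\tilde f)=\ell_1(\tilde f)>\ell_1(f)=\ell_{\alpha_1}(f)=\ell_{\alpha_1}(f')=\ell_1(f')$, so $(C,\ell',r')$ is an IBP instance. By construction $f'_{\beta_i}=0$ and $\tilde f'_{\alpha_i}=0$ for each $i$, as required.

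The only genuine subtlety — the analogue of the ``main obstacle'' — is verifying that $f'$ and $\tilde f'$ really are \emph{equilibrium} flows for the shifted latencies, i.e.\ that the shift does not disturb the comparisons between $\ell_{\alpha_i}$ and $\ell_{\beta_i}$. This is where one must be careful: the shift added to edge $e$ depends on which paths contain $e$, so a priori $\ell'_{\alpha_i}(f')$ need not equal $\ell_{\alpha_i}(f)$. The resolution is precisely the identity $f'_e+(\text{shift}_e)=f_e$ established above, which holds edge-by-edge and hence path-by-path; once that is in hand every equilibrium condition transfers mechanically. A secondary check is that $r'_i\ge0$ for all $i$, which is guaranteed by the orientation convention $f_{\alpha_i}\ge\tilde f_{\alpha_i}$; if some $r'_i=0$ that type of traveler simply disappears (its route is frozen), which is harmless and in fact only simplifies the instance. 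I do not expect any of these steps to require real work beyond bookkeeping — the lemma is essentially a change-of-variables repackaging of the instance.
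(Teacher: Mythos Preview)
Your proof is correct and follows essentially the same approach as the paper: define $r'_i=f_{\alpha_i}-\tilde f_{\alpha_i}$, shift each edge latency by the frozen traffic $\sum_{i:e\in\alpha_i}\tilde f_{\alpha_i}+\sum_{i:e\in\beta_i}f_{\beta_i}$, and verify that the all-$\alpha$ and all-$\beta$ flows reproduce the original edge latencies and hence the ICWE conditions. Your writeup is in fact more careful than the paper's (which states the same construction tersely and leaves the edge-by-edge identity implicit).
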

\begin{proof}
Let $r'_i=f_{\alpha_i}-\tilde{f}_{\alpha_i}=\tilde{f}_{\beta_i}-f_{\beta_i}$ and $\ell'_e(x)=\ell_e(x+\sum_{e\in\alpha_i}\tilde{f}_{\alpha_i}+\sum_{e\in\beta_i}f_{\beta_i})$ for each $i\in[n]$ and $e\in E(C)$. Let $f'_{\alpha_i}=\tilde{f}'_{\beta_i}=r'_i$ and $f'_{\beta_i}=\tilde{f}'_{\alpha_i}=0$. We have $\ell_e(f_e)=\ell'_e(f'_e)$ and $ \ell_e(\tilde{f}_e)=\ell'_e(\tilde{f}'_e)$ for each $e\in E(C)$. Then we have $f'$  and $\tilde{f}'$ are ICWE flows of $(G,\ell', r')$ before and after the information expansion. Since $(G, \ell, r)$ is an IBP instance, $(G, \ell', r')$ is an IBP instance with $f_{\beta_i}=0$ and $\tilde{f}_{\alpha_i}=0$ for each $i\in [n]$.
\end{proof}

By Lemmas~\ref{lem:type_n} and \ref{lem:cir_change_path}, we have that if there is an IBP instance $(C, \ell, r)$ with $n$ OD pairs, then there must be an IBP instance with $n$ types of travelers and each type of  travelers chooses different paths before and after the information expansion. So we only need to consider this situation: $C$ is a cycle network with $n$ OD pairs and there are $n$ types of travelers. For each $i\in [n]$, type-$i$ travelers' OD pair is $(o_i, d_i)$ and all of them choose $o_i$-$d_i$ path $\alpha_i$ before the information expansion and $\beta_i$ after the information expansion. 

Recall that $f$ and $\tilde{f}$ are ICWE flows before and after the information expansion. Simplified notations: 
\begin{itemize}
\item
For each edge $e$, $\ell_e=\ell_e(f_e)$ and $\tilde{\ell}_e=\ell_e(\tilde{f}_e)$.
\item
For each $\alpha_i$, $\ell_{\alpha_i}=\ell_{\alpha_i}(f)$ and $\tilde{\ell}_{\alpha_i}=\ell_{\alpha_i}(\tilde{f})$. 
\item
For each $\beta_i$, $\ell_{\beta_i}=\ell_{\beta_i}(f)$ and $\tilde{\ell}_{\beta_i}=\ell_{\beta_i}(\tilde{f})$. 
\end{itemize}

Under the circumstances where IBP occurs, we establish the following inequalities, with the explanation for each detailed after the colon.
\begin{enumerate}
\item
$\ell_{\beta_1}<\ell_{\alpha_1}$: Otherwise, after the information expansion, the path choices of type-$1$ travelers will not change, which contradicts with the occurrence of IBP.
\item
$\ell_{\alpha_1}<\tilde{\ell}_{\beta_1}$: From the definition of IBP, the latency of ICWE after the information expansion is strictly greater than the latency of ICWE before the information expansion.
\item
$\ell_{\alpha_i}\le \ell_{\beta_i}$ for each $2\le i\le n$: It can be derived from $f$ being the ICWE before the information expansion and the type-$i$ travelers have complete information for each $2 \leq i \leq n$.
\item
$\tilde{\ell}_{\beta_i}\le \tilde{\ell}_{\alpha_i}$ for each $1\le i\le n$: It follows from $\tilde{f}$ being the ICWE after the information expansion and the type-$i$ travelers have complete information for each $1 \leq i \leq n$.
\end{enumerate}

From the above properties, we can deduce the subsequent key lemma:
\begin{lemma}\label{lem:key}
If IBP occurs, then $\ell_{\alpha_i}<\tilde{\ell}_{\alpha_i}$ for each $i\in [n]$.
\end{lemma}

\begin{proof}
According to the above inequalities (2) and (4), we have $\ell_{\alpha_1}<\tilde{\ell}_{\beta_1}\le\tilde{\ell}_{\alpha_1}$. By the inequalities (1), (2) and (4), we derive
$$\ell_{\alpha_1}+\ell_{\beta_1}<2\ell_{\alpha_1}<2\tilde{\ell}_{\beta_1}\le\tilde{\ell}_{\alpha_1}+\tilde{\ell}_{\beta_1}.$$
Notice that $\ell_{\alpha_i}+\ell_{\beta_i}=\ell_{\alpha_1}+\ell_{\beta_1}$ and $\tilde{\ell}_{\alpha_i}+\tilde{\ell}_{\beta_i}=\tilde{\ell}_{\alpha_1}+\tilde{\ell}_{\beta_1}$ for each $2\le i\le n$. According to the inequalities (3) and (4), for each $2\le i\le n$, we have 
$$\ell_{\alpha_i}\le\frac12(\ell_{\alpha_i}+\ell_{\beta_i})=\frac12(\ell_{\alpha_1}+\ell_{\beta_1})<\frac12(\tilde{\ell}_{\alpha_1}+\tilde{\ell}_{\beta_1})=\frac12(\tilde{\ell}_{\alpha_i}+\tilde{\ell}_{\beta_i})\le\tilde{\ell}_{\alpha_i},$$
which completes the proof.
\end{proof}

For the traffic rate $r_i$ of type-$i$ travelers, the following lemma is established.
\begin{lemma}\label{lem:r_i}
If IBP occurs, then $r_i<\sum_{j\neq i}r_j$ for each $i\in [n]$.
\end{lemma}
\begin{proof}
Suppose there exists $i\in [n]$ such that $r_i\ge\sum_{j\neq i}r_j$. Consider the path $\alpha_i$. For arbitrary edge $e\in\alpha_i$, $f_e\ge r_i$ and $\tilde{f}_e\le\sum_{j\neq  i}r_j$ hold, which means $f_e\ge \tilde{f}_e$ for each $e\in\alpha_i$. Due to the monotonicity of the latency function, $\ell_e\ge \tilde{\ell}_e$ for each $e\in\alpha_i$. Then $\ell_{\alpha_i}=\sum_{e\in\alpha_i}\ell_e\ge\sum_{e\in\alpha_i}\tilde{\ell}_e=\tilde{\ell}_{\alpha_i}$, which contradicts with the result of Lemma \ref{lem:key}.
\end{proof}

\begin{lemma}\label{lem:2od}
A cycle $C$ with two OD pairs is immune to IBP.
\end{lemma}
\begin{proof}
If IBP occurs, by Lemma \ref{lem:r_i}, the inequalities $r_1<r_2$ and $r_2<r_1$ hold simultaneously, which is a contradiction.
\end{proof}

\begin{lemma}\label{lem:union_cycle}
If there is an IBP instance $(C,\ell,r)$ with $n~(\ge 3)$ types of travelers, which satisfies one of the following conditions:
\begin{enumerate}
\item
$\exists~i,j$, $\beta_i\cup\beta_j=C$;
\item
$\exists~i\neq1,j\neq1$, $\alpha_i\cup\alpha_j=C$,
\end{enumerate}
then we can construct a new IBP instance with $n-1$ types of travelers.
\end{lemma}
\begin{proof}
Suppose there exists $i$ and $j$ with $\beta_i\cup\beta_j=C$. Since $\alpha_i\cup\beta_i=C$ and $\alpha_j\cup\beta_j=C$, then $\alpha_i\subset\beta_j$ and $\alpha_j\subset\beta_i$. Notice that $\alpha_i=\beta_j\setminus\beta_i$ and $\alpha_j=\beta_i\setminus\beta_j$. Then we have $\tilde{\ell}_{\beta_i}\le\tilde{\ell}_{\alpha_i}=\tilde{\ell}_{\beta_j\setminus\beta_i}\le\tilde{\ell}_{\beta_j}$ and $\tilde{\ell}_{\beta_j}\le\tilde{\ell}_{\alpha_j}=\tilde{\ell}_{\beta_i\setminus\beta_j}\le\tilde{\ell}_{\beta_i}$. Thus, $\tilde{\ell}_{\alpha_i}=\tilde{\ell}_{\beta_i}=\tilde{\ell}_{\alpha_j}=\tilde{\ell}_{\beta_j}$ and $\tilde{\ell}_{\beta_i\cap\beta_j}=0$. 

Assume, without loss of generality, that $r_i\le r_j$. Construct a new flow $\bar{f}$: For $k\in[n]\setminus\{i,j\}$, $\bar{f}_{\alpha_k}=0$ and $\bar{f}_{\beta_k}=r_k$; $\bar{f}_{\alpha_i}=r_i$, $\bar{f}_{\beta_i}=0$, $\bar{f}_{\alpha_j}=r_i$ and $\bar{f}_{\beta_j}=r_j-r_i$. Thus $\bar{f}_e=\tilde{f}_e$ for any $e\in \alpha_i\cup\alpha_j$ and $\bar{f}_e\le\tilde{f}_e$ for any $e\in\beta_i\cap\beta_j$. Then, we derive that $\ell_e(\bar{f}_e)=\tilde{\ell}_e$ for any $e\in C$. Therefore, $\bar{f}$ is an ICWE flow after the information expansion. Then, we know that type-$i$ travelers choose the same path of the network before and after the information expansion. If $i=1$, then $\bar{f}$ is an ICWE flow in the network before the information expansion. Then we derive that $\ell_{\alpha_1}=\ell_{\alpha_1}(\bar{f})=\tilde{\ell}_{\alpha_1}$, which contradicts with Lemma \ref{lem:key}. Thus, $i\neq1$. By the proof of Lemma \ref{lem:type_n+1}, we can derive the instance of IBP with $n-1$ types travelers. 

{The proof of the situation $\exists~i\neq1,j\neq1$, $\alpha_i\cup\alpha_j=C$ is similar.}

Suppose there exists $i\neq1$ and $j\neq1$ with $\alpha_i\cup\alpha_j=C$. Since $\alpha_i\cup\beta_i=C$ and $\alpha_j\cup\beta_j=C$, then $\beta_i\subset\alpha_j$ and $\beta_j\subset\alpha_i$. Notice that $\beta_i=\alpha_j\setminus\alpha_i$ and $\beta_j=\alpha_i\setminus\alpha_j$. Then we have $\ell_{\alpha_i}\le\ell_{\beta_i}=\ell_{\alpha_j\setminus\alpha_i}\le\ell_{\alpha_j}$ and $\ell_{\alpha_j}\le\ell_{\beta_j}=\ell_{\alpha_i\setminus\alpha_j}\le\ell_{\alpha_i}$. Thus, $\ell_{\beta_i}=\ell_{\alpha_i}=\ell_{\beta_j}=\ell_{\alpha_j}$ and $\ell_{\alpha_i\cap\alpha_j}=0$. 

Assume, without loss of generality, that $r_i\le r_j$. Construct a new flow $\hat{f}$: For $k\in[n]\setminus\{i,j\}$, $\hat{f}_{\alpha_k}=r_k$ and $\hat{f}_{\beta_k}=0$; $\hat{f}_{\alpha_i}=0$, $\hat{f}_{\beta_i}=r_i$, $\hat{f}_{\alpha_j}=r_j-r_i$ and $\hat{f}_{\beta_j}=r_i$. Thus $\hat{f}_e=f_e$ for any $e\in \beta_i\cup\beta_j$ and $\hat{f}_e\le f_e$ for any $e\in\alpha_i\cap\alpha_j$. Then, we derive that $\ell_e(\hat{f}_e)=\ell_e$ for any $e\in C$. Therefore, $\hat{f}$ is an ICWE flow before the information expansion. Then, we know that type-$i$ travelers choose the same path of the network before and after the information expansion. By the proof of Lemma \ref{lem:type_n+1}, we can derive the instance of IBP with $n-1$ types travelers. 
\end{proof}

\begin{lemma}\label{lem:3od}
A cycle $C$ with three OD pairs is immune to IBP.
\end{lemma}

\begin{proof}%[Proof of Lemma \ref{lem:3od}]
Assume that IBP occurs by contradiction. We will discuss the following scenarios.
\begin{enumerate}
\item
There exist $i$ and $j$ such that $\alpha_i\subset \alpha_j$.

For any $e\in\alpha_i$, according to Lemma \ref{lem:r_i}, we derive that $f_e\ge r_i+r_j>r_k\ge\tilde{f}_e$ where $k=\{1,2,3\}\setminus\{i,j\}$. Then, it follows that $\ell_{\alpha_i}\ge\tilde{\ell}_{\alpha_i}$, which contradicts with Lemma \ref{lem:key}.
\item
There exist $i$ and $j$ such that $\alpha_i\cap\alpha_j=\emptyset$.

Then we have $\beta_i\cup\beta_j=C$. By Lemma \ref{lem:union_cycle}, we can construct a new IBP instance with $2$ types of travelers, which contradicts with Lemma \ref{lem:2od}.
\item
There exist $i$ and $j$ such that $\alpha_i\cup\alpha_j=C$.

If $\{i,j\}=\{2,3\}$, according to Lemma \ref{lem:union_cycle}, we can construct a new IBP instance with $2$ types of travelers, which contradicts with Lemma \ref{lem:2od}.

Otherwise, suppose that $j>i=1$ and let $k=\{2,3\}\setminus\{j\}$. For any $e\in\alpha_k$, according to Lemma \ref{lem:r_i}, we deduce that $f_e\ge r_k+\min\{r_1,r_j\}>\max\{r_1,r_j\}\ge \tilde{f}_e$. Then, we show that $\ell_{\alpha_k}\ge\tilde{\ell}_{\alpha_k}$, which contradicts with Lemma \ref{lem:key}.
\item
For any $i$ and $j$ such that $\alpha_i\setminus\alpha_j\neq\emptyset$, $\alpha_i\cap\alpha_j\neq\emptyset$ and $\alpha_i\cup\alpha_j\ne C$.

\begin{figure}[h]
\begin{center}
\includegraphics[scale=0.36]{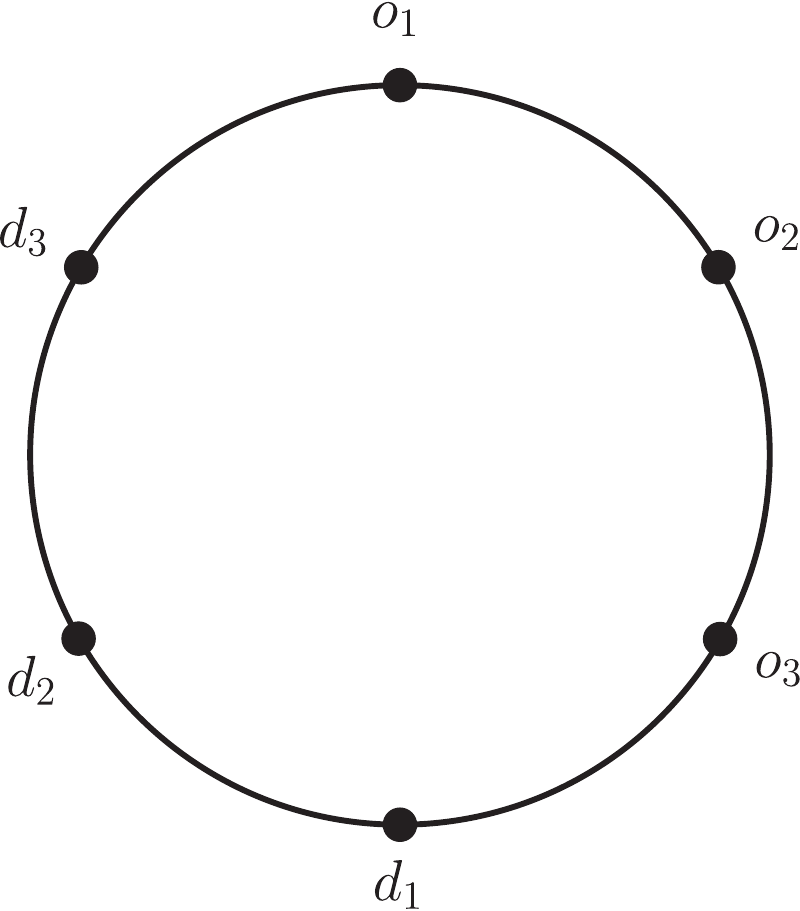}
\caption{The symmetric structured network with three OD pairs}\label{fig:3od_0}
\end{center}
\end{figure}

In this scenario, the network can only consist of the following structure with three OD pairs as shown in Figure~\ref{fig:3od_0}. In such a network structure, due to symmetry, we can assume that $\alpha_1$ represents the clockwise direction from $o_1$ to $d_1$. There are four possible classifications for the paths of $\alpha_2$ and $\alpha_3$. In the following three cases as shown in Figure~\ref{fig:3od_3cases}, for any $e\in\alpha_i$, according to Lemma \ref{lem:r_i}, we derive that $f_e\ge r_i+\min\{r_j,r_k\}>\max\{r_j,r_k\}\ge\tilde{f}_e$ where $\{i,j,k\}=\{1,2,3\}$. Then, it follows that $\ell_{\alpha_i}\ge\tilde{\ell}_{\alpha_i}$, which contradicts with Lemma \ref{lem:key}. The last case is shown in Figure~\ref{fig:3od_4}. Since $\ell_{\alpha_2}\le\ell_{\beta_2}$ and $\ell_{\alpha_3}\le\ell_{\beta_3}$, $\ell_{e_1}+\ell_{e_5}+\ell_{e_6}\le\ell_{e_2}+\ell_{e_3}+\ell_{e_4}$ and $\ell_{e_3}+\ell_{e_4}+\ell_{e_5}\le\ell_{e_1}+\ell_{e_2}+\ell_{e_6}$. So $\ell_{e_5}\le\ell_{e_2}$. Similarly, since $\tilde{\ell}_{\beta_2}\le\tilde{\ell}_{\alpha_2}$ and $\tilde{\ell}_{\beta_3}\le\tilde{\ell}_{\alpha_3}$, we show that $\tilde{\ell}_{e_2}\le\tilde{\ell}_{e_5}$. According to Lemma \ref{lem:r_i}, $f_{e_5}=r_2+r_3>r_1=\tilde{f}_{e_5}$, which means that $\ell_{e_5}\ge\tilde{\ell}_{e_5}$. Similarly, we reveal that $\ell_{e_1}\ge\tilde{\ell}_{e_1}$ and $\ell_{e_3}\ge\tilde{\ell}_{e_3}$. By Lemma \ref{lem:key}, $\ell_{e_1}+\ell_{e_2}+\ell_{e_3}=\ell_{\alpha_1}<\tilde{\ell}_{\alpha_1}=\tilde{\ell}_{e_1}+\tilde{\ell}_{e_2}+\tilde{\ell}_{e_3}$. Then we deduce that $\tilde{\ell}_{e_2}>\ell_{e_2}$. Combining the above inequalities, $\tilde{\ell}_{e_5}\ge\tilde{\ell}_{e_2}>\ell_{e_2}\ge\ell_{e_5}\ge\tilde{\ell}_{e_5}$, which is a contradiction.
\end{enumerate}
\begin{figure}[h]
\centering
\begin{subfigure}[b]{0.3\textwidth}
\centering
\includegraphics[width=\textwidth, trim = 0cm 0cm 0cm 0cm, clip]{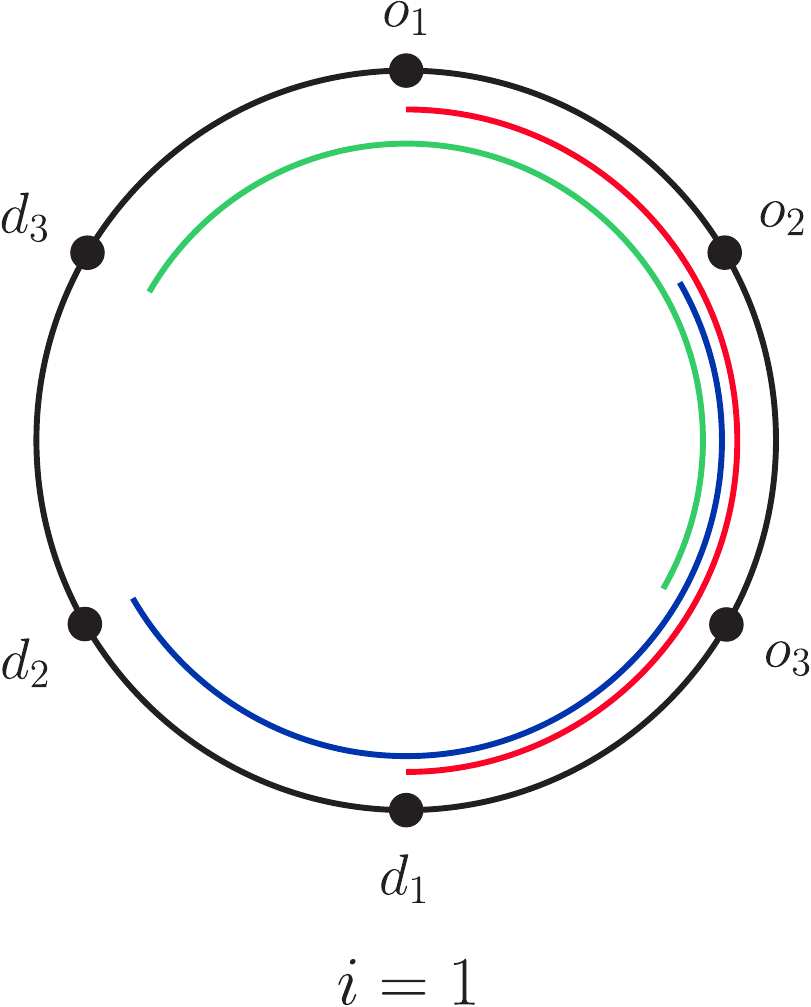}
\end{subfigure}
\hspace{2mm}
\begin{subfigure}[b]{0.3\textwidth}
\centering
\includegraphics[width=\textwidth, trim = 0cm 0cm 0cm 0cm, clip]{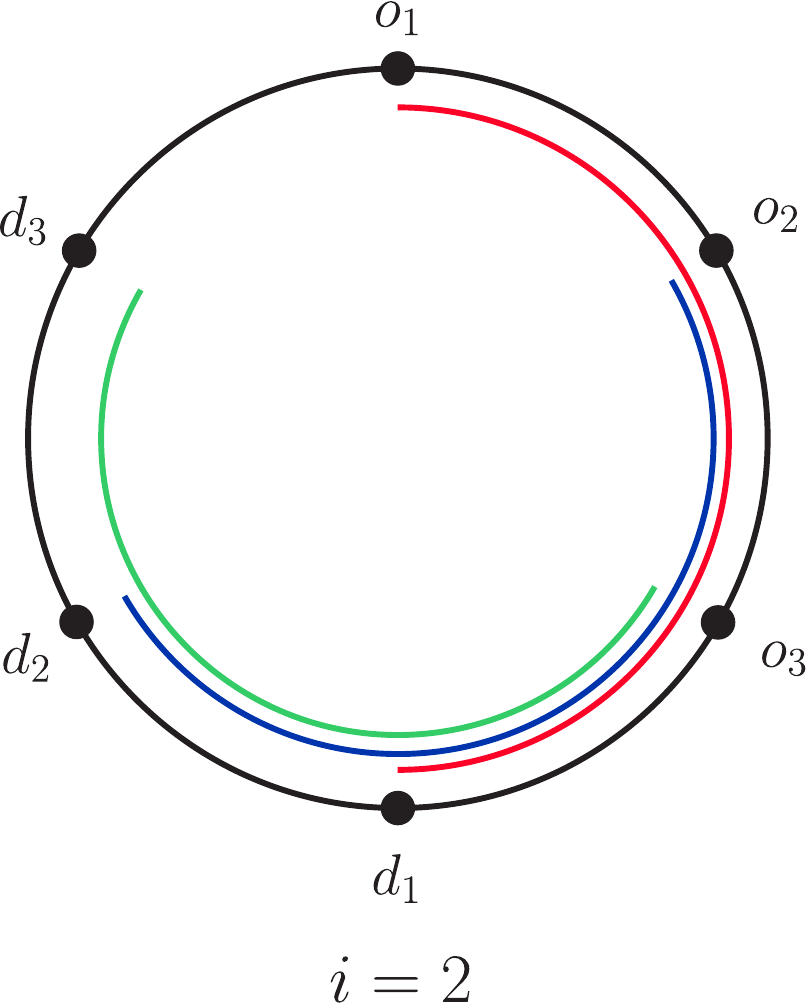}
\end{subfigure}
\hspace{2mm}
\begin{subfigure}[b]{0.3\textwidth}
\centering
\includegraphics[width=\textwidth, trim = 0cm 0cm 0cm 0cm, clip]{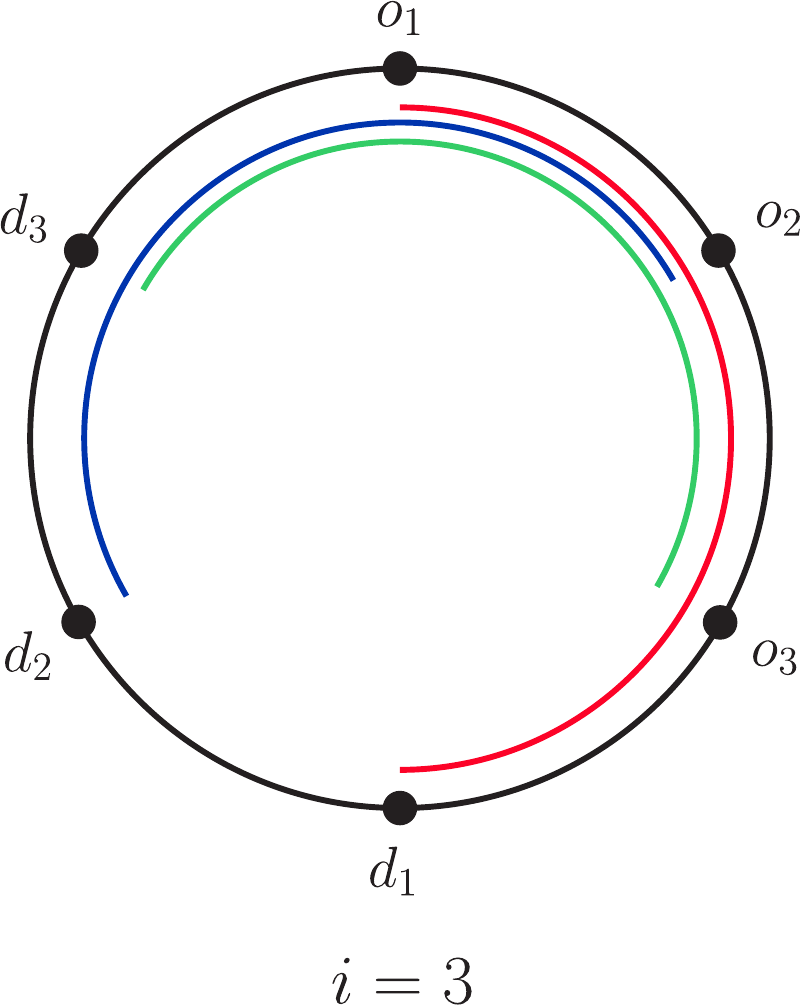}
\end{subfigure}
\caption{Three cases of the symmetric structured network with three OD pairs}
\label{fig:3od_3cases}
\end{figure}
\begin{figure}[h]
\begin{center}
\includegraphics[scale=0.36]{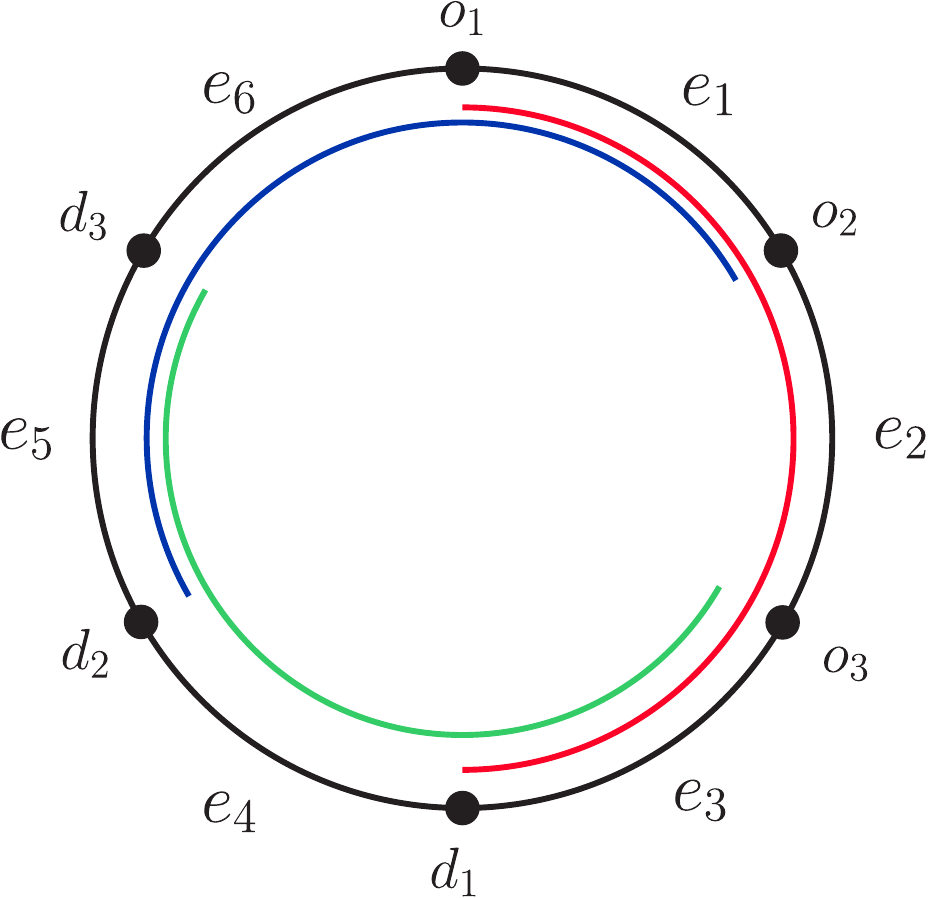}
\caption{The fourth case of the symmetric structured network with three OD pairs}\label{fig:3od_4}
\end{center}
\end{figure}
Therefore, a cycle with three OD pairs is IBP-free.
\end{proof}

Similar to the fourth case in the proof of Lemma \ref{lem:3od}, we consider a structure on a cycle with $n$ OD pairs: a completely symmetric distribution of terminal vertices, as shown in Figure~\ref{fig:sym-n-od}, which satisfies the following conditions:
For any $i$ and $j$ such that $\alpha_i\setminus\alpha_j\neq\emptyset$, $\alpha_i\cap\alpha_j\neq\emptyset$ and $\alpha_i\cup\alpha_j\ne C$. 

\begin{figure}[h]
\begin{center}
\includegraphics[scale=0.36]{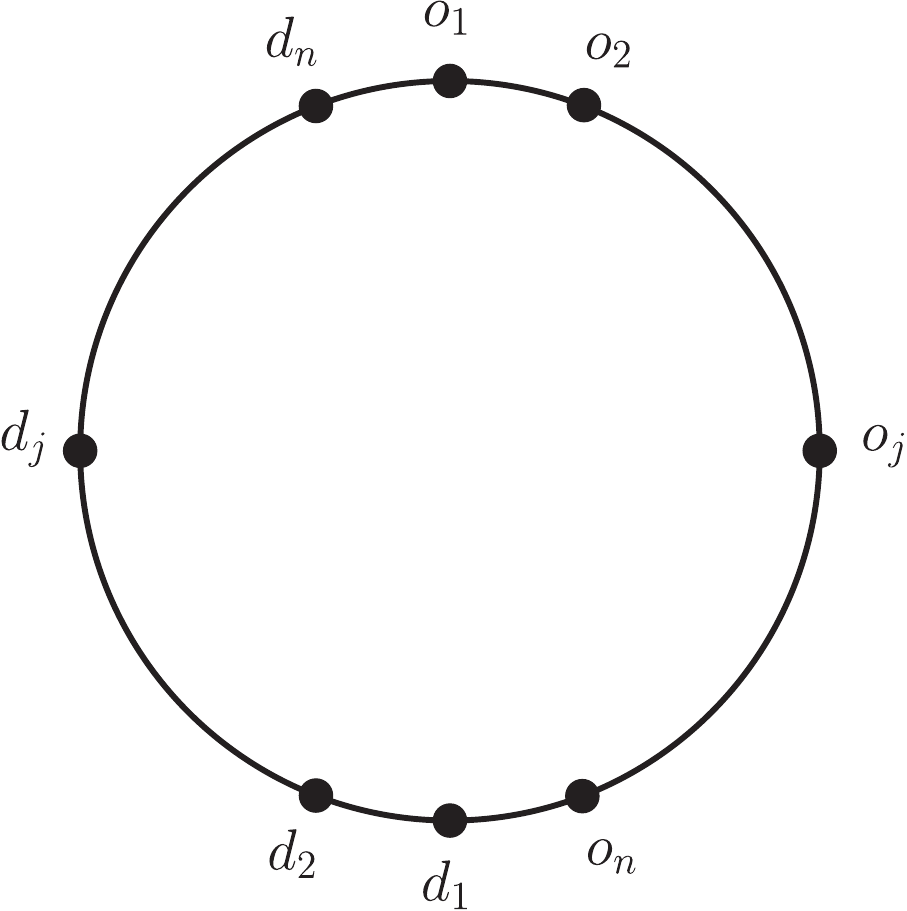}
\caption{The symmetric structured network with $n$ OD pairs}\label{fig:sym-n-od}
\end{center}
\end{figure}

We can represent the paths of type-$i$ travelers by defining an index as follows: For any $i\in[n]$, define 
\[
\sigma(i)=\begin{cases}
1 & \text{$\alpha_i$ is the clockwise path from $o_i$ to $d_i$.}\\
-1 & \text{$\alpha_i$ is the counterclockwise path from $o_i$ to $d_i$.}\\
\end{cases}
\]

Without loss of generality, we assume that $\sigma(1)=1$.

\begin{lemma}\label{lem:4equal}
Consider $n\ge3$ and any $i\neq1,j\neq1$. Let $e_1=o_io_j$, $e_2=d_id_j$ and $r=\sum_{i=1}^nr_i$.
\begin{enumerate}
\item
If $\sigma(i)=1$, $\sigma(j)=-1$ and  $f_{e_1}\ge\frac{r}{2}\ge f_{e_2}$, then 
$\ell_{e_1}=\ell_{e_2}=\tilde{\ell}_{e_1}=\tilde{\ell}_{e_2}$.
\item
If $\sigma(i)=-1$, $\sigma(j)=1$ and $f_{e_1}\le\frac{r}{2}\le f_{e_2}$, then 
$\ell_{e_1}=\ell_{e_2}=\tilde{\ell}_{e_1}=\tilde{\ell}_{e_2}$.
\end{enumerate}
\end{lemma}

\begin{proof}%[Proof of Lemma \ref{lem:4equal}]
1. If $\sigma(i)=1$ and $\sigma(j)=-1$, then $\alpha_i\cap\alpha_j=\{e_1\}$ and $C\setminus(\alpha_i\cup\alpha_j)=\{e_2\}$. We deduce that $\ell_{\alpha_i}+\ell_{\alpha_j}=\sum_{e\in C}\ell_e+\ell_{e_1}-\ell_{e_2}$ and $\ell_{\beta_i}+\ell_{\beta_j}=\sum_{e\in C}\ell_e-\ell_{e_1}+\ell_{e_2}$. Since $\ell_{\alpha_i}\le\ell_{\beta_i}$ and $\ell_{\alpha_j}\le\ell_{\beta_j}$, $\sum_{e\in C}\ell_e+\ell_{e_1}-\ell_{e_2}\le\sum_{e\in C}\ell_e-\ell_{e_1}+\ell_{e_2}$. Thus, we have $\ell_{e_1}\le\ell_{e_2}$. Similarly, we derive $\tilde{\ell}_{e_1}\ge\tilde{\ell}_{e_2}$. As $f_{e_1}\ge\frac{r}{2}\ge\tilde{f}_{e_1}$, it follows $\ell_{e_1}\ge\tilde{\ell}_{e_1}$. As $f_{e_2}\le\frac{r}{2}\le\tilde{f}_{e_2}$, it follows $\ell_{e_2}\le\tilde{\ell}_{e_2}$. Combining the above inequalities, we obtain that $\ell_{e_1}=\ell_{e_2}=\tilde{\ell}_{e_1}=\tilde{\ell}_{e_2}$.

2. If $\sigma(i)=-1$ and $\sigma(j)=1$, then $\alpha_i\cap\alpha_j=\{e_2\}$ and $C\setminus(\alpha_i\cup\alpha_j)=\{e_1\}$. We deduce that $\ell_{\alpha_i}+\ell_{\alpha_j}=\sum_{e\in C}\ell_e+\ell_{e_2}-\ell_{e_1}$ and $\ell_{\beta_i}+\ell_{\beta_j}=\sum_{e\in C}\ell_e-\ell_{e_2}+\ell_{e_1}$. Since $\ell_{\alpha_i}\le\ell_{\beta_i}$ and $\ell_{\alpha_j}\le\ell_{\beta_j}$, $\sum_{e\in C}\ell_e+\ell_{e_2}-\ell_{e_1}\le\sum_{e\in C}\ell_e-\ell_{e_2}+\ell_{e_1}$. Thus, we have $\ell_{e_2}\le\ell_{e_1}$. Similarly, we derive $\tilde{\ell}_{e_2}\ge\tilde{\ell}_{e_1}$. As $f_{e_1}\le\frac{r}{2}\le\tilde{f}_{e_1}$, it follows $\ell_{e_1}\le\tilde{\ell}_{e_1}$. As $f_{e_2}\ge\frac{r}{2}\ge\tilde{f}_{e_2}$, it follows $\ell_{e_2}\ge\tilde{\ell}_{e_2}$. Combining the above inequalities, we obtain that $\ell_{e_1}=\ell_{e_2}=\tilde{\ell}_{e_1}=\tilde{\ell}_{e_2}$.
\end{proof}

\begin{lemma}\label{lem:4equal_2}
Consider $n\ge3$ and any $i\neq1,j\neq1$. Let $e_1=o_io_j$, $e_2=d_id_1$, $e_3=d_1d_j$ and $r=\sum_{i=1}^nr_i$.
\begin{enumerate}
\item
If $\sigma(i)=1$, $\sigma(j)=-1$,   $f_{e_1}\ge\frac{r}{2}$ and $f_{e_2},~f_{e_3}\le\frac{r}{2}$, then 
$\ell_{e_1}=\tilde{\ell}_{e_1}=\ell_{e_2}+\ell_{e_3}=\tilde{\ell}_{e_2}+\tilde{\ell}_{e_3}$.
\item
If $\sigma(i)=-1$, $\sigma(j)=1$, $f_{e_1}\le\frac{r}{2}$ and $f_{e_2},~f_{e_3}\ge\frac{r}{2}$, then 
$\ell_{e_1}=\tilde{\ell}_{e_1}=\ell_{e_2}+\ell_{e_3}=\tilde{\ell}_{e_2}+\tilde{\ell}_{e_3}$.
\end{enumerate}
\end{lemma}

\begin{proof}%[Proof of Lemma \ref{lem:4equal_2}]
1. If $\sigma(i)=1$ and $\sigma(j)=-1$, then $\alpha_i\cap\alpha_j=\{e_1\}$ and $C\setminus(\alpha_i\cup\alpha_j)=\{e_2,e_3\}$. We deduce that $\ell_{\alpha_i}+\ell_{\alpha_j}=\sum_{e\in C}\ell_e+\ell_{e_1}-\ell_{e_2}-\ell_{e_3}$ and $\ell_{\beta_i}+\ell_{\beta_j}=\sum_{e\in C}\ell_e-\ell_{e_1}+\ell_{e_2}+\ell_{e_3}$. Since $\ell_{\alpha_i}\le\ell_{\beta_i}$ and $\ell_{\alpha_j}\le\ell_{\beta_j}$, $\sum_{e\in C}\ell_e+\ell_{e_1}-\ell_{e_2}-\ell_{e_3}\le\sum_{e\in C}\ell_e-\ell_{e_1}+\ell_{e_2}+\ell_{e_3}$. Thus, we have $\ell_{e_1}\le\ell_{e_2}+\ell_{e_3}$. Similarly, we derive $\tilde{\ell}_{e_1}\ge\tilde{\ell}_{e_2}+\tilde{\ell}_{e_3}$. As $f_{e_1}\ge\frac{r}{2}\ge\tilde{f}_{e_1}$, it follows $\ell_{e_1}\ge\tilde{\ell}_{e_1}$. As $f_{e_2}\le\frac{r}{2}\le\tilde{f}_{e_2}$ and $f_{e_3}\le\frac{r}{2}\le\tilde{f}_{e_3}$, it follows $\ell_{e_2}+\ell_{e_3}\le\tilde{\ell}_{e_2}+\tilde{\ell}_{e_3}$. Combining the above inequalities, we obtain that $\ell_{e_1}=\tilde{\ell}_{e_1}=\ell_{e_2}+\ell_{e_3}=\tilde{\ell}_{e_2}+\tilde{\ell}_{e_3}$.

2. If $\sigma(i)=-1$ and $\sigma(j)=1$, then $\alpha_i\cap\alpha_j=\{e_2,e_3\}$ and $C\setminus(\alpha_i\cup\alpha_j)=\{e_1\}$. We deduce that $\ell_{\alpha_i}+\ell_{\alpha_j}=\sum_{e\in C}\ell_e+\ell_{e_2}+\ell_{e_3}-\ell_{e_1}$ and $\ell_{\beta_i}+\ell_{\beta_j}=\sum_{e\in C}\ell_e-\ell_{e_2}-\ell_{e_3}+\ell_{e_1}$. Since $\ell_{\alpha_i}\le\ell_{\beta_i}$ and $\ell_{\alpha_j}\le\ell_{\beta_j}$, $\sum_{e\in C}\ell_e+\ell_{e_2}+\ell_{e_3}-\ell_{e_1}\le\sum_{e\in C}\ell_e-\ell_{e_2}-\ell_{e_3}+\ell_{e_1}$. Thus, we have $\ell_{e_2}+\ell_{e_3}\le\ell_{e_1}$. Similarly, we derive $\tilde{\ell}_{e_2}+\tilde{\ell}_{e_3}\ge\tilde{\ell}_{e_1}$. As $f_{e_1}\le\frac{r}{2}\le\tilde{f}_{e_1}$, it follows $\ell_{e_1}\le\tilde{\ell}_{e_1}$. As $f_{e_2}\ge\frac{r}{2}\ge\tilde{f}_{e_2}$ and $f_{e_3}\ge\frac{r}{2}\ge\tilde{f}_{e_3}$, it follows $\ell_{e_2}+\ell_{e_3}\ge\tilde{\ell}_{e_2}+\tilde{\ell}_{e_3}$. Combining the above inequalities, we obtain that $\ell_{e_1}=\tilde{\ell}_{e_1}=\ell_{e_2}+\ell_{e_3}=\tilde{\ell}_{e_2}+\tilde{\ell}_{e_3}$.
\end{proof}

Next, we introduce three edge contraction lemmas.

\begin{lemma}\label{lem:shrink_1}
Given an IBP instance $(C,\ell,r)$ with $n~(\ge 3)$ types of travelers, if there exist two edges $e_1=o_io_j$ and $e_2=d_id_j$ with $i\neq1$ and $j\neq1$ satisfying the following conditions:
\begin{enumerate}
\item
$|\alpha_k\cap\{e_1,e_2\}|=1$ for any $k\in [n]$;
\item
$\ell_{e_1}=\ell_{e_2}=\tilde{\ell}_{e_1}=\tilde{\ell}_{e_2}$,
\end{enumerate}
then we can construct a new IBP instance with $n-1$ types of travelers.
\end{lemma}

\begin{proof}%[Proof of Lemma \ref{lem:shrink_1}]
Suppose that $f$ and $\tilde{f}$ are ICWE flows of the network before and after the information expansion. By shrinking $e_1$ and $e_2$ on $C$, we obtain a resulting cycle $C'$. The new labels for the contracted vertices are denoted as $o_{i'}$ and $d_{i'}$ after shrinking $e_1$ and $e_2$, respectively. Let $\alpha_k'$ be the restricted path of $\alpha_k$ on $C'$ and $\beta_k'$ be the restricted path of $\beta_k$ on $C'$ for $k\in[n]$. We have that $\ell_{\alpha_k'}=\ell_{\alpha_k}-\ell_{e_1}$, $\ell_{\beta_k'}=\ell_{\beta_k}-\ell_{e_1}$, $\tilde{\ell}_{\alpha_k'}=\tilde{\ell}_{\alpha_k}-\ell_{e_1}$ and $\tilde{\ell}_{\beta_k'}=\tilde{\ell}_{\beta_k}-\ell_{e_1}$ for $k\in[n]$.
\begin{itemize}
\item
For type-$1$, $\tilde{\ell}_{\beta_1'}\le\tilde{\ell}_{\alpha_1'}$ and $\ell_{\alpha_1'}<\tilde{\ell}_{\beta_1'}$.
\item
For type-$k$ where $k\in[n]\setminus\{1,i,j\}$, $\tilde{\ell}_{\beta_k'}\le\tilde{\ell}_{\alpha_k'}$ and $\ell_{\alpha_k'}\le\ell_{\beta_k'}$.
\item
The type-$i$ and type-$j$ travelers are removed, and type-$i'$ travelers with $r_{i'}=r_i+r_j$ are added. The type-$i'$ travelers choose $\alpha_i'$ with $r_i$ and $\alpha_j'$ with $r_j$ before the information expansion, while they choose $\beta_i'$ with $r_i$ and $\beta_j'$ with $r_j$ after the information expansion. We have that $\tilde{\ell}_{\beta_k'}\le\tilde{\ell}_{\alpha_k'}$ and $\ell_{\alpha_k'}\le\ell_{\beta_k'}$ for $k\in\{i,j\}$.
\end{itemize}
We can conclude that $f$ and $\tilde{f}$ restricted on $C'$ are ICWE flows of $C'$ before and after the information expansion. Thus, we derive a new IBP instance with $n-1$ types of travelers.
\end{proof}

\begin{lemma}\label{lem:shrink_2}
Given an IBP instance $(C,\ell,r)$ with $n~(\ge 3)$ types of travelers, if there exist two edges $e_1=o_io_j$ and $e_2=d_id_j$ with $i\neq1$ and $j\neq1$ satisfying the following conditions:
\begin{enumerate}
\item
$|\alpha_k\cap\{e_1,e_2\}|=1$ for any $k\in [n]\setminus\{1\}$;
\item
$\ell_{e_1}=\ell_{e_2}=\tilde{\ell}_{e_1}=\tilde{\ell}_{e_2}$;
\item
$\beta_1\subset\beta_t\cup\{e_1,e_2\}$ for some $t\neq 1$ and $|\alpha_1\cap\{e_1,e_2\}|\ge1$,
\end{enumerate}
then we can construct a new IBP instance with $n-1$ types of travelers.
\end{lemma}

\begin{proof}%[Proof of Lemma \ref{lem:shrink_2}]
Suppose that $f$ and $\tilde{f}$ are ICWE flows of the network before and after the information expansion. By shrinking $e_1$ and $e_2$ on $C$, we obtain a resulting cycle $C'$. The new labels for the contracted vertices are denoted as $o_{i'}$ and $d_{i'}$ after shrinking $e_1$ and $e_2$, respectively. Let $\alpha_k'$ be the restricted path of $\alpha_k$ on $C'$ and $\beta_k'$ be the restricted path of $\beta_k$ on $C'$ for $k\in[n]$. We have that $\ell_{\alpha_k'}=\ell_{\alpha_k}-\ell_{e_1}$, $\ell_{\beta_k'}=\ell_{\beta_k}-\ell_{e_1}$, $\tilde{\ell}_{\alpha_k'}=\tilde{\ell}_{\alpha_k}-\ell_{e_1}$ and $\tilde{\ell}_{\beta_k'}=\tilde{\ell}_{\beta_k}-\ell_{e_1}$ for $k\in[n]\setminus\{1\}$. In addition, $\ell_{\alpha_1'}\le\ell_{\alpha_1}-\ell_{e_1}$ and $\tilde{\ell}_{\beta_1'}\ge\tilde{\ell}_{\beta_1}-\ell_{e_1}$.
\begin{itemize}
\item
For type-$1$, it follows that $\ell_{\alpha_1'}<\tilde{\ell}_{\beta_1'}$. Since $\beta_1\subset\beta_t\cup\{e_1,e_2\}$, $\beta_1'\subset\beta_t'$ and $\alpha_t'\subset\alpha_1'$. Thus, $\tilde{\ell}_{\beta_1'}\le\tilde{\ell}_{\beta_t'}\le\tilde{\ell}_{\alpha_t'}\le\tilde{\ell}_{\alpha_1'}$.
\item
For type-$k$ where $k\in[n]\setminus\{1,i,j\}$, $\tilde{\ell}_{\beta_k'}\le\tilde{\ell}_{\alpha_k'}$ and $\ell_{\alpha_k'}\le\ell_{\beta_k'}$.
\item
The type-$i$ and type-$j$ travelers are removed, and type-$i'$ travelers with $r_{i'}=r_i+r_j$ are added. The type-$i'$ travelers choose $\alpha_i'$ with $r_i$ and $\alpha_j'$ with $r_j$ before the information expansion, while they choose $\beta_i'$ with $r_i$ and $\beta_j'$ with $r_j$ after the information expansion. We have that $\tilde{\ell}_{\beta_k'}\le\tilde{\ell}_{\alpha_k'}$ and $\ell_{\alpha_k'}\le\ell_{\beta_k'}$ for $k\in\{i,j\}$.
\end{itemize}
We can conclude that $f$ and $\tilde{f}$ restricted on $C'$ are ICWE flows of $C'$ before and after the information expansion. Thus, we derive a new IBP instance with $n-1$ types of travelers.
\end{proof}

\begin{lemma}\label{lem:shrink_3}
Given an IBP instance $(C,\ell,r)$ with $n~(\ge 3)$ types of travelers, if there exist three edges $e_1=o_io_j$, $e_2=d_id_1$ and $e_3=d_1d_j$ with $i\neq1$ and $j\neq1$ satisfying the following conditions:
\begin{enumerate}
\item
For any $k\in [n]\setminus\{1\}$,it follows that $e_1\in\alpha_k$, $e_2,e_3\in\beta_k$ or that $e_2,e_3\in\alpha_k$, $e_1\in\beta_k$;
\item
$\ell_{e_1}=\tilde{\ell}_{e_1}=\ell_{e_2}+\ell_{e_3}=\tilde{\ell}_{e_2}+\tilde{\ell}_{e_3}$;
\item
$\beta_1\subset\beta_t\cup\{e_2,e_3\}$ for some $t\neq 1$ and $e_1\in\alpha_1$,
\end{enumerate}
then we can construct a new IBP instance with $n-1$ types of travelers.
\end{lemma}

\begin{proof}%[Proof of Lemma \ref{lem:shrink_3}]
Suppose that $f$ and $\tilde{f}$ are ICWE flows of the network before and after the information expansion. By shrinking $e_1$ and $e_2$ on $C$, we obtain a resulting cycle $C'$. The new labels for the contracted vertices are denoted as $o_{i'}$ and $d_{i'}$ after shrinking $e_1$ and $e_2$, respectively. We change the latency on edge $e_3$ to $0$ and relabel $d_{j+1}$ as $d_1$. Let $\alpha_k'$ be the restricted path of $\alpha_k$ on $C'$ and $\beta_k'$ be the restricted path of $\beta_k$ on $C'$ for $k\in[n]$. We have that $\ell_{\alpha_k'}=\ell_{\alpha_k}-\ell_{e_1}$, $\ell_{\beta_k'}=\ell_{\beta_k}-\ell_{e_1}$, $\tilde{\ell}_{\alpha_k'}=\tilde{\ell}_{\alpha_k}-\ell_{e_1}$ and $\tilde{\ell}_{\beta_k'}=\tilde{\ell}_{\beta_k}-\ell_{e_1}$ for $k\in[n]\setminus\{1\}$. In addition, $\ell_{\alpha_1'}\le\ell_{\alpha_1}-\ell_{e_1}$ and $\tilde{\ell}_{\beta_1'}\ge\tilde{\ell}_{\beta_1}-\ell_{e_1}$.
\begin{itemize}
\item
For type-$1$, it follows that $\ell_{\alpha_1'}<\tilde{\ell}_{\beta_1'}$. Since $\beta_1\subset\beta_t\cup\{e_1,e_2\}$, $\beta_1'\subset\beta_t'$ and $\alpha_t'\subset\alpha_1'$. Thus, $\tilde{\ell}_{\beta_1'}\le\tilde{\ell}_{\beta_t'}\le\tilde{\ell}_{\alpha_t'}\le\tilde{\ell}_{\alpha_1'}$.
\item
For type-$k$ where $k\in[n]\setminus\{1,i,j\}$, $\tilde{\ell}_{\beta_k'}\le\tilde{\ell}_{\alpha_k'}$ and $\ell_{\alpha_k'}\le\ell_{\beta_k'}$.
\item
The type-$i$ and type-$j$ travelers are removed, and type-$i'$ travelers with $r_{i'}=r_i+r_j$ are added. The type-$i'$ travelers choose $\alpha_i'$ with $r_i$ and $\alpha_j'$ with $r_j$ before the information expansion, while they choose $\beta_i'$ with $r_i$ and $\beta_j'$ with $r_j$ after the information expansion. We have that $\tilde{\ell}_{\beta_k'}\le\tilde{\ell}_{\alpha_k'}$ and $\ell_{\alpha_k'}\le\ell_{\beta_k'}$ for $k\in\{i,j\}$.
\end{itemize}
We can conclude that $f$ and $\tilde{f}$ restricted on $C'$ are ICWE flows of $C'$ before and after the information expansion. Thus, we derive a new IBP instance with $n-1$ types of travelers.
\end{proof}

\begin{lemma}\label{lem:ge_le}
For any $i\in\{1,\ldots,n\}$, if $\sigma(i)=1$, then $f_{o_{i-1}o_i}<f_{o_io_{i+1}}$ where $o_0=d_n$ and $o_{n+1}=d_1$. Otherwise, $f_{o_{i-1}o_i}>f_{o_io_{i+1}}$.
\end{lemma}
\begin{proof}
Notice that $f^j_{o_{i-1}o_i}=f^j_{o_io_{i+1}}$ for any $j\neq i$. If $\sigma(i)=1$, then $f^i_{o_{i-1}o_i}<f^i_{o_io_{i+1}}$. Thus, $f_{o_{i-1}o_i}=\sum_{j=1}^nf^j_{o_{i-1}o_i}<\sum_{j=1}^nf^j_{o_io_{i+1}}=f_{o_io_{i+1}}$. If $\sigma(i)=-1$, then $f^i_{o_{i-1}o_i}>f^i_{o_io_{i+1}}$. Thus, $f_{o_{i-1}o_i}=\sum_{j=1}^nf^j_{o_{i-1}o_i}>\sum_{j=1}^nf^j_{o_io_{i+1}}=f_{o_io_{i+1}}$.
\end{proof}

\begin{lemma}\label{lem:symmetric}
A cycle $C$ shown in Figure~\ref{fig:sym-n-od} with $n~(\ge3)$ OD pairs is immune to IBP.
\end{lemma}

\begin{proof}%[Proof of Lemma \ref{lem:symmetric}]
(by induction on $n$) When $C$ contains three OD pairs, the conclusion holds by Lemma \ref{lem:3od}. Assuming that the conclusion holds for cases containing fewer than $n$ OD pairs, consider an IBP instance where $C$ contains $n$ OD pairs with a completely symmetric distribution of terminal vertices. Suppose that $f$ and $\tilde{f}$ are ICWE flows of the network before and after the information expansion. Let $r=\sum_{i=1}^nr_i$. Recall that $\sigma(1)=1$. Let us discuss the values of $\sigma(2),\ldots,\sigma(n)$ below:

\noindent{\bf Case 1}. $\sigma(2)=\cdots=\sigma(n)=1$.\\
We derive that $f_{o_nd_1}=\sum_{i=1}^nr_i=r$.
According to Lemma \ref{lem:r_i},
$f_{o_{n-1}o_n}=\sum_{i=1}^{n-1}r_i>\frac{r}{2}$ and $f_{d_1d_2}=\sum_{i=2}^nr_i>\frac{r}{2}$. By Lemma \ref{lem:ge_le}, $f_{o_{i-1}o_i}<f_{o_io_{i+1}}$ and $f_{d_{i-1}d_i}>f_{d_id_{i+1}}$ for any $i\in [n]$ when $o_{0}=d_n$, $o_{n+1}=d_1$, $d_{0}=o_n$ and $d_{n+1}=o_1$.
Let $j=\min\{i\ge1~|~f_{o_io_{i+1}}\ge\frac{r}{2}\}$. Then $1\le j\le n-1$. We know that $f_{o_io_{i+1}}>\frac{r}{2}$ for $i\in\{j,\ldots,n\}$. If $j=1$, then for any $e\in\alpha_1$, $f_e\ge\frac{r}{2}$. Thus, $\ell_{\alpha_1}\ge\tilde{\ell}_{\alpha_1}$, which contradicts with Lemma \ref{lem:key}. If $j\ge2$, then $f_{o_io_{i+1}}<\frac{r}{2}$ for $i\in\{1,\ldots,j-1\}$. Thus, we derive $f_{d_id_{i+1}}>\frac{r}{2}$ for $i\in\{1,\ldots,j-1\}$. For $e\in\alpha_j$, $f_e\ge\frac{r}{2}$. Thus, $\ell_{\alpha_j}\ge\tilde{\ell}_{\alpha_j}$, which contradicts with Lemma \ref{lem:key}. 

\noindent{\bf Case 2}. $\sigma(2)=\cdots=\sigma(n)=-1$.\\
We relabel the vertices in $C$ as follows: $o_1' = d_1$, $d_1' = o_1$, $o_i' = o_{n+2-i}$, and $d_i' = d_{n+2-i}$ for $i \in \{2,\ldots,n\}$. In this case, the problem is transformed into the scenario where $\sigma(1) = \sigma(2) = \cdots = \sigma(n) = -1$, which is exactly the same as the scenario where $\sigma(1) = \sigma(2) = \cdots = \sigma(n) = 1$. Therefore, this case is proven.

\noindent{\bf Case 3}. $\sigma(2)$, $\sigma(3)$, $\ldots$, $\sigma(n)$ are not all equal and $\sigma(2)=1$.\\
Let $j=\min\{i~|~\sigma(i)=-1\}$. Then $3\le j\le n$. If $\sigma(i)=-1$ for any $i\in\{j,\ldots,n\}$, then $f_{o_{j-1}o_j}=r\ge\frac{r}{2}$. As $\sigma(j-1)=1$ and $\sigma(j)=-1$, by Lemmas \ref{lem:4equal} and \ref{lem:shrink_1}, we derive a new IBP instance with $n-1$ types of travelers, which contradicts with the induction assumption. Otherwise, we only need to consider the case $f_{o_{j-1}o_j}<\frac{r}{2}$. Let $k=\min\{i\ge j+1~|~\sigma(i)=1\}$. Then $\sigma(j)=\cdots=\sigma(k-1)=-1$. By Lemma \ref{lem:ge_le}, we deduce that $f_{o_{j-1}o_j}>f_{o_jo_{j+1}}>\cdots>f_{o_{k-1}o_k}$, which means that $f_{o_{k-1}o_k}<\frac{r}{2}$. As $\sigma(k-1)=-1$ and $\sigma(k)=1$, by Lemmas \ref{lem:4equal} and \ref{lem:shrink_1}, we derive a new IBP instance with $n-1$ types of travelers, which contradicts with the induction assumption. 

\noindent{\bf Case 4}. $\sigma(2)$, $\sigma(3)$, $\ldots$, $\sigma(n)$ are not all equal and $\sigma(2)=-1$.\\
Let $j=\min\{i\ge2~|~\sigma(i)=1\}$. Then $3\le j\le n$. If $\sigma(i)=1$ for any $i\in\{j,\ldots,n\}$, then $f_{o_{j-1}o_j}=r_1<\frac{r}{2}$ by Lemma \ref{lem:r_i}. As $\sigma(j-1)=-1$ and $\sigma(j)=1$, by Lemmas \ref{lem:4equal} and \ref{lem:shrink_1}, we derive a new IBP instance with $n-1$ types of travelers, which contradicts with the induction assumption. Otherwise, we only need to consider the case $f_{o_{j-1}o_j}>\frac{r}{2}$. Let $k=\min\{i\ge j+1~|~\sigma(i)=-1\}$. Then $\sigma(j)=\cdots=\sigma(k-1)=1$. By Lemma \ref{lem:ge_le}, we deduce that $f_{o_{j-1}o_j}<f_{o_jo_{j+1}}<\cdots<f_{o_{k-1}o_k}$, which means that $f_{o_{k-1}o_k}>\frac{r}{2}$. As $\sigma(k-1)=1$ and $\sigma(k)=-1$, by Lemmas \ref{lem:4equal} and \ref{lem:shrink_1}, we derive a new IBP instance with $n-1$ types of travelers, which contradicts with the induction assumption.
\end{proof}

Next, Let us consider a cycle with $n$ OD pairs possessing an almost symmetric distribution of terminal vertices: except for $o_1$ and $d_1$ of type-$1$ travelers, all other $o_i$ and $d_i$ are symmetric where $i\neq1$. Suppose $d_1$ is located between $d_j$ and $d_{j+1}$, where $j\in\{2,\ldots,n\}$ and $d_{n+1}=o_1$, as shown in Figure~\ref{fig:almost-sym-n-od}:

\begin{figure}[h]
\begin{center}
\includegraphics[scale=0.36]{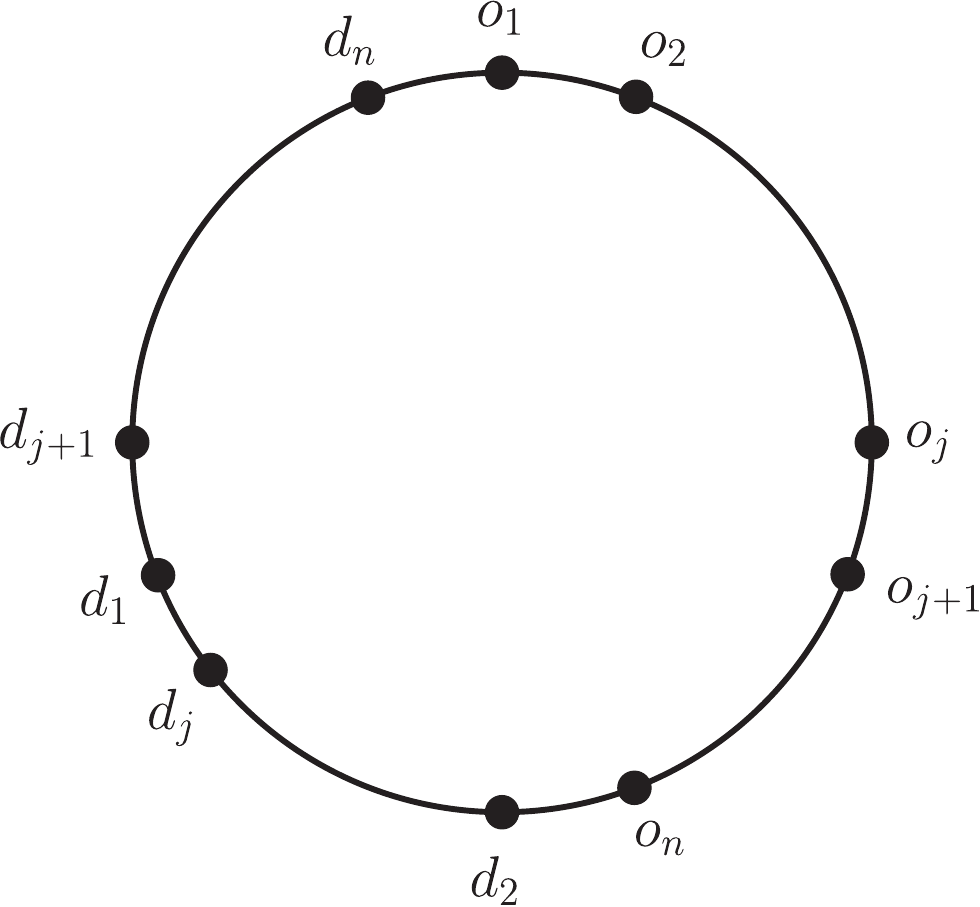}
\caption{The almost symmetric structured network with $n$ OD pairs}\label{fig:almost-sym-n-od}
\end{center}
\end{figure}

\begin{lemma}\label{lem:almost_sym}
A cycle $C$ shown in Figure~\ref{fig:almost-sym-n-od} with $n~(\ge3)$ OD pairs is immune to IBP.
\end{lemma}

\begin{proof}%[Proof of Lemma \ref{lem:almost_sym}]
(by induction on $n$) When $C$ contains three OD pairs, the conclusion holds by Lemma \ref{lem:3od}. Assuming that the conclusion holds for cases containing fewer than $n$ OD pairs, consider an IBP instance where $C$ contains $n$ OD pairs with a completely symmetric distribution of terminal vertices. Suppose that $f$ and $\tilde{f}$ are ICWE flows of the network before and after the information expansion. Let $r=\sum_{i=1}^nr_i$. 
If $\sigma(1)=-1$, since $\ell_{\alpha_1}>\ell_{\beta_1}$ and $\ell_{\alpha_2}\le\ell_{\beta_2}$, then $\sigma(2)=1$. Thus, $\alpha_1\cap\alpha_2=\emptyset$, which means that $\beta_1\cup\beta_2=C$. By Lemma \ref{lem:union_cycle}, we derive a IBP instance with $n-1$ types of travelers, which contradicts with the induction assumption. Thus, $\sigma(1)=1$. Let us discuss the values of $\sigma(2),\ldots,\sigma(n)$ below:

\noindent{\bf Case 1}. $\sigma(2)=\cdots=\sigma(n)=1$.\\
We derive that $f_{o_nd_2}=\sum_{i=1}^nr_i=r$. By Lemma \ref{lem:ge_le}, we have $f_{o_nd_2}>f_{o_{n-1}o_n}>\cdots>f_{o_1o_2}>f_{d_no_1}$ and $f_{o_nd_2}>f_{d_2d_3}>\cdots>f_{d_jd_1}>f_{d_1d_{j+1}}>\cdots>f_{d_{n-1}d_n}>f_{d_no_1}$. Let $k=\min\{i~|~f_{o_io_{i+1}}\ge\frac{r}{2}\}$. Notice that $f_{o_1o_2}=r_1<\frac{r}{2}$ and $f_{o_{n-1}o_n}=\sum_{i=1}^{n-1}r_i>\frac{r}{2}$ by Lemma \ref{lem:r_i}. Then $2\le k\le n-1$.
By the definition of $k$, we know that $f_{o_{k-1}o_k}<\frac{r}{2}$. Then if $k\neq j+1$, then $f_{d_{k-1}d_k}>\frac{r}{2}$; if $k=j+1$, then $f_{d_{k-1}d_1}>\frac{r}{2}$ and $f_{d_1d_{k}}>\frac{r}{2}$. For $e\in\alpha_k$, $f_e\ge\frac{r}{2}$. Thus, $\ell_{\alpha_k}\ge\tilde{\ell}_{\alpha_k}$, which contradicts with Lemma \ref{lem:key}. 

\noindent{\bf Case 2}. $\sigma(2)=\cdots=\sigma(n)=-1$.\\
We derive that $f_{o_1o_2}=\sum_{i=1}^nr_i=r$. By Lemma \ref{lem:ge_le}, $f_{o_1o_2}>f_{o_2o_3}>\cdots>f_{o_{n-1}o_n}>f_{o_nd_2}$ and $f_{o_1o_2}>f_{d_no_1}>\cdots>f_{d_1d_{j+1}}>f_{d_jd_1}>\cdots>f_{d_2d_3}>f_{o_nd_2}$.
Let $k=\max\{i\ge2~|~f_{o_{i-1}o_i}\ge\frac{r}{2}\}$. Notice that $f_{o_2o_3}=r-r_2>\frac{r}{2}$ and $f_{o_nd_2}=r_1<\frac{r}{2}$ by Lemma \ref{lem:r_i}. Then $3\le k\le n$. By the definition of $k$, we know that $f_{o_ko_{k+1}}<\frac{r}{2}$ where $o_{n+1}=d_2$. Then if $k\neq j$, then $f_{d_kd_{k+1}}>\frac{r}{2}$; if $k=j$, then $f_{d_kd_1}>\frac{r}{2}$ and $f_{d_1d_{k+1}}>\frac{r}{2}$. For $e\in\alpha_k$, $f_e\ge\frac{r}{2}$. Thus, $\ell_{\alpha_k}\ge\tilde{\ell}_{\alpha_k}$, which contradicts with Lemma \ref{lem:key}. 

\noindent{\bf Case 3}. $\sigma(2)$, $\sigma(3)$, $\ldots$, $\sigma(n)$ are not all equal and $\sigma(2)=\cdots=\sigma(j)=\sigma(j+1)=-1$.\\
Let $k=\min\{i\ge j+1~|~\sigma(i)=1\}$. Then $\sigma(k-1)=-1$ and $\sigma(k)=1$. If $f_{o_{k-1}o_k}\le\frac{r}{2}$, by Lemmas \ref{lem:4equal} and \ref{lem:shrink_1}, we derive a new IBP instance with $n-1$ types of travelers, which contradicts with the induction assumption. If $f_{o_{k-1}o_k}>\frac{r}{2}$, then there exists $i\ge k+1$ with $\sigma(i)=-1$. Otherwise, $f_{o_{k-1}o_k}=r_1<\frac{r}{2}$ by Lemma \ref{lem:r_i}. Let $l=\min\{i\ge k+1~|~\sigma(i)=-1\}$. Then $\sigma(l-1)=1$ and $\sigma(l)=-1$.  According to Lemma \ref{lem:ge_le}, $f_{o_{l-1}o_l}>f_{o_{k-1}o_k}>\frac{r}{2}$. By Lemmas \ref{lem:4equal} and \ref{lem:shrink_1}, we derive a new IBP instance with $n-1$ types of travelers, which contradicts with the induction assumption. 

\noindent{\bf Case 4}. $\sigma(2)$, $\sigma(3)$, $\ldots$, $\sigma(n)$ are not all equal and $\sigma(2)=\cdots=\sigma(j)=-1$, $\sigma(j+1)=1$.\\
If $f_{o_jo_{j+1}}>\frac{r}{2}$, then there exists $i\ge j+2$ with $\sigma(i)=-1$. Otherwise, $f_{o_jo_{j+1}}=r_1<\frac{r}{2}$ by Lemma \ref{lem:r_i}. Let $l=\min\{i\ge j+2~|~\sigma(i)=-1\}$. Then $\sigma(l-1)=1$ and $\sigma(l)=-1$.  According to Lemma \ref{lem:ge_le}, $f_{o_{l-1}o_l}>f_{o_jo_{j+1}}>\frac{r}{2}$. By Lemmas \ref{lem:4equal} and \ref{lem:shrink_1}, we derive a new IBP instance with $n-1$ types of travelers, which contradicts with the induction assumption. If $f_{o_jo_{j+1}}\le\frac{r}{2}$, then $f_{d_jd_1}\ge\frac{r}{2}$ and $f_{d_1d_{j+1}}\ge\frac{r}{2}$. According to the proof of Lemma \ref{lem:4equal}, we have that $\delta=\ell_{o_jo_{j+1}}=\tilde{\ell}_{o_jo_{j+1}}=\ell_{d_jd_1}+\ell_{d_1d_{j+1}}=\tilde{\ell}_{d_jd_1}+\tilde{\ell}_{d_1d_{j+1}}$. By Lemma \ref{lem:shrink_3}, we derive a new IBP instance with $n-1$ types of travelers, which contradicts with the induction assumption. 

\noindent{\bf Case 5}. $\sigma(2)=-1$, $\sigma(3)$, $\ldots$, $\sigma(n)$ are not all equal and there exists $3\le i\le j$ with $\sigma(i)=1$.\\
Let $k=\min\{3\le i\le j~|~\sigma(i)=1\}$. Then $\sigma(k-1)=-1$ and $\sigma(k)=1$. Thus, $\beta_1\subset\beta_k$. If $f_{o_{k-1}o_k}<\frac{r}{2}$, by Lemmas \ref{lem:4equal} and \ref{lem:shrink_2}, we derive a new IBP instance with $n-1$ types of travelers, which contradicts with the induction assumption. Then $f_{o_{k-1}o_k}\ge\frac{r}{2}$. Let us consider the edge $e\in\alpha_k$ in the following five cases:
\begin{enumerate}
\item
For $e=o_lo_{l+1}$ where $l\in\{k,\ldots,j-1\}$, if $\sigma(l)=-1$ and $\sigma(l+1)=1$, then $f_e>\frac{r}{2}$. Otherwise, by Lemmas \ref{lem:4equal} and \ref{lem:shrink_2}, we derive a new IBP instance with $n-1$ types of travelers, which contradicts with the induction assumption. 
\item
For $e=o_jo_{j+1}$, if $\sigma(j)=-1$ and $\sigma(j+1)=1$, then $f_e>\frac{r}{2}$. Otherwise, by Lemmas \ref{lem:4equal_2} and \ref{lem:shrink_3}, we derive a new IBP instance with $n-1$ types of travelers, which contradicts with the induction assumption.
\item
For $e=o_lo_{l+1}$ where $l\in\{j+1,\ldots,n-1\}$, if $\sigma(l)=-1$ and $\sigma(l+1)=1$, then $f_e>\frac{r}{2}$. Otherwise, by Lemmas \ref{lem:4equal} and \ref{lem:shrink_1}, we derive a new IBP instance with $n-1$ types of travelers, which contradicts with the induction assumption. 
\item
For $e=o_nd_2$, if $\sigma(n)=-1$ and $\sigma(2)=-1$, then $f_e>\frac{r}{2}$. Otherwise, by Lemmas \ref{lem:4equal_2} and \ref{lem:shrink_3}, we derive a new IBP instance with $n-1$ types of travelers, which contradicts with the induction assumption.
\item
For $e=d_ld_{l+1}$ where $l\in\{2,\ldots,k-1\}$, if $\sigma(l)=1$ and $\sigma(l+1)=-1$, then $f_e>\frac{r}{2}$. Otherwise, by Lemmas \ref{lem:4equal} and \ref{lem:shrink_2}, we derive a new IBP instance with $n-1$ types of travelers, which contradicts with the induction assumption.
\end{enumerate}
Let us take the edge with the minimum  flow in $\alpha_k$, denoted as $e^\ast$. We aim to prove that $f_{e^\ast}>\frac{r}{2}$. If $e^\ast=o_ko_{k+1}$, since $\sigma(k)=1$, then $f_{o_ko_{k+1}}>f_{o_{k-1}o_k}\ge\frac{r}{2}$. Since $\sigma(k-1)=-1$, $f_{d_{k-1}d_k}>f_{d_{k-2}d_{k-1}}$. Thus, $e^\ast\neq d_{k-1}d_k$. When $e^\ast\neq o_ko_{k+1}$ and $e^\ast\neq d_{k-1}d_k$, $e^\ast$ has edges on both sides in $\alpha_k$ and $f_{e^\ast}$ does not exceed the flow of the edges on its two sides, then $e^\ast$ must belong to one of the five cases mentioned above, which means that $f_{e^\ast}>\frac{r}{2}$. Therefore, $\ell_{\alpha_k}\ge\tilde{\ell}_{\alpha_k}$, which contradicts with Lemma \ref{lem:key}.

\noindent{\bf Case 6}. $\sigma(2)$, $\sigma(3)$, $\ldots$, $\sigma(n)$ are not all equal and $\sigma(2)=\sigma(n)=1$.\\
Then, $\beta_1\subset\beta_2$. If $f_{o_1o_2}<\frac{r}{2}$, since $\sigma(1)=1$, then $f_{d_no_1}<f_{o_1o_2}<\frac{r}{2}$. Since $\sigma(2)=1$ and $\sigma(n)=1$, by Lemmas \ref{lem:4equal_2} and \ref{lem:shrink_3}, we derive a new IBP instance with $n-1$ types of travelers, which contradicts with the induction assumption. Then $f_{o_1o_2}\ge\frac{r}{2}$. Let us consider the edge $e\in\alpha_2$ in the following three cases: 
\begin{enumerate}
\item
For $e=o_lo_{l+1}$ where $l\in\{2,\ldots,j-1\}$, if $\sigma(l)=-1$ and $\sigma(l+1)=1$, then $f_e>\frac{r}{2}$. Otherwise, by Lemmas \ref{lem:4equal} and \ref{lem:shrink_2}, we derive a new IBP instance with $n-1$ types of travelers, which contradicts with the induction assumption. 
\item
For $e=o_jo_{j+1}$, if $\sigma(j)=-1$ and $\sigma(j+1)=1$, then $f_e>\frac{r}{2}$. Otherwise, by Lemmas \ref{lem:4equal_2} and \ref{lem:shrink_3}, we derive a new IBP instance with $n-1$ types of travelers, which contradicts with the induction assumption.
\item
For $e=o_lo_{l+1}$ where $l\in\{j+1,\ldots,n-1\}$, if $\sigma(l)=-1$ and $\sigma(l+1)=1$, then $f_e>\frac{r}{2}$. Otherwise, by Lemmas \ref{lem:4equal} and \ref{lem:shrink_1}, we derive a new IBP instance with $n-1$ types of travelers, which contradicts with the induction assumption. 
\end{enumerate}
Let us take the edge with the minimum  flow in $\alpha_2$, denoted as $e^\ast$. We aim to prove that $f_{e^\ast}>\frac{r}{2}$. If $e^\ast=o_2o_3$, since $\sigma(2)=1$, then $f_{o_2o_3}>f_{o_1o_2}\ge\frac{r}{2}$. Since $\sigma(n)=1$, $f_{o_nd_2}>f_{o_{n-1}o_n}$. Thus, $e^\ast\neq o_nd_2$. When $e^\ast\neq o_2o_3$ and $e^\ast\neq o_nd_2$, $e^\ast$ has edges on both sides in $\alpha_2$ and $f_{e^\ast}$ does not exceed the flow of the edges on its two sides, then $e^\ast$ must belong to one of the three cases mentioned above, which means that $f_{e^\ast}>\frac{r}{2}$. Therefore, $\ell_{\alpha_2}\ge\tilde{\ell}_{\alpha_2}$, which contradicts with Lemma \ref{lem:key}.
 
\noindent{\bf Case 7}. $\sigma(2)=1$ and $\sigma(j)=\sigma(n)=-1$.\\
Let us renumber the vertices on $C$ as follows: 
\begin{itemize}
\item
Swap the labels of $o_1$ and $d_1$.
\item
Swap the labels of $o_k$ and $d_{j+2-k}$ for $k \in \{2, \ldots, j\}$.
\item
Swap the labels of $o_k$ and $o_{n+j+1-k}$ for $k \in \{j+1, \ldots, j+\lfloor\frac{n-j}{2}\rfloor\}$.
\item
Swap the labels of $d_k$ and $d_{n+j+1-k}$ for $k \in \{j+1, \ldots, j+\lfloor\frac{n-j}{2}\rfloor\}$.
\end{itemize}
Then $\sigma(1)=-1$ and $\sigma(2)=1$, which is equivalent to the scenario where $\sigma(1)=1$ and $\sigma(2)=-1$. The result in this scenario has been proved in Cases 2, 3, 4 and 5.  

\noindent{\bf Case 8}. $\sigma(2)=\sigma(j)=1$ and  $\sigma(j+1)=\sigma(n)=-1$.\\
Let us renumber the vertices on $C$ as follows: 
\begin{itemize}
\item
Swap the labels of $o_1$ and $d_1$.
\item
Swap the labels of $o_k$ and $d_{j+2-k}$ for $k \in \{2, \ldots, j\}$.
\item
Swap the labels of $o_k$ and $o_{n+j+1-k}$ for $k \in \{j+1, \ldots, j+\lfloor\frac{n-j}{2}\rfloor\}$.
\item
Swap the labels of $d_k$ and $d_{n+j+1-k}$ for $k \in \{j+1, \ldots, j+\lfloor\frac{n-j}{2}\rfloor\}$.
\end{itemize}
Then $\sigma(1)=\sigma(2)=\sigma(n)=-1$, which is equivalent to the scenario where $\sigma(1)=\sigma(2)=\sigma(n)=1$. The result in this scenario has been proved in Cases 1 and 6. 

\noindent{\bf Case 9}. $\sigma(2)=\sigma(j)=\sigma(j+1)=1$ and $\sigma(n)=-1$.\\
Then, $\beta_1\subset\beta_2$. We have the following inequalities:
\begin{itemize}
\item
$f_{d_{n-1}d_n}<f_{d_no_1}<f_{o_1o_2}$.
\item
$f_{o_{j-1}o_j}<f_{o_jo_{j+1}}$.
\item
$f_{d_2d_3}<f_{o_nd_2}$.
\item
$f_{d_{j+1}d_{j+2}}<f_{d_1d_j+1}<f_{d_jd_1}$.
\end{itemize}
Let us consider the edge with the minimum flow on $C$ and denote it as $e^\ast$. Based on the previous inequalities, we can conclude that $e^\ast$ does not belong to the set $\{d_no_1, o_1o_2, o_jo_{j+1}, o_nd_2, $ $d_jd_1, d_1d_{j+1}\}$. Therefore, there are only two possible cases:
\begin{itemize}
\item
For $e^\ast=o_ko_{k+1}$ where $k\in\{2,\ldots,n-1\}\setminus\{j\}$, then $\sigma(k)=-1$ and $\sigma(k+1)=1$. If $f_{e^\ast}\le\frac{r}{2}$, by Lemmas \ref{lem:4equal} and \ref{lem:shrink_2}, we derive a new IBP instance with $n-1$ types of travelers, which contradicts with the induction assumption. 
\item
For $e^\ast=d_kd_{k+1}$ where $k\in\{2,\ldots,n-1\}\setminus\{j\}$, then $\sigma(k)=1$ and $\sigma(k+1)=-1$. If $f_{e^\ast}\le\frac{r}{2}$, by Lemmas \ref{lem:4equal} and \ref{lem:shrink_2}, we derive a new IBP instance with $n-1$ types of travelers, which contradicts with the induction assumption. 
\end{itemize}
Thus, $f_{e}>\frac{r}{2}$ for any $e\in\alpha_2$. Therefore, $\ell_{\alpha_2}\ge\tilde{\ell}_{\alpha_2}$, which contradicts with Lemma \ref{lem:key}.

\noindent In conclusion, a cycle $C$ shown in Figure~\ref{fig:almost-sym-n-od} with $n~(\ge3)$ OD pairs is IBP-free.
\end{proof}

Next, we will transform a cycle containing $n$ OD pairs with the general distribution of the terminal vertices into a completely symmetric structured network as shown in Figure~\ref{fig:sym-n-od} or an almost symmetric structured network as shown in Figure~\ref{fig:almost-sym-n-od} through a finite number of steps. 
The following lemma allows us to convert the traffic rates of all types of travelers into integers.

\begin{lemma}\label{lem:r_to_int}
If there exists an IBP instance in a cycle with $n~(\ge3)$ OD pairs, then we can construct an IBP instance in a cycle with $n~(\ge3)$ OD pairs and all the traffic rates are integers.
\end{lemma}
\begin{proof}
If all the traffic rates are rational numbers, then we assume that $r_i=\frac{p_i}{q_i}$ where $p_i,q_i$ are positive integers for any $i\in[n]$. Let $Q$ be the least common multiple of $q_1,q_2,\ldots,q_n$. Then the new IBP instance have the traffic rates: $r_i'=r_i\cdot Q$ for any $i\in[n]$.  
Otherwise, suppose that $r_i=a$ is an irrational number for some $i\in[n]$. Let $A=\{e\in C~|~\ell_e=\tilde{\ell}_e\}$ and $b=\min_{e\in C\setminus A}|f_e-\tilde{f}_e|$. Since the rational points are dense on the number axis, we can choose a rational number $c$ such that $a<c<a+\frac{b}{2}$. Let $r_i'=c$ and $r_j'=r_j$ for $j\in[n]\setminus\{i\}$. Let $f'$ be the flow of the network before the information expansion where type-$i$ travelers unanimously choose the path $\alpha_i$ for any $i\in[n]$ and $\tilde{f}'$ be the flow of the network before the information expansion where type-$i$ travelers unanimously choose the path $\beta_i$ for any $i\in[n]$. Since $a<c<a+\frac{b}{2}$, we have that $|f_e-f_e'|<\frac{b}{2}$ and $|\tilde{f}_e-\tilde{f}'_e|<\frac{b}{2}$ for any $e\in C$. Then, we derive
\begin{itemize}
\item
If $f_e-\tilde{f}_e>0$, then $f_e-\tilde{f}_e\ge b$. We have that $f'_e-\tilde{f}'_e=(f_e-\tilde{f}_e)-(f_e-f'_e)-(\tilde{f}'_e-\tilde{f}_e)$ $>b-\frac{b}{2}-\frac{b}{2}=0$.
\item
If $f_e-\tilde{f}_e<0$, then $f_e-\tilde{f}_e\le -b$. We have that $f'_e-\tilde{f}'_e=(f_e-\tilde{f}_e)-(f_e-f'_e)-(\tilde{f}'_e-\tilde{f}_e)$ $<-b+\frac{b}{2}+\frac{b}{2}=0$.
\end{itemize}
Next, we construct the latency functions for any edge $e\in C$:
For $e\in A$, $\hat{\ell}_e(x)\equiv \ell_e$ where $x\in[0,\infty)$. For $e\in C\setminus A$,
\[
\hat{\ell}_e(x)=\begin{cases}
\min\{\ell_e,\tilde{\ell}_e\} & x\in[0,\min\{f'_e,\tilde{f}'_e\}],\\
\frac{\ell_e-\tilde{\ell}_e}{f_e'-\tilde{f}_e'}x+\frac{\tilde{\ell}_e\cdot f_e'-\ell_e\cdot\tilde{f}_e'}{f_e'-\tilde{f}_e'} & x\in(\min\{f'_e,\tilde{f}'_e\},\max\{f'_e,\tilde{f}'_e\}),\\
\max\{\ell_e,\tilde{\ell}_e\} & x\in[\max\{f'_e,\tilde{f}'_e\},+\infty).
\end{cases}
\]
Then, it follows that $\ell'_e(f_e')=\ell_e$ and $\ell'_e(\tilde{f}'_e)=\tilde{\ell}_e$ for any $e\in C$. Then $f'$ and $\tilde{f}'$ are ICWE flows of the network before and after the information expansion. Thus, this is a new IBP instance. 
We can repeatedly perform the above operation, converting all $r_i$ into rational numbers, and then further transform all $r_i$ into integers.
\end{proof}

The next lemma allows us to transform all integers $r_i$ into $1$.
\begin{lemma}\label{lem:r_to_1}
If there exits an IBP instance in a cycle with $n~(\ge3)$ OD pairs and all integer traffic rates, then we can construct an IBP instance in a cycle with $(\sum_{i=1}^n r_i)$ OD pairs and all the traffic rates are equal to $1$.
\end{lemma}
\begin{proof}
We divide the type-$i$ travelers into $r_i$ types of travelers: $(i,1),(i,2),$ $\ldots,(i,r_i)$. These types of travelers have a traffic rate of $1$ and their OD pairs and the path choices remain consistent with the type-$i$ travelers. On the cycle $C$, we replace the vertex $o_i$ with a path of length $r_i-1$, denoted as $o_i^1 o_i^2\cdots o_i^{r_i}$, and replace the vertex $d_i$ with a path of length $r_i-1$, denoted as $d_i^1 d_i^2\cdots d_i^{r_i}$. The latency of all newly added edges is defined as $0$.
\end{proof}

For any IBP instance with a traffic rate of $1$, we can transform it into an IBP instance of the completely symmetric structured cycle or the almost symmetric structured cycle.

\begin{lemma}\label{lem:gen_to_sym}
If there exits an IBP instance in a cycle with $n~(\ge3)$ OD pairs and all traffic rates are $1$, then we can construct an IBP instance in the completely symmetric structured cycle or the almost symmetric structured cycle.
\end{lemma}

\begin{proof}%[Proof of Lemma \ref{lem:gen_to_sym}]
Let $K=[n]\setminus\{1\}$ and $M=\{\{i,j\}~|~\alpha_j\subset\alpha_i,~i,j\in K\}$. Suppose that $M\neq\emptyset$. Take $\{i,j\}\in M$ and assume that $\alpha_j\subset\alpha_i$ as shown in Figure~\ref{fig:j-sub-i}. $o_i$, $o_j$, $d_j$, and $d_i$ divide the cycle $C$ into four paths:
\begin{itemize}
\item
$P_1$ is the path from $o_i$ clockwise to $o_j$.
\item
$P_2$ is the path from $o_j$ clockwise to $d_j$, which is actually $\alpha_j$.
\item
$P_3$ is the path from $d_j$ clockwise to $d_i$.
\item
$P_4$ is the path from $d_i$ clockwise to $o_i$, which is actually $\beta_i$.
\end{itemize}

\begin{figure}[h]
\begin{center}
\includegraphics[scale=0.36]{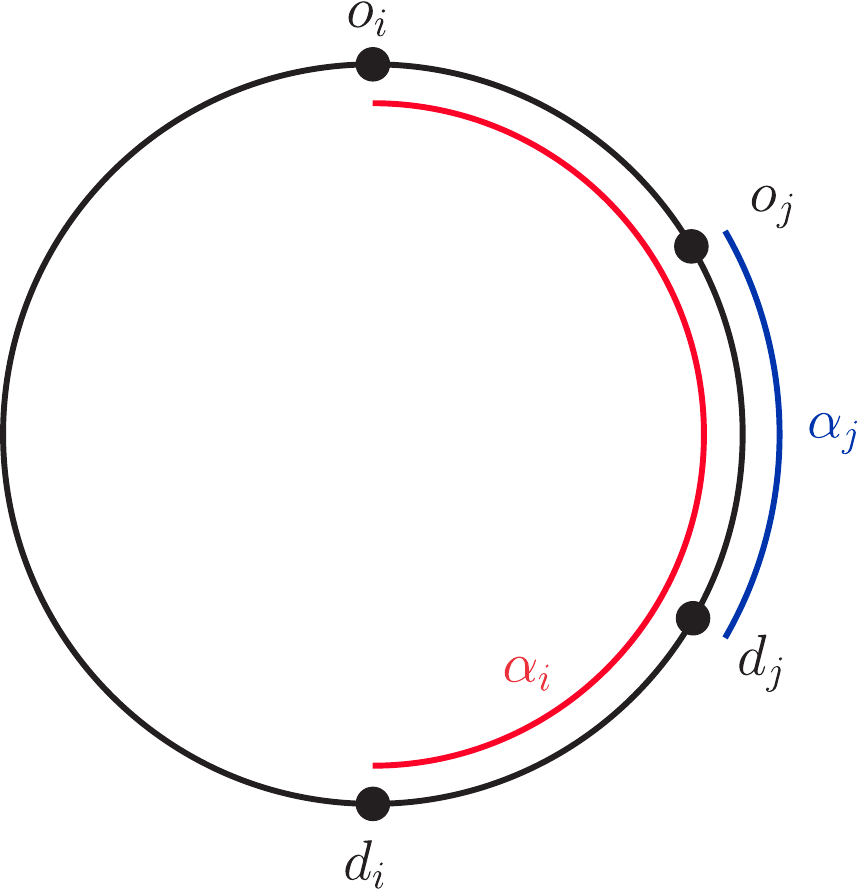}
\caption{The schematic diagram of $\alpha_j\subset\alpha_i$}\label{fig:j-sub-i}
\end{center}
\end{figure}

Swap $d_i$ and $d_j$. Define new paths $\bar{\alpha}_i = P_1 \cup P_2$, $\bar{\beta}_i = P_3 \cup P_4$, $\bar{\alpha}_j = P_2 \cup P_3$, and $\bar{\beta}_j = P_1 \cup P_4$. For $k\in\{i,j\}$, type-$k$ travelers choose $\bar{\alpha}_k$ before and $\bar{\beta}_k$ after the information expansion. We can observe that the traffic flow on each edge of the cycle remains unchanged after the swap, indicating that we still have an IBP instance. Next, we will prove that with each iteration of the above operation, the value of $|M|$ strictly decreases.

Let $M'$ be the set obtained after performing one swap operation on $M$. We firstly have that $\{i,j\}\in M$ and $\{i,j\}\not\in M'$. Consider any $\{k,l\}\in M'$ and $\{k,l\}\not\in M$:
\begin{itemize}
\item
If $\{k,l\}\cap\{i,j\}=\emptyset$, since $\{k,l\}\in M'$, then $\{k,l\}\in M$, which is a contradiction.
\item
If $k\neq i,j$ and $l=i$, since $\{k,i\}\in M'$ and $\{k,i\}\not\in M$, we have either $o_k\in P_3$ and $d_k\in P_4$, or $o_k\in P_4$ and $d_k\in P_3$. Thus, $\{k,j\}\in M$ and $\{k,j\}\not\in M'$.
\item
If $k\neq i,j$ and $l=j$, since $\{k,j\}\in M'$ and $\{k,j\}\not\in M$, we have either $o_k\in P_2$ and $d_k\in P_3$, or $o_k\in P_3$ and $d_k\in P_2$. Thus, $\{k,i\}\in M$ and $\{k,i\}\not\in M'$.
\end{itemize}
Thus, it follows that $|M'|<|M|$.

Therefore, after a finite number of steps, an IBP instance obtained corresponds to $M=\emptyset$, i.e., for any $i,j\in K$, $\alpha_i\setminus\alpha_j\neq\emptyset$. If there exist $i,j\in K$ such that $\alpha_i\cap\alpha_j=\emptyset$ or $\alpha_i\cup\alpha_j=C$, according to Lemma \ref{lem:union_cycle}, we can construct a new IBP instance by removing one type of travelers. Therefore, the final IBP instance satisfies: for any $i,j\in K$, $\alpha_i\setminus\alpha_j\neq\emptyset$, $\alpha_i\cap\alpha_j\neq\emptyset$ and $\alpha_i\cup\alpha_j\neq C$, which is actually an IBP instance of the completely symmetric structured cycle or the almost symmetric structured cycle.
\end{proof}

Finally, we are ready to prove the main theorem of this section, which establishes our characterization for IBP-free networks with multiple OD pairs (Theorem~\ref{thm:main_0}).

\begin{proof}[Proof of Theorem~\ref{thm:cycle_main}]
By Lemma \ref{lem:2od}, a cycle with two OD pairs is IBP-free.
Suppose that there exists an IBP instance in $C$ with $n~(\ge3)$ OD pairs. By Lemmas \ref{lem:r_to_int}, \ref{lem:r_to_1} and \ref{lem:gen_to_sym}, we obtain an IBP instance of the completely symmetric structured cycle or the almost symmetric structured cycle, which contradicts with Lemma \ref{lem:symmetric} or Lemma \ref{lem:almost_sym}.
\end{proof}
\end{document}